\theoremstyle{definition}
\newtheorem{thm}{Theorem}[section]
\newtheorem{lem}[thm]{Lemma}
\newtheorem{df}[thm]{Definition}
\newtheorem{dfprop}[thm]{Proposition/Definition}
\newtheorem{prop}[thm]{Proposition}
\newtheorem{cor}[thm]{Corollary}
\newtheorem{rem}[thm]{Remark}
\newcommand{\Q}{\mathbb{Q}}
\newcommand{\Z}{\mathbb{Z}}
\newcommand{\N}{\mathbb{N}}
\newcommand{\Y}{\mathbb{Y}}
\newcommand{\PP}{P^{\textrm{HL}}}
\newcommand{\QQ}{Q^{\textrm{HL}}}
\newcommand{\F}{F^{\textrm{HL}}}
\newcommand{\Fcal}{\mathcal{F}^{\textrm{HL}}}
\newcommand{\EE}{E^{\textrm{HL}}}
\newcommand{\HH}{H^{\textrm{HL}}}
\newcommand{\FF}{\mathbb{F}}
\newcommand{\inv}{\textrm{inv}}
\newcommand{\Tab}{\textrm{Tab}}
\definecolor{ashgrey}{rgb}{0.7, 0.75, 0.71}
\definecolor{battleshipgrey}{rgb}{0.52, 0.52, 0.51}
\numberwithin{equation}{section}
\begin{document}

\title{Interpolation Macdonald operators at infinity}

\author{Cesar Cuenca}

\date{}

\maketitle

\abstract{
We study the interpolation Macdonald functions, remarkable inhomogeneous generalizations of Macdonald functions, and a sequence $A^1, A^2, \ldots$ of commuting operators that are diagonalized by them.
Such a sequence of operators arises in the projective limit of finite families of commuting q-difference operators studied by Okounkov, Knop and Sahi.
The main theorem is an explicit formula for the operators $A^k$.
Our formula involves the family of Hall-Littlewood functions and a new family of inhomogeneous Hall-Littlewood functions, for which we give an explicit construction and identify as a degeneration of the interpolation Macdonald functions in the regime $q \rightarrow 0$.
This article is inspired by the recent papers of Nazarov-Sklyanin on Macdonald and Sekiguchi-Debiard operators, and our main theorem is an extension of their results.
}

\tableofcontents

\section{Introduction}

The theory of Macdonald functions and their associated operators has recently enjoyed considerable attention and found applications in various areas, see for example, \cite{AK, FHHSS, GZ, N, NS3, NS4} and references therein.
In the present work, we study an \textit{inhomogeneous} version of the Macdonald functions, called the \textit{interpolation Macdonald functions}.
More explicitly, for any partition $\lambda$, one can construct a symmetric function $I_{\lambda}(\cdot; q, t)$ with coefficients in the field $\Q(q, t)$ and such that its top-degree homogeneous term is the Macdonald function parametrized by $\lambda$.
The interpolation Macdonald function $I_{\lambda}(\cdot; q, t)$ is certain limit of the interpolation (or \textit{shifted}) Macdonald polynomials $I_{\lambda|N}(x_1, \ldots, x_N; q, t)$ considered previously by Knop \cite{K}, Okounkov \cite{Ok3, Ok2, Ok}, Olshanski \cite{Ol2, Ol}, Sahi \cite{S}, among others.

In \cite{Ok}, a finite hierarchy of $q$-difference operators diagonalizing the interpolation Macdonald polynomials $I_{\lambda|N}$ was exhibited; we prefer to use the notation from \cite{Ol}. Explicitly, $D_N^0 = 1, D_N^1, \ldots, D_N^N$ are linear operators, in the algebra of symmetric polynomials on $x_1, \ldots, x_N$ with coefficients in $\Q(q, t)$, given by
\begin{equation*}
\sum_{k=0}^N{D_N^kz^k} := \frac{1}{\prod_{1\leq i<j\leq N}{(x_i - x_j)}}\det_{1\leq i, j\leq N}\left[ x_j^{N - i - 1}\left\{ (x_j t^{1 - N} - 1)t^{N - i} z T_{q, x_j} + (x_j + z) \right\} \right],
\end{equation*}
where $T_{q, x_j}$ are the $q$-shift operators.
They diagonalize the interpolation Macdonald polynomials $I_{\lambda|N}(x_1, \ldots, x_N; q, t)$, $\ell(\lambda) \leq N$, in fact,
\begin{equation*}
D_N^k I_{\lambda|N}(x_1, \ldots, x_N; q, t) = e_k(q^{\lambda_1}, q^{\lambda_2}t^{-1}, \ldots, q^{\lambda_N}t^{1-N})
I_{\lambda|N}(x_1, \ldots, x_N; q, t),
\end{equation*}
where $e_k(y_1, \ldots, y_N)$ is an elementary symmetric polynomial.

It is natural to ask if there is some kind of limit operator $D^k = \lim_{N\rightarrow\infty}D_N^k$ acting on symmetric functions that diagonalizes the interpolation Macdonald functions $I_{\lambda}(\cdot; q, t)$, and to ask for an explicit formula for such operator.
The analogous problems for Macdonald and Sekiguchi-Debiard operators were answered by Nazarov-Sklyanin in \cite{NS1, NS3}.
Their work is our starting point; we wanted to know if there is an extension of their formulas to the inhomogeneous setting of interpolation Macdonald operators.
In this paper, using ideas from Nazarov-Sklyanin, we answer the question for interpolation Macdonald operators in the affirmative.
By considering a simple renormalization of the operators $D_N^k$, to be denoted $A_N^k$, we can take a natural projective limit $A^k := \lim_{N\rightarrow\infty}{A_N^k}$ that is an operator acting on the ring of symmetric functions.
Explicitly, these operators can be described nicely by the action of the generating function
\begin{equation*}
A_{\infty}(u; q, t) := 1 + \frac{A^1}{1 - u} + \frac{A^2}{(1 - u)(1 - ut)} + \frac{A^3}{(1 - u)(1 - ut)(1 - ut^2)} + \dots
\end{equation*}
on the interpolation Macdonald functions:
\begin{equation*}
A_{\infty}(u; q, t) I_{\mu}(\cdot; q, t) = \prod_{i=1}^{\infty}\left\{ \frac{q^{\mu_i} - t^{i-1}u}{1 - t^{i-1}u} \right\}\cdot I_{\mu}(\cdot; q, t),
\end{equation*}
for all partitions $\mu = (\mu_1 \geq \mu_2 \geq \dots)$. The definition of the operators $\{A_N^k\}_k$ in terms of $\{D_N^k\}_k$, as well as the formal definition of the operators $A^k$ as limits of the finite ones $A_N^k$ is made explicit in the text below, see Section $\ref{sec:operators}$.

The main result of this article is an explicit formula for the operators $\{A^k\}_{k\in\N}$.
The answer is given in terms of the dual Hall-Littlewood functions $\QQ_{\lambda}(\cdot; t)$ and a new family of inhomogeneous Hall-Littlewood functions, that we denote by $\F_{\lambda}(\cdot; t)$. The formula is the following
\begin{equation*}
A^k := \sum_{\ell(\lambda) = k}{t^{\lambda_1 + \lambda_2 + \dots}\F_{\lambda}(\cdot; t^{-1})(\QQ_{\lambda}(\cdot; t^{-1}))^*},
\end{equation*}
where $(\QQ_{\lambda}(\cdot; t^{-1}))^*$ is the adjoint of the operator of multiplication by $\QQ_{\lambda}(\cdot; t^{-1})$, with respect to the Macdonald inner product, see Theorem $\ref{thm:main}$ in the text.

The symmetric function $\F_{\lambda}(\cdot; t)$ is a limit of a sequence of (a new family of) inhomogeneous Hall-Littlewood polynomials $\F_{\lambda}(x_1, \ldots, x_N; t)$.
These polynomials are defined by a formula involving a sum over the symmetric group, and resemble very much the usual Hall-Littlewood polynomials.

We also study these polynomials in more depth and prove that they are degenerations of interpolation Macdonald polynomials:
\begin{equation*}
\F_{\lambda}(x_1, \ldots, x_N; t) = \lim_{q\rightarrow 0} I_{\lambda | N}(x_1, \ldots, x_N; 1/q, 1/t).
\end{equation*}
Thus one can derive further properties for our polynomials $\F_{\lambda}(x_1, \ldots, x_N; t)$, and for the corresponding functions $\F_{\lambda}(\cdot; t)$, as degenerations of known results for interpolation Macdonald polynomials.
As examples, we derive closed-form formulas for the generating functions of one-row and one-column inhomogeneous Hall-Littlewood functions. As a consequence of the closed expression for the generating function of one-row inhomogeneous Hall-Littlewood functions, we obtain the following vertex operator form for $A^1$:
\begin{equation*}
\begin{aligned}
A^1 =& \frac{t}{t - 1}\oint_{|z| \ll 1}
\frac{dz}{2\pi\sqrt{-1} z}
\exp\left(\sum_{n=1}^{\infty}  \frac{z^n(1 - t^{-n})}{n}p_n  \right)
\exp\left( \sum_{n=1}^{\infty}{ z^{-n}(q^n - 1)\frac{\partial}{\partial p_n} } \right)\\
&- \frac{t}{t - 1} + \frac{1 - q}{1 - t}\frac{\partial}{\partial p_1},
\end{aligned}
\end{equation*}
where $p_1, p_2, \ldots$ are the Newton power sums, also known as the \textit{collective variables} in the physics literature.

The Macdonald polynomial $M_{\lambda|N}(x_1, \ldots, x_N; q, t)$ is the top-degree homogeneous component of the interpolation Macdonald polynomial $I_{\lambda|N}(x_1, \ldots, x_N; q, t)$, i.e.,
\begin{equation}\label{eqn:intermacdonalds}
M_{\lambda|N}(x_1, \ldots, x_N; q, t) = \lim_{a\rightarrow +\infty}{\frac{I_{\lambda|N}(ax_1, \ldots, ax_N; q, t)}{a^{\deg I_{\lambda | N}}}}.
\end{equation}
As mentioned above, Nazarov-Sklyanin studied the same problem we are dealing with, but for Macdonald and Sekiguchi-Debiard operators, degenerations corresponding to Macdonald and Jack functions, see \cite{NS2, NS1, NS3, NS4, NS5}.
From the relation $(\ref{eqn:intermacdonalds})$ between the Macdonald and interpolation Macdonald polynomials, we can deduce the main theorem of \cite{NS3} (as well as its degeneration in \cite{NS1}) from Theorem $\ref{thm:main}$.
At the combinatorial level, our main theorem is a \textit{refined Cauchy identity} for interpolation Macdonald functions; the arguments to prove this combinatorial identity differ from those in the articles of Nazarov-Sklyanin.
Lastly, let us mention that the operators considered in \cite{NS2, NS4, NS5}, form a \textit{different} hierarchy of operators than the ones in papers \cite{NS1, NS3}, and they admit an explicit formula via a Lax matrix. It would be interesting to see if one can apply similar techniques to produce such a different hierarchy of operators for interpolation Macdonald functions.

We end this introduction with the organization of this article. In Section $\ref{sec:symmetricfns}$, we recall some basics of symmetric functions from \cite{M}, and some facts about interpolation functions from \cite{Ol}. The new material is Subsection $\ref{sec:inhHL}$, where we construct an inhomogeneous analogue to the Hall-Littlewood function. Next, in Section $\ref{sec:operators}$, we discuss the operators of Okounkov that diagonalize the interpolation Macdonald polynomials, define the limit of such operator and state the main theorem.
The proof of the main theorem is based on the \textit{symbol} of our operators with respect to the Macdonald inner product, and it reduces to a combinatorial identity, like in \cite{NS1, NS3}. However, our proof of the resulting identity differs from these papers, and is carried out in Section $\ref{sec:mainproof}$. As a complement to the main result, in Section $\ref{sec:combinatorialprops}$ we study the inhomogeneous Hall-Littlewood polynomials; our second main result states that they are limits of interpolation Macdonald polynomials as $q\rightarrow 0$. As an application, we also find a vertex operator representation of the first operator $A^1$ of the hierarchy.

\subsection*{Acknowledgements}

The author is grateful to Alexei Borodin, and especially to Grigori Olshanski for helpful conversations.
In particular, Grigori Olshanski conjectured Theorem $\ref{cor:equalityHLs}$; sections $\ref{sec:Hall1}$ and $\ref{sec:Hall2}$ of the text are largely based on my discussions with him.

\section{Symmetric Functions}\label{sec:symmetricfns}

\subsection{Preliminaries}

We assume that the reader is familiar with the language of symmetric polynomials and symmetric functions, as in \cite[Ch. I]{M}.
We recall only a few notions and set our terminology.

For any $N\in\N$, let $\Lambda_{N, \FF}$ be the algebra of symmetric polynomials on $N$ variables $x_1, \ldots, x_N$ and with coefficients in the field $\FF$ (also $\Lambda_{0, \FF} := \FF$).
In what follows we only need $\FF = \Q$, $\Q(t)$, or $\Q(q, t)$, for two formal parameters $q, t$.
The substitution $x_N = 0$ gives a homomorphism $\pi_N: \Lambda_{N, \FF} \rightarrow \Lambda_{N-1, \FF}$ for any $N\geq 1$.
The projective limit of the chain $\{\Lambda_{N, \FF}, \pi_N\}$, in the category of graded algebras, is denoted $\Lambda_{\FF}$ and called the algebra of symmetric functions.
Equivalently, $\Lambda_{\FF}$ is the polynomial ring on infinitely many variables $p_1, p_2, \ldots$ that are identified with the Newton power sums.

For any $N\in\N$, the set of partitions $\lambda$ of length $\ell(\lambda) \leq N$ is denoted $\Y(N)$.
We also let $\Y(0)$ be the singleton containing only the empty partition, and $\Y := \cup_{N\geq 0}{\Y(N)}$ the set of all partitions.
All familiar bases of $\Lambda_{N, \FF}$ are parametrized by $\Y(N)$, for example the set $\{m_{\lambda}(x_1, \ldots, x_N) : \lambda\in\Y(N)\}$ of monomial symmetric polynomials,
\begin{equation}\label{eqn:monomial}
m_{\lambda}(x_1, \ldots, x_N) := \sum_{1\leq i_1 < \ldots < i_k \leq N}\sum_{\sigma\in S_{k}} c_{\lambda}^{-1}x_{i_{\sigma(1)}}^{\lambda_1}\cdots x_{i_{\sigma(k)}}^{\lambda_k},
\end{equation}
is a basis of $\Lambda_{N, \Q}$. In the display $(\ref{eqn:monomial})$, we used $k$ instead of $\ell(\lambda)$, and $c_{\lambda} := m_1(\lambda)!\cdot m_2(\lambda)!\cdots$, where $m_i = m_{i}(\lambda)$ denotes the multiplicity of $i$ in the partition $\lambda$. With this notation, we usually write $\lambda = (1^{m_1}2^{m_2}\cdots)$.

The monomial symmetric polynomials are stable: $m_{\lambda}(x_1, \ldots, x_N, 0) = m_{\lambda}(x_1, \ldots, x_N)$.
This property leads us to consider the monomial symmetric function $m_{\lambda} = m_{\lambda}(x_1, x_2, \ldots)$ in $\Lambda_{\Q}$.
Clearly $\{m_{\lambda} : \lambda\in\Y\}$ is a basis of $\Lambda_{\Q}$.
A different basis of $\Lambda_{\Q}$ is given by $\{p_{\mu} := \prod_{i\geq 1}{p_{\mu_i}} \ \forall \mu\in\Y\}$.
In the next subsections, we recall other bases of $\Lambda_{\FF}$, for $\FF = \Q(t)$ and $\Q(q, t)$, including those given by Hall-Littlewood and Macdonald functions.

\subsection{Hall-Littlewood functions}

Let $N\in\N$, $\lambda\in\Y(N)$.
The Hall-Littlewood (to be abbreviated HL henceforth) polynomial $\PP_{\lambda}(x_1, \ldots, x_N; t) \in \Lambda_{N, \Q(t)}$ is defined by, see \cite[Ch. III]{M},
\begin{equation}\label{eqn:HL}
\PP_{\lambda}(x_1, \dots, x_N; t) := \prod_{i \geq 0}\prod_{j = 1}^{m_i}{\frac{1 - t}{1 - t^j}}
\sum_{w\in S_N}{w\left\{ \prod_{i=1}^N{x_i^{\lambda_i}}  \prod_{1\leq i < j \leq N}{\frac{x_i - tx_j}{x_i - x_j}} \right\}},
\end{equation}
where $wf(x_1, \ldots, x_N) = f(x_{w(1)}, \ldots, x_{w(N)})$ for any permutation $w\in S_N$ and any function $f(x_1, \ldots, x_N)$, $\lambda = (1^{m_1}2^{m_2}\dots)$, and $m_0 = m_0(\lambda) = N - \ell(\lambda)$. If $\ell(\lambda) > N$, we set $\PP_{\lambda}(x_1, \ldots, x_N; t) := 0$.

It is known that $\PP_{\lambda}(x_1, \ldots, x_N; t)$ is a polynomial in the variables $x_1, \ldots, x_N$, it has degree $|\lambda| := \sum_i{\lambda_i}$, and has coefficients in $\Z[t]$.
It is also known that $\PP_{\emptyset}(x_1, \ldots, x_N; t) = 1$, which implies
\begin{equation}\label{eqn:combinatorial}
\sum_{w\in S_N}{w\left\{ \prod_{1\leq i<j\leq N} \frac{x_i - tx_j}{x_i - x_j} \right\}} = \prod_{i=1}^N{\frac{1 - t^i}{1 - t}}.
\end{equation}
The HL polynomials also have the stability property
\begin{equation}\label{eqn:Hallstability}
\PP_{\lambda}(x_1, \ldots, x_N, 0; t) =
	\begin{cases}
            \PP_{\lambda}(x_1, \ldots, x_N; t), & \text{ if } \lambda\in\Y(N),\\
            0, & \text{ if } \lambda\notin\Y(N).
        \end{cases}
\end{equation}
This stability allows one to define the HL function $\PP_{\lambda}(\cdot; t) = \PP_{\lambda}(x_1, x_2, \ldots; t) \in \Lambda_{\Q(t)}$.
The set $\{\PP_{\lambda}(\cdot; t) : \lambda\in\Y\}$ is a basis of $\Lambda_{\Q(t)}$.
We shall also need the dual basis given by
\begin{equation}\label{eqn:dualbasis}
\QQ_{\lambda}(\cdot; t) := b_{\lambda}(t)\PP_{\lambda}(\cdot; t), \ \forall \lambda\in\Y, \ \textrm{ where } b_{\lambda}(t) := \prod_{i\geq 1}\prod_{j=1}^{m_i}{(1 - t^j)}.
\end{equation}
The basis $\{\QQ_{\lambda}(\cdot; t) : \lambda\in\Y \}$ is dual to $\{\PP_{\lambda}(\cdot; t) : \lambda\in\Y\}$ with respect to the inner product in $(\ref{eqn:macdonaldinner})$ with $q = 0$; we shall not make use of this fact.
This duality is equivalent to the following combinatorial identity, known as the \textit{Cauchy identity for HL polynomials}:
\begin{equation}\label{eqn:cauchyHL}
\sum_{\lambda \in \Y}{P_{\lambda}(x_1, x_2, \ldots; t)Q_{\lambda}(y_1, y_2, \ldots; t)} = \prod_{i \geq 1}\prod_{j \geq 1}{ \frac{1 - tx_iy_j}{1 - x_iy_j} }.
\end{equation}

\subsection{Macdonald functions}

The Macdonald polynomials $M_{\lambda|N}(x_1, \ldots, x_N; q, t) \in \Lambda_{N, \Q(q, t)}$, $\lambda\in\Y(N)$, are a generalization of HL polynomials. To define them, introduce the \textit{Macdonald difference operators}
\begin{equation}\label{eqn:macdonaldops}
H_N^0 := 1, \hspace{.2in} H_N^r := t^{{r \choose 2}}\sum_{\substack{ I \subset \{1, \ldots, N\} \\ |I| = r }}
\prod_{\substack{i\in I \\ j\notin I}}{\frac{tx_i - x_j}{x_i - x_j}}
\prod_{i\in I}{T_{q, x_i}} \ \forall 1\leq r\leq N.
\end{equation}
For any $\lambda\in\Y(N)$, the Macdonald polynomial $M_{\lambda|N}(x_1, \ldots, x_N; q, t)$ is the unique element of $\Lambda_{N, \Q(q, t)}$ of the form $m_{\lambda}(x_1, \ldots, x_N) + \textrm{ terms }c_{\mu}m_{\mu}(x_1, \ldots, x_N)$, where $\mu$ ranges over partitions in $\Y(N)$ with $|\mu| = |\lambda|$ and $\mu < \lambda$ in the dominance order, and moreover
\begin{equation*}
H_N^r M_{\lambda|N}(x_1, \ldots, x_N; q, t) = e_r(q^{\lambda_1}t^{N-1}, \ldots, q^{\lambda_N})\cdot M_{\lambda|N}(x_1, \ldots, x_N; q, t) \ \forall r = 0, 1, \ldots, N,
\end{equation*}
where $e_r(y_1, \ldots, y_N) = \sum_{1\leq i_1 < \ldots < i_r \leq N}{y_{i_1}\cdots y_{i_r}}$ is an elementary symmetric polynomial.

For each $N\in\N$, the set $\{M_{\lambda|N}(x_1, \ldots, x_N; q, t) : \lambda\in\Y(N)\}$ is a basis of $\Lambda_{N, \Q(q, t)}$.
The Macdonald polynomials are stable, i.e., $M_{\lambda|N+1}(x_1, \ldots, x_N, 0; q, t) = M_{\lambda|N}(x_1, \ldots, x_N; q, t)$, which allows us to define the Macdonald functions $M_{\lambda}(\cdot; q, t)\in\Lambda_{\Q(q, t)}$.
It follows that $\{M_{\lambda}(\cdot; q, t) : \lambda\in\Y\}$ is a basis of $\Lambda_{\Q(q, t)}$; it is known that this basis is orthogonal with respect to the inner product $(\cdot, \cdot)_{q, t}$, defined by declaring
\begin{equation}\label{eqn:macdonaldinner}
(p_{\mu}, p_{\nu})_{q, t} := \delta_{\mu, \nu}\prod_{i\geq 1}{(i^{m_i(\mu)}m_i(\mu)!)}\cdot\prod_{i=1}^{\ell(\mu)}{\frac{1 - q^{\mu_i}}{1 - t^{\mu_i}}} \ \forall \mu, \nu\in\Y.
\end{equation}
Macdonald functions generalize HL functions, since $\Lambda_{\Q(q, t)} \xrightarrow{q = 0} \Lambda_{\Q(t)}$ maps $M_{\lambda}(\cdot; q, t)$ to $\PP_{\lambda}(\cdot; t)$.

\subsection{Interpolation Macdonald functions}\label{sec:interpolationpolys}

We define the interpolation Macdonald functions in the notation of Olshanski \cite{Ol2, Ol}.
They also appear in papers by Knop \cite{K}, Okounkov \cite{Ok3, Ok, Ok2}, and Sahi \cite{S}\footnote{Actually, only the interpolation Macdonald \textit{polynomials} appear in the papers \cite{K, Ok3, Ok, Ok2, S}. It seems like Olshanski was the first who considered their lifts to the ring of symmetric functions. However, Sergeev and Veselov \cite{SV}, as well as Rains \cite{R}, have previously studied lifts of \textit{Jack-Laurent polynomials} and \textit{$BC_N$-symmetric polynomials}, respectively}.
Unlike the functions from previous subsections, the interpolation Macdonald functions are inhomogeneous.

For any $\lambda\in\Y(N)$, the \textit{interpolation Macdonald polynomial} $I_{\lambda|N}(x_1, \ldots, x_N; q, t)$ is the unique element of $\Lambda_{N, \Q(q, t)}$ satisfying the following conditions:
\begin{enumerate}
	\item $\deg I_{\lambda|N}(x_1, \ldots, x_N; q, t) = |\lambda|$,
	\item $I_{\lambda|N}(q^{-\mu_1}, q^{-\mu_2}t, \ldots, q^{-\mu_N}t^{N-1}; q, t) = 0$ for all $\mu \neq \lambda$ with $|\mu| \leq |\lambda|$,
	\item $I_{\lambda|N}(q^{-\lambda_1}, q^{-\lambda_2}t, \ldots, q^{-\lambda_N}t^{N-1}; q, t) \neq 0$,
	\item $I_{\lambda|N}(x_1, \ldots, x_N; q, t) = m_{\lambda}(x_1, \cdots, x_N) + \cdots$,
\end{enumerate}
where the dots in the last condition stand for a linear combination of monomial symmetric polynomials $m_{\mu}(x_1, \cdots, x_N)$, with either $|\mu| < |\lambda|$, or $|\mu| = |\lambda|$ and $\mu \leq \lambda$ in the dominance order of partitions.

Items 3 and 4 of the definition can be replaced by
\begin{equation}\label{eqn:alternative}
I_{\lambda | N}(q^{-\lambda_1}, q^{-\lambda_2}t, \ldots, q^{-\lambda_N}t^{N-1}; q, t) = C(\lambda; q, t),
\end{equation}
where the normalization constant $C(\lambda; q, t)$ is
\begin{equation}\label{Hexp}
C(\lambda; q, t) := q^{-2n(\lambda') - |\lambda|}t^{n(\lambda)}\prod_{s\in\lambda}{(1 - q^{a(s)+1}t^{l(s)})}.
\end{equation}
In formula $(\ref{Hexp})$, we need to recall some notation: for any partition $\kappa = (\kappa_1, \kappa_2, \dots)$, we denote
\begin{equation*}
n(\kappa) := \sum_{i \geq 1}{(i-1)\kappa_i} = \sum_{j \geq 1}{\kappa_j' \choose 2},
\end{equation*}
and for any square in the Young diagram $s = (i, j) \in \kappa$, we let
\begin{gather*}
a(s) := \lambda_i - j, \hspace{.2in} l(s) := \lambda_j' - i,\\
a'(s) := j-1, \hspace{.2in} l'(s) = i - 1,
\end{gather*}
be the arm length, leg length, coarm length and coleg length of $s = (i, j)$, respectively.

It turns out that the top-degree homogeneous component of $I_{\lambda|N}(x_1, \ldots, x_N; q, t)$ is the Macdonald polynomial $M_{\lambda|N}(x_1, \ldots, x_N; q, t)$.
As a consequence, $\{ I_{\lambda|N}(x_1, \ldots, x_N; q, t) : \lambda\in\Y(N) \}$ is a basis of $\Lambda_{N, \Q(q, t)}$.

The relation between Okounkov's notation $P_{\lambda}^*(x_1, \ldots, x_N; q, t)$ and ours is
\begin{equation*}
I_{\lambda | N}(x_1, \ldots, x_N; q, t) = P_{\lambda}^*(x_1, x_2/t, \ldots, x_N/t^{N-1}; 1/q, 1/t).
\end{equation*}
Thus the translation of \cite[(1.9)]{Ok3} to our notation is the special value
\begin{equation*}
I_{\mu | N}(a, at, \ldots, at^{N-1}; q, t) = t^{n(\mu)}\cdot\prod_{s\in\mu}\frac{(1 - q^{a'(s)}t^{N-l'(s)})(a - q^{-a'(s)}t^{l'(s)})}{(1 - q^{a(s)}t^{l(s)+1})}.
\end{equation*}
In particular, by using
\begin{equation*}
\sum_{s\in\mu}{a'(s)} = n(\mu'), \hspace{.1in}\sum_{s\in\mu}{l'(s)} = n(\mu),
\end{equation*}
we obtain
\begin{equation}\label{evalzeros}
I_{\mu | N}(0^N; q, t) = I_{\mu | N}(\underbrace{0, \ldots, 0}_{N \textrm{ zeroes}}; q, t) = (-1)^{|\mu|} q^{-n(\mu')} t^{2n(\mu)}\cdot\prod_{s\in\mu}\frac{1 - q^{a'(s)}t^{N-l'(s)}}{1 - q^{a(s)}t^{l(s)+1}}.
\end{equation}

In the spirit of the classical theory of symmetric functions, we want to define an element of $\Lambda_{\Q(q, t)}$ that is a limit of the interpolation Macdonald polynomials $I_{\lambda|N}(x_1, \ldots, x_N; q, t)$, as $N$ tends to infinity. The suitable stability property is the following:
\begin{equation}\label{eqn:interpolstable}
I_{\lambda|N+1}(x_1, \ldots, x_N, t^N; q, t) = I_{\lambda|N}(x_1, \ldots, x_N; q, t).
\end{equation}

Now consider the chain of algebra homomorphisms
\begin{equation*}
\Q(q, t) \xleftarrow{x_1 = 1} \Lambda_{1, \Q(q, t)} \xleftarrow{x_2 = t} \Lambda_{2, \Q(q, t)} \leftarrow \cdots \leftarrow \Lambda_{N-1, \Q(q, t)} \xleftarrow{x_{N} = t^{N-1}} \Lambda_{N, \Q(q, t)} \xleftarrow{x_{N+1} = t^N}\cdots
\end{equation*}
and let $\Lambda_{\Q(q, t)}'$ be the projective limit in the category of filtered algebras.
Because of the stability property $(\ref{eqn:interpolstable})$, the sequence $\{I_{\lambda|N}(x_1, \ldots, x_N; q, t) : N \geq \ell(\lambda)\}$ defines an element of $\Lambda_{\Q(q, t)}'$, for each $\lambda\in\Y$.

A natural algebra isomorphism $\Lambda_{\Q(q, t)} \rightarrow \Lambda_{\Q(q, t)}'$ is defined by sending each Newton power sum $p_n$, $n\geq 1$, to some ``regularization'' of $\{p_n(x_1, \ldots, x_N) + t^{nN} + t^{n(N+1)} + \dots : N\geq 1\}$; formally, the map is the unique algebra homomorphism such that
\begin{equation*}
\Lambda_{\Q(q, t)} \ni p_n \mapsto \left\{p_n(x_1, \ldots, x_N) + \frac{t^{nN}}{1 - t^{n}} \right\} = \left\{x_1^n + \ldots + x_N^n + \frac{t^{nN}}{1 - t^{n}} \right\} \in \Lambda_{\Q(q, t)}'.
\end{equation*}
It is easily shown to be an isomorphism, see the proof of Proposition $\ref{df:Halls}$ for a similar statement.
The element of $\Lambda_{\Q(q, t)}$ that corresponds to $\{I_{\lambda|N}(x_1, \ldots, x_N; q, t) : N \geq \ell(\lambda)\} \in \Lambda_{\Q(q, t)}'$ is denoted by $I_{\lambda}(\cdot; q, t) = I_{\lambda}(x_1, x_2, \ldots; q, t)$ and called the \textit{interpolation Macdonald function parametrized by $\lambda\in\Y$}. The set $\{I_{\lambda}(\cdot; q, t) : \lambda\in\Y\}$ is an (inhomogeneous) basis of $\Lambda_{\Q(q, t)}$.

\subsection{Inhomogeneous Hall-Littlewood functions}\label{sec:inhHL}

To write down an expression for the limits of interpolation Macdonald operators, we need a new family of symmetric functions, which are a sort of \textit{inhomogeneous} version of HL functions. We study some of their properties, from a purely combinatorial point of view.

\begin{df}\label{def:inhHL}
Let $N\in\N$ and $\lambda\in\Y(N)$ a partition of length $0\leq \ell(\lambda) = k \leq N$.
Define the \textit{inhomogeneous Hall-Littlewood polynomial} $\F_{\lambda}(x_1, \ldots, x_N; t) \in \Lambda_{N, \Q(t)}$ by
\begin{equation}\label{def:inh1}
\F_{\lambda}(x_1, \dots, x_N; t) := \prod_{i \geq 0}\prod_{j = 1}^{m_i(\lambda)}{\frac{1 - t}{1 - t^j}}
\sum_{w\in S_N}{w\left\{ \prod_{i=1}^k{\left((1 - t^{1-N}x_i^{-1})x_i^{\lambda_i}\right)  \prod_{1\leq i < j \leq N}{\frac{x_i - tx_j}{x_i - x_j}}} \right\}},
\end{equation}
where $m_0(\lambda) = N - \ell(\lambda)$.
If $\lambda\in\Y\setminus\Y(N)$, i.e. if $\ell(\lambda) > N$, then set
\begin{equation}\label{def:inh2}
\F_{\lambda}(x_1, \ldots, x_N; t) := 0.
\end{equation}
\end{df}

\begin{rem}\label{rem:specialcases}
When $\ell(\lambda) = 0$, i.e., when $\lambda = \emptyset$, then $\F_{\emptyset}(x_1, \ldots, x_N; t) = 1$.
When $\ell(\lambda) = N$, then $(\ref{eqn:HL})$ and $(\ref{def:inh1})$ imply $\F_{\lambda}(x_1, \ldots, x_N; t) = \prod_{i=1}^N{(1 - t^{1-N}x_i^{-1})}\PP_{\lambda}(x_1, \ldots, x_N; t)$.
\end{rem}

\begin{rem}\label{rem:borodin}
Our symmetric polynomials $\F_{\lambda}(x_1, \ldots, x_N; t)$ look very similar to the symmetric rational functions $G_{\lambda}(u_1, \ldots, u_N; \Xi, S)$ of Borodin-Petrov \cite{BP}, e.g., compare $(\ref{def:inhHL})$ with \cite[(4.24)]{BP}. However, our polynomials seem not to be a degeneration of the family of functions of Borodin-Petrov.
\end{rem}

\begin{rem}
Borodin defined a different kind of inhomogeneous HL polynomial as a degeneration of the homogeneous version of the symmetric rational function $G_{\lambda}(u_1, \ldots, u_N; \Xi, S)$; see \cite[Sec. 8.2]{B} for their definition.
\end{rem}

It is not immediately clear from the definition, but $\F_{\lambda}(x_1, \ldots, x_N; t)$ is a symmetric polynomial in the variables $x_1, \ldots, x_N$, of degree $|\lambda|$, and with coefficients in $\Z[t, t^{-1}]$.
This can be proved in the same fashion as one proves the analogous properties for HL polynomials, as defined in $(\ref{eqn:HL})$, see \cite[Ch. III, Sec. 1]{M} for details.

From $(\ref{eqn:HL})$ and $(\ref{def:inh1})$, we deduce that if $\ell(\lambda) \leq N$, the top-degree homogeneous component of $\F_{\lambda}(x_1, \ldots, x_N; t)$ is the HL polynomial $\PP_{\lambda}(x_1, \ldots, x_N; t)$.
It follows that $\{\F_{\lambda}(x_1, \ldots, x_N; t) : \lambda\in\Y(N)\}$ is a basis of $\Lambda_{N, \Q(t)}$.

\begin{lem}\label{lem:rewrite}
Let $N\in\N$ and $\lambda\in\Y(N)$ be a partition of length $0\leq \ell(\lambda) = k \leq N$. Then
\begin{equation*}
\F_{\lambda}(x_1, \dots, x_N; t) = \sum_{w\in S_N/S_N^{\lambda}}{w\left\{ \prod_{i=1}^k{\left((1 - t^{1-N}x_i^{-1})x_i^{\lambda_i}\right)  \prod_{\substack{1\leq i < j \leq N \\ \lambda_i > \lambda_j}}{\frac{x_i - tx_j}{x_i - x_j}}} \right\}},
\end{equation*}
where $S_{N}^{\lambda} := \{ w\in S_N : \lambda_{w(i)} = \lambda_i \ \forall i = 1, 2, \ldots, N\}$ and the sum above is over representatives $w$ of the coset space $S_N/S_N^{\lambda}$.
\end{lem}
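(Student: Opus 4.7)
The plan is to follow the classical Macdonald-style manipulation used to rewrite Hall-Littlewood polynomials as a sum over cosets, with only minor modifications to accommodate the extra inhomogeneous factor. Starting from the defining formula $(\ref{def:inh1})$, I would decompose the sum over $S_N$ as an outer sum over left coset representatives $w \in S_N/S_N^{\lambda}$ with an inner sum over $\sigma \in S_N^{\lambda}$. Write the summand as $F = F_1 \cdot A \cdot B$ where
$$F_1 := \prod_{i=1}^{k}(1 - t^{1-N}x_i^{-1})x_i^{\lambda_i}, \quad A := \prod_{\substack{i<j \\ \lambda_i > \lambda_j}}\frac{x_i - tx_j}{x_i - x_j}, \quad B := \prod_{\substack{i<j \\ \lambda_i = \lambda_j}}\frac{x_i - tx_j}{x_i - x_j}.$$
The strategy is to show that $F_1 \cdot A$ is $S_N^{\lambda}$-invariant, and that the inner sum $\sum_{\sigma \in S_N^{\lambda}} \sigma\{B\}$ is a scalar which exactly cancels the normalizing prefactor $\prod_{i\geq 0}\prod_{j=1}^{m_i(\lambda)}(1-t)/(1-t^j)$ in Definition $\ref{def:inhHL}$.

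For the invariance of $F_1$: any $\sigma \in S_N^{\lambda}$ preserves the set $\{1,\dots,k\}$ of positions with nonzero parts (since $\lambda_{\sigma(i)} = \lambda_i$), and the monomial $\prod_{i=1}^{k} x_i^{\lambda_i}$ is unchanged under such $\sigma$ for the same reason; the additional factor $\prod_{i=1}^k(1-t^{1-N}x_i^{-1})$ is symmetric in $x_1,\dots,x_k$ as an unordered product. For the invariance of $A$: because $\lambda$ is weakly decreasing, a pair $(i,j)$ with $i<j$ and $\lambda_i>\lambda_j$ is precisely a pair in two distinct blocks of equal parts, with the earlier block occupying smaller indices. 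Since $\sigma$ acts within each block, it sends any such pair to another pair $(\sigma(i),\sigma(j))$ still satisfying $\sigma(i)<\sigma(j)$ and $\lambda_{\sigma(i)}>\lambda_{\sigma(j)}$, giving a bijection on the index set of $A$.

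It remains to compute the inner sum $\sum_{\sigma \in S_N^{\lambda}}\sigma\{B\}$. Since $S_N^{\lambda} \cong \prod_{v \geq 0} S_{m_v(\lambda)}$ decomposes according to the blocks of equal parts, and $B$ factors correspondingly as a product of local contributions $B_v$ over pairs within each block, applying the identity $(\ref{eqn:combinatorial})$ block-by-block gives
$$\sum_{\sigma \in S_N^{\lambda}} \sigma\{B\} = \prod_{v \geq 0} \prod_{j=1}^{m_v(\lambda)} \frac{1-t^j}{1-t},$$
which is exactly the reciprocal of the prefactor in $(\ref{def:inh1})$. Combining everything yields $\F_{\lambda}(x_1,\dots,x_N;t) = \sum_{w\in S_N/S_N^{\lambda}} w\{F_1 \cdot A\}$, which is the stated formula. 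The only technical point worth careful verification is the bijection argument for the $S_N^{\lambda}$-invariance of $A$, but this is exactly the same argument used for the standard HL case in \cite[Ch.~III, Sec.~1]{M}; the inhomogeneous factor $F_1$ causes no new difficulty because $S_N^{\lambda}$ preserves $\{1,\dots,k\}$.
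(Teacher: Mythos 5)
Your proof is correct and is essentially the paper's proof: the paper simply defers to the classical argument of \cite[Ch.~III, (1.5)]{M}, and your write-up is precisely that argument (coset decomposition, $S_N^{\lambda}$-invariance of the factor $F_1\cdot A$, and block-by-block application of the identity $(\ref{eqn:combinatorial})$ to cancel the normalizing prefactor), with the only new ingredient being the easy observation that $S_N^{\lambda}$ preserves $\{1,\dots,k\}$ so the inhomogeneous factor is also invariant.
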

\begin{proof}
The proof follows exactly the argument of \cite[Ch. III, (1.5)]{M}. Details are left to the reader.
\end{proof}

\begin{prop}\label{prop:coherence}
Let $N\in\N$, $N\geq 2$, and $\lambda\in\Y$. Then
\begin{equation}\label{eqn:coherence}
\F_{\lambda}(x_1, \ldots, x_{N-1}, t^{1-N}; t) = \F_{\lambda}(x_1, \ldots, x_{N-1}; t)
\end{equation}
\end{prop}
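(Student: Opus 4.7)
My plan is to prove the identity by invoking the coset-sum formulation of Lemma \ref{lem:rewrite} and doing a termwise analysis under the substitution $x_N = t^{1-N}$. I would first handle the boundary cases where $\ell(\lambda) \geq N$: if $\ell(\lambda) > N$, then both sides of \eqref{eqn:coherence} vanish by the definition \eqref{def:inh2}; if $\ell(\lambda) = N$, then the right-hand side is zero by \eqref{def:inh2} (since $\ell(\lambda) > N-1$), and the left-hand side vanishes under $x_N = t^{1-N}$ because the product formula in Remark \ref{rem:specialcases} contains the factor $(1 - t^{1-N} x_N^{-1})$.

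For the main case $k := \ell(\lambda) \leq N-1$, I would parametrize each coset in $S_N/S_N^{\lambda}$ by a pair $(A, \tau)$, where $A \subset \{1, \ldots, N\}$ is a $k$-element subset and $\tau: \{1, \ldots, k\} \to A$ is a bijection (modulo the natural equivalence induced by repeated parts of $\lambda$), choosing the coset representative $w$ with $w(i) = \tau(i)$ for $i \leq k$. Since $\lambda_i > \lambda_j$ holds automatically whenever $i \leq k < j$, applying $w$ to the expression in Lemma \ref{lem:rewrite} yields the summand
\begin{equation*}
T_{(A,\tau)} = \prod_{i=1}^{k} (1 - t^{1-N} x_{\tau(i)}^{-1}) x_{\tau(i)}^{\lambda_i} \cdot \prod_{\substack{1\leq i<j\leq k \\ \lambda_i > \lambda_j}} \frac{x_{\tau(i)} - tx_{\tau(j)}}{x_{\tau(i)} - x_{\tau(j)}} \cdot \prod_{i=1}^{k} \prod_{b \in \{1,\ldots,N\}\setminus A} \frac{x_{\tau(i)} - tx_b}{x_{\tau(i)} - x_b}.
\end{equation*}

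Next I would substitute $x_N = t^{1-N}$ and split the sum according to whether $N \in A$ or $N \notin A$. In the first case, one of the values $\tau(i_0) = N$, so the factor $(1 - t^{1-N} x_{\tau(i_0)}^{-1})$ becomes $1 - t^{1-N}\cdot t^{N-1} = 0$, and the summand vanishes. In the second case, setting $B = \{1, \ldots, N-1\} \setminus A$, the double product factors as a product over $b \in B$ times the leftover boundary factor $\prod_{i=1}^k \frac{x_{\tau(i)} - t^{2-N}}{x_{\tau(i)} - t^{1-N}}$. The key algebraic identity
\begin{equation*}
(1 - t^{1-N} x^{-1}) \cdot \frac{x - t^{2-N}}{x - t^{1-N}} = 1 - t^{2-N} x^{-1}
\end{equation*}
then telescopes each first-product factor $(1 - t^{1-N} x_{\tau(i)}^{-1})$ into $(1 - t^{2-N} x_{\tau(i)}^{-1})$, producing exactly the coset summand in the Lemma \ref{lem:rewrite} formula for $\F_\lambda(x_1, \ldots, x_{N-1}; t)$. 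Summing over $(A, \tau)$ with $A \subset \{1, \ldots, N-1\}$ then completes the proof.

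The main obstacle I expect is the careful justification that the $N \in A$ terms vanish: individual coset summands in Lemma \ref{lem:rewrite} are rational functions, so one must verify that no denominator $x_{\tau(i)} - x_{\tau(j)}$, $x_{\tau(i)} - x_b$, or $x_{\tau(i)}$ cancels the zero factor. This reduces to checking that $x_1, \ldots, x_{N-1}, t^{1-N}$ are pairwise distinct generically, which is routine given that we treat $x_1, \ldots, x_{N-1}$ as indeterminates and $t^{1-N}$ as an algebraically distinguished specialization.
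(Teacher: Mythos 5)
Your proposal is correct and follows essentially the same route as the paper's proof: both dispose of the cases $\ell(\lambda)\geq N$ via Definition \ref{def:inhHL} and Remark \ref{rem:specialcases}, and in the main case both apply Lemma \ref{lem:rewrite}, observe that the coset terms sending some $i\leq k$ to $N$ are killed by the factor $1-t^{1-N}x_N^{-1}$, and absorb the leftover boundary product $\prod_{i}\frac{x_{\tau(i)}-t^{2-N}}{x_{\tau(i)}-t^{1-N}}$ into the prefactors via the identity $(1-t^{1-N}x^{-1})\frac{x-t^{2-N}}{x-t^{1-N}}=1-t^{2-N}x^{-1}$. Your $(A,\tau)$ parametrization is just an explicit description of the cosets $S_N/S_N^{\lambda}$ used in the paper, so the argument is the same in substance.
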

\begin{proof}
If $\ell(\lambda) > N$, then both sides of the equality are zero, by definition $(\ref{def:inh2})$.

If $\ell(\lambda) = N$, then the right side of the equality is zero by $(\ref{def:inh2})$. On the other hand, since $\ell(\lambda) = N$, then the product over $i$ inside the brackets of $(\ref{def:inh1})$ is invariant under permutations of its variables by $w\in S_N$, so that product can be factored out of the sum. It follows that
\begin{equation*}
\F_{\lambda}(x_1, \dots, x_N; t) = \prod_{i \geq 0}\prod_{j = 1}^{m_i(\lambda)}{\frac{1 - t}{1 - t^j}}
\prod_{i=1}^N{\left((1 - t^{1-N}x_i^{-1})x_i^{\lambda_i}\right)}
\sum_{w\in S_N}{w\left\{  \prod_{1\leq i < j \leq N}{\frac{x_i - tx_j}{x_i - x_j}} \right\}}.
\end{equation*}
It is clear from the formula above that the factor $1 - t^{1-N}x_N^{-1}$ vanishes if $x_N = t^{1-N}$, and therefore $\F_{\lambda}(x_1, \ldots, x_{N-1}, t^{1-N}; t) = 0$.

Finally assume $0\leq \ell(\lambda) = k \leq N - 1$. In this case, we can apply Lemma $\ref{lem:rewrite}$ to both sides of $(\ref{eqn:coherence})$. When we apply it to the left side, we note that some terms in the sum over $w\in S_N/S_N^{\lambda}$ are zero.
In fact, those terms corresponding to $w\in S_N/S_N^{\lambda}$ such that $w^{-1}(N) \in \{1, 2, \ldots, k\}$ vanish because the factor $\prod_{i=1}^k{((1 - t^{1-N}x_{w(i)}^{-1})x_{w(i)}^{\lambda_i})}$ equals zero when $x_N = t^{1-N}$.

Thus the sum is actually over those $w\in S_N/S_N^{\lambda}$ with $w^{-1}(N)\in\{k+1, \ldots, N\}$.
Note that permutations of $\{k+1, \ldots, N\}$ belong to $S_N^{\lambda}$ since $\lambda_{k+1} = \ldots = \lambda_N = 0$.
This means that each coset representative $w\in S_N/S_N^{\lambda}$ can be chosen so that $w(N) = N$, and therefore the restriction $w|_{S_{N-1}}$ runs naturally over elements of $S_{N-1}/S_{N-1}^{\lambda}$, as $w$ runs over $S_N/S_N^{\lambda}$.
Finally observe that $w(N) = N$ implies
\begin{equation*}
\begin{gathered}
\left.\prod_{i=1}^k{\left((1 - t^{1-N}x_{w(i)}^{-1})x_{w(i)}^{\lambda_i}\right)}
\prod_{\substack{1\leq i < j \leq N \\ \lambda_i > \lambda_j}}{\frac{x_{w(i)} - tx_{w(j)}}{x_{w(i)} - x_{w(j)}}}\right|_{x_N = t^{1-N}}\\
= \prod_{i=1}^k{\left((1 - t^{1-N}x_{w(i)}^{-1})x_{w(i)}^{\lambda_i}\right)}  \prod_{\substack{1\leq i < j \leq N-1 \\ \lambda_i > \lambda_j}}{\frac{x_{w(i)} - tx_{w(j)}}{x_{w(i)} - x_{w(j)}}}
\prod_{i=1}^{k}{\frac{x_{w(i)} - t^{2-N}}{x_{w(i)} - t^{1-N}}}\\
= \prod_{i=1}^k{\left((1 - t^{2-N}x_{w(i)}^{-1})x_{w(i)}^{\lambda_i}\right)}\prod_{\substack{1\leq i < j \leq N-1 \\ \lambda_i > \lambda_j}}{\frac{x_{w(i)} - tx_{w(j)}}{x_{w(i)} - x_{w(j)}}}
\end{gathered}
\end{equation*}
and we end up with the formula of Lemma $\ref{lem:rewrite}$ for the right side of $(\ref{eqn:coherence})$.
\end{proof}

Using Proposition $\ref{prop:coherence}$, we construct for each $\lambda\in\Y$ an element of $\Lambda_{\Q(t)}$ that uniquely corresponds to the coherent sequence $\{ \F_{\lambda}(x_1, \ldots, x_N; t) : N \geq \ell(\lambda) \}$.
For any $N\in\N$, consider the map
\begin{equation}\label{eqn:mapsN}
\pi^N_{N-1}: \Lambda_{N, \Q(t)} \xrightarrow{x_N = t^{1-N}} \Lambda_{N-1, \Q(t)}
\end{equation}
given by the specialization $x_N = t^{1-N}$.
Also, for any $N\in\N$, consider the following (unital) algebra homomorphism specified by the action on the set of Newton power sums $\{p_m : m \geq 1\}$, which is a  generator set for $\Lambda_{\Q(t)}$:
\begin{equation}\label{eqn:mapsinfty}
\begin{gathered}
\pi^{\infty}_N : \Lambda_{\Q(t)} \longrightarrow \Lambda_{N, \Q(t)},\\
\pi^{\infty}_N(p_m) = p_m(x_1, \ldots, x_N) + \frac{t^{-mN}}{1 - t^{-m}}, \textrm{ for all } m \geq 1,
\end{gathered}
\end{equation}
where $p_m(x_1, \ldots, x_N) = x_1^m + \ldots + x_N^m$.
The expression $t^{-mN}/(1 - t^{-m})$ in $(\ref{eqn:mapsinfty})$ comes from the geometric sum $(t^{-N})^m + (t^{-N-1})^m + \ldots$, if we assume that $t > 1$.

\begin{dfprop}\label{df:Halls}
Let $\lambda\in\Y$ be arbitrary. There exists a unique $\F_{\lambda}(\cdot; t) \in \Lambda_{\Q(t)}$ such that $\pi^{\infty}_N \F_{\lambda}(\cdot; t) = \F_{\lambda}(x_1, \ldots, x_N; t)$ for all $N\in\N$.
We call such unique element the \textit{inhomogeneous Hall-Littlewood function} $\F_{\lambda}(\cdot; t)$ \textit{parametrized by} $\lambda$. The set $\{ \F_{\lambda}(\cdot; t) : \lambda\in\Y \}$ is a basis of $\Lambda_{\Q(t)}$.
\end{dfprop}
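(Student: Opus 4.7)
My plan is to first verify that the specialization maps $\pi^N_{N-1}$ and the regularization maps $\pi^{\infty}_N$ are compatible, so that $\{\pi^{\infty}_N\}$ assembles into an algebra homomorphism $\Phi$ from $\Lambda_{\Q(t)}$ into the projective limit (in the category of filtered algebras) of the system $\{\Lambda_{N, \Q(t)}, \pi^N_{N-1}\}$; then to show that $\Phi$ is a bijection; and finally to conclude by applying $\Phi^{-1}$ to the coherent sequence $\{\F_{\lambda}(x_1, \ldots, x_N; t)\}_{N \geq \ell(\lambda)}$ provided by Proposition $\ref{prop:coherence}$. For the compatibility, a direct computation on the generators $p_m$ of $\Lambda_{\Q(t)}$ suffices:
\begin{equation*}
\pi^N_{N-1}\pi^{\infty}_N(p_m) = p_m(x_1, \ldots, x_{N-1}) + t^{m(1-N)} + \frac{t^{-mN}}{1 - t^{-m}} = p_m(x_1, \ldots, x_{N-1}) + \frac{t^{-m(N-1)}}{1 - t^{-m}} = \pi^{\infty}_{N-1}(p_m).
\end{equation*}

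For injectivity of $\Phi$, I would use that any $f \in \Lambda_{\Q(t)}$ lies in $\Q(t)[p_1, \ldots, p_M]$ for some $M$. Choosing $N \geq M$, the map $\pi^{\infty}_N$ sends each $p_m$ with $m \leq M$ to $p_m(x_1,\ldots,x_N)$ plus a scalar, and since $p_1(x_1,\ldots,x_N), \ldots, p_N(x_1,\ldots,x_N)$ are algebraically independent in $\Lambda_{N,\Q(t)}$, these scalar shifts preserve algebraic independence, so $\pi^{\infty}_N$ is injective on $\Q(t)[p_1, \ldots, p_M]$; hence $\pi^{\infty}_N(f) = 0$ for $N$ large enough forces $f = 0$. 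For surjectivity, I would induct on the filtration degree: given a coherent sequence $\{f_N\}$ with $\deg f_N \leq d$ uniformly, I claim the top-degree homogeneous components $\{f_N^{(d)}\}$ satisfy $f_{N-1}^{(d)} = f_N^{(d)}(x_1, \ldots, x_{N-1}, 0)$. Indeed, expanding $f_N^{(d)}$ as a polynomial in $x_N$ with coefficients in $\Lambda_{N-1, \Q(t)}$, substituting $x_N = t^{1-N}$ and extracting the part of degree $d$ retains only the $x_N^0$-coefficient; meanwhile, lower-degree summands of $f_N$ cannot contribute to the degree-$d$ part. Thus $\{f_N^{(d)}\}$ stabilizes in the usual sense and determines a homogeneous $g \in \Lambda_{\Q(t)}$, and iterating on $\{f_N\} - \{\pi^{\infty}_N(g)\}$, which has strictly smaller uniform degree, produces a preimage of $\{f_N\}$ under $\Phi$.

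With $\Phi$ identified as an isomorphism, set $\F_{\lambda}(\cdot; t) := \Phi^{-1}(\{\F_{\lambda}(x_1, \ldots, x_N; t)\}_N)$; this yields existence and uniqueness. Since the top-degree homogeneous component of $\F_{\lambda}(x_1, \ldots, x_N; t)$ is $\PP_{\lambda}(x_1, \ldots, x_N; t)$, and the surjectivity construction above identifies top-degree components with the standard $(x_N = 0)$ projective system (for which $\PP_{\lambda}$ is stable), the top-degree component of $\F_{\lambda}(\cdot; t)$ is $\PP_{\lambda}(\cdot; t)$. Since $\{\PP_{\lambda}(\cdot; t) : \lambda \in \Y\}$ is a basis of $\Lambda_{\Q(t)}$, an upper-triangularity argument with respect to the grading (the $\PP_\mu$-expansion of $\F_\lambda(\cdot; t) - \PP_\lambda(\cdot; t)$ involves only $\mu$ with $|\mu| < |\lambda|$) then shows $\{\F_{\lambda}(\cdot; t) : \lambda \in \Y\}$ is itself a basis of $\Lambda_{\Q(t)}$. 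I expect the main obstacle to be the inductive surjectivity step for $\Phi$, since the projective system uses the inhomogeneous substitution $x_N = t^{1-N}$ rather than $x_N = 0$, and the leading-term compatibility requires a small but genuine verification.
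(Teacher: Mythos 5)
Your proposal is correct and follows essentially the same route as the paper: check the coherence of the maps $\pi^{\infty}_N$ on the power sums, show that $\Lambda_{\Q(t)}$ is isomorphic to the filtered projective limit of the $\Lambda_{N,\Q(t)}$, pull back the coherent sequence from Proposition \ref{prop:coherence}, and deduce the basis statement from the top-degree components being the $\PP_{\lambda}(\cdot;t)$. Your injectivity/surjectivity arguments are simply a more detailed unpacking of the paper's one-line observation that $\pi^{N+1}_N$ is a linear isomorphism on the subspaces of degree $\leq d$ once $N > d$.
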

\begin{proof}
The proof is similar to that of \cite[Prop. 2.8]{Ol}. We repeat the proof, with necessary modifications, for the reader's convenience.

Let $\Lambda_{\Q(t)}'$ be the projective limit of the chain $\{\Lambda_{N, \Q(t)}, \pi^{N+1}_N\}$ in the category of filtered algebras, and for each $N\in\N$, let $(\pi^{\infty}_N)': \Lambda_{\Q(t)}' \rightarrow \Lambda_{N, \Q(t)}$ be the projection.

Next, define a homomorphism $\pi: \Lambda_{\Q(t)} \rightarrow \Lambda_{\Q(t)}'$ specified by the action on the set of Newton power sums $\{p_n : n \geq 1\}$, as follows:
\begin{equation*}
\Lambda_{\Q(t)} \ni p_m \mapsto \left\{p_m(x_1, \ldots, x_N) + \frac{t^{-mN}}{1 - t^{-m}} \right\} = \left\{x_1^m + \ldots + x_N^m + \frac{t^{-mN}}{1 - t^{-m}} \right\} \in \Lambda_{\Q(t)}'.
\end{equation*}
Note that the sequence above determines an element of $\Lambda_{\Q(t)}'$ because of the coherence relation
\begin{equation*}
(x_1^m + \ldots + x_{N-1}^m + (t^{1-N})^m) + \frac{t^{-mN}}{1 - t^{-m}} = (x_1^m + \ldots + x_{N-1}^m) + \frac{t^{-m(N-1)}}{1 - t^{-m}}.
\end{equation*}

Observe that for any fixed $d\in\N$, the map $\pi^{N+1}_N$ induces a linear isomorphism between the subspaces of degree $\leq d$ in $\Lambda_{N, \Q(t)}$ and $\Lambda_{N-1, \Q(t)}$, provided that $N > d$.
If follows that $\pi: \Lambda_{\Q(t)} \rightarrow \Lambda_{\Q(t)}'$ is an algebra isomorphism.
Lastly note that $(\pi_N^{\infty})'\circ\pi : \Lambda_{\Q(t)} \rightarrow \Lambda_{N, \Q(t)}$ maps each $p_m$ to $p_m(x_1, \ldots, x_N) + \frac{t^{-mN}}{1 - t^{-m}}$, meaning that the composition coincides with the map $\pi^{\infty}_N$ in $(\ref{eqn:mapsinfty})$.

Due to the stability property of Proposition $\ref{prop:coherence}$, the sequence $\{F_{\lambda|N}(x_1, \ldots, x_N; t)\}_N$ defines an element of $\Lambda_{\Q(t)}'$.
Let $F_{\lambda}(\cdot; q, t)$ be the corresponding element of $\Lambda_{\Q(t)}$. From the discussion above, it is clear that it has the required property. The uniqueness is a consequence of the fact that $\pi: \Lambda_{\Q(t)} \rightarrow \Lambda_{\Q(t)}'$ is an isomorphism.

Finally, recall that for each $N\in\N$ the top homogeneous part of $\F_{\lambda}(x_1, \ldots, x_N; t)$ is the HL polynomial $\PP_{\lambda}(x_1, \ldots, x_N; t)$; by the construction it follows that the top-degree homogeneous component of $\F_{\lambda}(\cdot; t)$ is the HL function $\PP_{\lambda}(\cdot; t)$. Since $\{\PP_{\lambda}(\cdot; t) : \lambda\in\Y\}$ is a basis of $\Lambda_{\Q(t)}$, then so is the set $\{ \F_{\lambda}(\cdot; t) : \lambda\in\Y \}$, thus the last statement is proved.
\end{proof}

\begin{rem}\label{rem:other}
Instead of $\pi^N_{N-1}$ and $\pi^{\infty}_N$, consider the maps $\widetilde{\pi}^N_{N-1} : \Lambda_{N, \Q(t)} \rightarrow \Lambda_{N-1, \Q(t)}$, $x_N = t^{N-1}$, and $\widetilde{\pi}^{\infty}_N : \Lambda_{\Q(t)} \rightarrow \Lambda_{N, \Q(t)}$, $p_m \mapsto p_m(x_1, \ldots, x_N) + t^{mN}/(1 - t^m)$ for all $m \geq 1$.
A similar construction as that of Proposition/Definition $\ref{df:Halls}$ gives a new family of symmetric functions that we naturally denote by $\F_{\lambda}(\cdot; t^{-1})$.
\end{rem}

\section{Interpolation Macdonald operators and their limits}\label{sec:operators}

\subsection{Operators for interpolation Macdonald polynomials}

We recall the operators of Okounkov \cite{Ok} in the notation of \cite{Ol}.
The \textit{interpolation Macdonald operators} $D_N^1, D_N^2, \ldots, D_N^N$ on $\Lambda_{N, \Q(q, t)}$ are defined by the equations
\begin{equation}\label{eqn:Dop}
\begin{gathered}
D_N(z; q, t)  := 1 + \sum_{k=1}^N{D_N^kz^k},\\
D_N(z; q, t) := \frac{1}{V(x_1, \ldots, x_N)}\det_{1\leq i, j\leq N}\left[ x_j^{N - i - 1}\left\{ (x_j t^{1 - N} - 1)t^{N - i} z T_{q, x_j} + (x_j + z) \right\} \right],
\end{gathered}
\end{equation}
where $V(x_1, \ldots, x_N) := \prod_{1\leq i<j\leq N}{(x_i - x_j)}$ is the Vandermonde determinant, and $\{T_{q, x_j}\}_{1\leq j\leq N}$ are the $q$-shift operators, given by $(T_{q, x_j}f)(x_1, \ldots, x_N) := f(x_1, \ldots, x_{j-1}, qx_j, x_{j+1}, \ldots, x_N)$.
Observe that $D_N^1, \ldots, D_N^N$ depend on $q, t$, but we suppress them from the notation.
They diagonalize the interpolation Macdonald polynomials $\{I_{\mu|N}(x_1, \ldots, x_N; q, t) : \mu\in\Y(N)\}$; in fact,
\begin{equation}\label{eqn:diagonal}
D_N(z; q, t)I_{\mu|N}(x_1, \ldots, x_N; q, t) = \prod_{i=1}^N{(1 + q^{\mu_i}t^{1 - i}z)}\cdot I_{\mu|N}(x_1, \ldots, x_N; q, t) \ \forall \mu \in \Y(N).
\end{equation}
In particular, $\{D_N^k : 1\leq k\leq N\}$ is a pairwise commuting family of $q$-difference operators.
We prefer to consider a renormalization of these operators, namely $A_N^1, A_N^2, \ldots, A_N^N$, given by
\begin{equation*}
1 + \sum_{k=1}^N{\frac{A_N^k}{(u; t)_k}} = A_N(u; q, t) := \frac{D_N(-u^{-1}; q, t)(-u)^N t^{\frac{N(N-1)}{2}}}{(u; t)_N},
\end{equation*}
where $(u; t)_k := \prod_{i=0}^{k-1}{(1 - t^{i - 1}u)}$ is the usual Pochhammer symbol.
From $(\ref{eqn:Dop})$, we deduce
\begin{equation}\label{eqn:Aop}
A_N(u; q, t) = \frac{1}{V(x_1, \ldots, x_N)\cdot (u; t)_N}\circ \det_{1\leq i, j\leq N}\left[ x_j^{N - i - 1}\left\{ (x_j - t^{N-1})T_{q, x_j} + t^{i-1}(1 - x_j u) \right\} \right].
\end{equation}
They also diagonalize each $I_{\mu|N}(x_1, \ldots, x_N; q, t)$; in fact, $(\ref{eqn:diagonal})$ yields
\begin{equation}\label{eqn:diagonal2}
A_N(u; q, t)I_{\mu|N}(x_1, \ldots, x_N; q, t) = \prod_{i=1}^N{\left\{  \frac{q^{\mu_i} - t^{i-1}u}{1 - t^{i-1}u} \right\}} I_{\mu|N}(x_1, \ldots, x_N; q, t).
\end{equation}
As before, $\{A_N^k : 1\leq k\leq N\}$ is a pairwise commuting family of $q$-difference operators.
By treating $u$ as a complex variable and expanding the fractions $1/(1 - t^{i - 1}u)$ near $u = 0$, we obtain, in the right side of $(\ref{eqn:diagonal2})$, an element of $\Lambda_{N, \Q(q, t)}[[u]]$.
Since $\{I_{\mu|N}(x_1, \ldots, x_N; q, t) : \mu\in\Y(N)\}$ is a basis of $\Lambda_{N, \Q(q, t)}$, it follows that
\begin{equation*}
A_N(u; q, t) : \Lambda_{N, \Q(q, t)} \rightarrow \Lambda_{N, \Q(q, t)}[[u]]
\end{equation*}
is a well-defined operator.

\subsection{Operators for interpolation Macdonald functions}

We normalized $D_N(z; q, t)$ into $A_N(u; q, t)$ to obtain equation $(\ref{eqn:diagonal2})$.
The key property of this eigenrelation is that the factors entering the eigenvalue, namely $(q^{\mu_i} - t^{i-1}u)/(1 - t^{i-1}u)$, equal $1$ when $\mu_i = 0$. We can deduce, by using also that $\{I_{\mu|N}(x_1, \ldots, x_N; q, t) : \mu\in\Y(N)\}$ is a basis of $\Lambda_{N, \Q(q, t)}$, the following coherence property
\begin{equation}\label{eqn:coherenceops}
A_{N-1}(u; q, t)\pi^{N}_{N-1} = \pi^{N}_{N-1}A_{N}(u; q, t) : \Lambda_{N, \Q(q, t)} \rightarrow \Lambda_{N-1, \Q(q, t)}[[u]].
\end{equation}

\begin{lem}\label{lem:opinfty}
There exists a unique linear operator $A_{\infty}(u; q, t) : \Lambda_{\Q(q, t)} \rightarrow \Lambda_{\Q(q, t)}[[u]]$ such that
\begin{equation}\label{eqn:intertwining}
A_N(u; q, t)\pi^{\infty}_N = \pi^{\infty}_N A_{\infty}(u; q, t): \Lambda_{\Q(q, t)} \rightarrow \Lambda_{N, \Q(q, t)},
\end{equation}
for all $N \geq 1$. It is given by
\begin{equation}\label{eqn:defAinfty}
\begin{gathered}
A_{\infty}(u; q, t) : \Lambda_{\Q(q, t)} \rightarrow \Lambda_{\Q(q, t)}[[u]]\\
I_{\mu}(\cdot; q, t) \mapsto \prod_{i=1}^{\infty}\left\{ \frac{q^{\mu_i} - t^{i-1}u}{1 - t^{i-1}u} \right\}I_{\mu}(\cdot; q, t) \ \forall \mu\in\Y.
\end{gathered}
\end{equation}
\end{lem}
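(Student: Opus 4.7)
The plan is to construct $A_{\infty}(u; q, t)$ directly on the basis $\{I_{\mu}(\cdot; q, t): \mu\in\Y\}$ of $\Lambda_{\Q(q,t)}$ using formula $(\ref{eqn:defAinfty})$, verify the intertwining $(\ref{eqn:intertwining})$ on this basis, and then appeal to the projective-limit isomorphism established in the proof of Proposition/Definition $\ref{df:Halls}$ (the Macdonald analogue, built into Section $\ref{sec:interpolationpolys}$) for uniqueness. The first order of business is to check that $(\ref{eqn:defAinfty})$ actually defines an element of $\Lambda_{\Q(q,t)}[[u]]$: since $\mu_i = 0$ for $i > \ell(\mu)$, the factor $(q^{\mu_i} - t^{i-1}u)/(1 - t^{i-1}u)$ equals $1$ for all but finitely many $i$, so the infinite product is really the finite product $\prod_{i=1}^{\ell(\mu)} (q^{\mu_i} - t^{i-1}u)/(1 - t^{i-1}u)$, a rational function of $u$ with denominator equal to $1$ at $u=0$. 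Expanding near $u=0$ gives a well-defined element of $\Q(q,t)[[u]]$, and multiplying by $I_{\mu}(\cdot; q, t)$ gives a well-defined element of $\Lambda_{\Q(q,t)}[[u]]$. Linear extension produces the candidate operator $A_{\infty}(u; q, t)$.

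For the intertwining, I would check $(\ref{eqn:intertwining})$ on each basis element $I_{\mu}(\cdot; q, t)$. Fix $N \geq \ell(\mu)$. By construction of the interpolation Macdonald functions, $\pi^{\infty}_N I_{\mu}(\cdot; q, t) = I_{\mu|N}(x_1, \ldots, x_N; q, t)$; combined with the eigenrelation $(\ref{eqn:diagonal2})$ this gives $A_N(u; q, t) \pi^{\infty}_N I_{\mu}(\cdot; q, t) = \prod_{i=1}^{N}\{(q^{\mu_i} - t^{i-1}u)/(1 - t^{i-1}u)\} \cdot I_{\mu|N}$. Because the truncation at $i = N$ and at $i = \infty$ agree when $N \geq \ell(\mu)$, this equals $\pi^{\infty}_N A_{\infty}(u; q, t) I_{\mu}(\cdot; q, t)$, where $\pi^{\infty}_N$ is extended coefficient-wise in $u$. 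To handle $N < \ell(\mu)$, I apply the coherence relation $(\ref{eqn:coherenceops})$ iteratively: writing $\pi^{\infty}_N = \pi^{N+1}_N \circ \pi^{N+2}_{N+1} \circ \cdots \circ \pi^{N_0}_{N_0-1} \circ \pi^{\infty}_{N_0}$ for some $N_0 \geq \ell(\mu)$ and using $\pi^{M}_{M-1} A_M(u;q,t) = A_{M-1}(u;q,t) \pi^{M}_{M-1}$ at each step, the equality at level $N_0$ propagates down to level $N$.

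Uniqueness is handled by the separating property of the family $\{\pi^{\infty}_N\}_{N\geq 1}$. Suppose $B(u; q, t): \Lambda_{\Q(q,t)} \to \Lambda_{\Q(q,t)}[[u]]$ also satisfies $(\ref{eqn:intertwining})$. Then for every $f \in \Lambda_{\Q(q,t)}$ and every $N$, $\pi^{\infty}_N (B(u;q,t) - A_{\infty}(u;q,t))(f) = 0$ in $\Lambda_{N,\Q(q,t)}[[u]]$, i.e., coefficient by coefficient in $u$. Since the chain $\{\Lambda_{N,\Q(q,t)}, \pi^{N+1}_N\}$ has $\Lambda_{\Q(q,t)}$ as its projective limit in the appropriate filtered sense — exactly the Macdonald analogue of the isomorphism $\pi: \Lambda_{\Q(t)} \to \Lambda_{\Q(t)}'$ established in the proof of Proposition/Definition $\ref{df:Halls}$ — the collective kernel of the projections is trivial, so $B = A_{\infty}$ coefficient-wise. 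I expect the only mildly delicate step to be pinning down that the coefficient-wise extension of the projective-limit argument applies uniformly in the degree $u^k$; but since each coefficient of $A_{\infty}(u;q,t) f$ lies in a fixed finite-dimensional filtered piece (the eigenvalue expansion only shifts degrees by bounded amounts), this reduction to the scalar case is routine.
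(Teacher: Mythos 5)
Your proposal is correct and follows essentially the same route as the paper: define $A_{\infty}(u;q,t)$ on the basis $\{I_{\mu}(\cdot;q,t)\}$ by the eigenvalue formula (noting the product is effectively finite), verify the intertwining relation on each basis element via the eigenrelation $(\ref{eqn:diagonal2})$, and extend by linearity. The only difference is that you spell out two points the paper leaves implicit — the case $N<\ell(\mu)$ via the coherence relation $(\ref{eqn:coherenceops})$ and the uniqueness via the joint injectivity of the projections $\pi^{\infty}_N$ — both of which are handled correctly.
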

\begin{proof}

Observe that for any $\mu\in\Y$, the product in the display $(\ref{eqn:defAinfty})$ is finite; in fact, the only terms unequal to $1$ are those ranging from $i = 1$ to $i = \ell(\mu)$. Since $\{I_{\mu}(\cdot; q, t) : \mu\in\Y\}$ is a basis of $\Lambda_{\Q(q, t)}$, the definition above completely determines the operator. Moreover, $(\ref{eqn:defAinfty})$ and the definition of the interpolation Macdonald function $I_{\mu}(\cdot; q, t)$ easily imply
\begin{equation*}
A_N(u; q, t)\pi^{\infty}_N I_{\mu}(\cdot; q, t) = \pi^{\infty}_N A_{\infty}(u; q, t) I_{\mu}(\cdot; q, t) \ \forall \mu\in\Y.
\end{equation*}
Again, since $\{I_{\mu}(\cdot; q, t) : \mu\in\Y\}$ is a basis of $\Lambda_{\Q(q, t)}$, then the equality of operators $(\ref{eqn:intertwining})$ ensues.
\end{proof}

The following explicit formula for $A_{\infty}(u; q, t)$ is the main result of this paper.

\begin{thm}\label{thm:main}
We can write
\begin{equation}\label{eqn:expansion}
A_{\infty}(u; q, t) = 1 + \frac{A^1}{(u; t)_1} + \frac{A^2}{(u; t)_2} + \dots,
\end{equation}
where $A^1, A^2, \dots : \Lambda_{\Q(q, t)} \rightarrow \Lambda_{\Q(q, t)}$ are given by
\begin{equation}\label{eqn:kexplicit}
A^k = \sum_{\ell(\lambda) = k}{t^{\lambda_1 + \lambda_2 + \dots}\F_{\lambda}(\cdot; t^{-1})(\QQ_{\lambda}(\cdot; t^{-1}))^*}.
\end{equation}
In the formula above, $(Q_{\lambda}(\cdot; t^{-1}))^*$ is the adjoint of the operator $\Lambda_{\Q(q, t)} \rightarrow \Lambda_{\Q(q, t)}$ of multiplication by $Q_{\lambda}(\cdot; t^{-1})$ with respect to the Macdonald inner product $(\cdot, \cdot)_{q, t}$, and $F_{\lambda}(\cdot; t^{-1})$ is the operator of multiplication by the function defined in Remark $\ref{rem:other}$  (see also Proposition/Definition $\ref{df:Halls}$).
\end{thm}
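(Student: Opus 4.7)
The plan is to prove Theorem \ref{thm:main} by computing the symbol of $A_\infty(u; q, t)$ with respect to the Macdonald inner product $(\cdot, \cdot)_{q, t}$, following the strategy of \cite{NS1, NS3}. Both sides of the asserted identity are operators $\Lambda_{\Q(q,t)} \to \Lambda_{\Q(q,t)}[[u]]$, so it suffices to show that they agree as bilinear forms against arbitrary test functions. Equivalently, one applies both operators in the $x$-variable to the Macdonald reproducing kernel $\Pi_{q,t}(x;y)$ and compares. The advantage of the symbol method is that, by the reproducing property, the adjoint $(\QQ_\lambda(\cdot; t^{-1}))^*$ in $x$ converts into ordinary multiplication by $\QQ_\lambda(y; t^{-1})$ in an auxiliary set of variables $y$, removing the adjoint from the RHS entirely.

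For the LHS symbol, since $\{I_\mu(\cdot;q,t):\mu\in\Y\}$ is a basis of $\Lambda_{\Q(q,t)}$, one may expand $\Pi_{q,t}(x;y) = \sum_\mu I_\mu(x;q,t)\,I_\mu^\vee(y;q,t)$ in this basis with its dual $\{I_\mu^\vee\}$. The eigenvalue relation from Lemma \ref{lem:opinfty} then gives
\[
A_\infty(u;q,t)_x\,\Pi_{q,t}(x;y) = \sum_{\mu\in\Y}E_\mu(u)\,I_\mu(x;q,t)\,I_\mu^\vee(y;q,t), \qquad E_\mu(u) := \prod_{i\geq 1}\frac{q^{\mu_i}-t^{i-1}u}{1-t^{i-1}u}.
\]
For the RHS symbol, the adjoint-to-multiplication rule turns the ansatz (\ref{eqn:kexplicit}) into
\[
\Pi_{q,t}(x;y)\left(1 + \sum_{k\geq 1}\frac{1}{(u;t)_k}\sum_{\ell(\lambda)=k}t^{|\lambda|}\,\F_\lambda(x;t^{-1})\,\QQ_\lambda(y;t^{-1})\right).
\]
Equating the two expressions reduces Theorem \ref{thm:main} to a single combinatorial identity, namely a refined Cauchy identity for interpolation Macdonald functions.

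The main obstacle will be proving this refined Cauchy identity. My approach is to first reduce it to a finite-variable statement, for instance by setting all but finitely many of the $y$-variables equal to zero so that the sum over $\mu$ truncates to $\mu\in\Y(N)$, and then to attack the finite identity using the explicit determinantal form (\ref{eqn:Aop}) of $A_N(u;q,t)$ together with Okounkov's evaluation formula (\ref{evalzeros}). After dividing through by the HL Cauchy kernel (\ref{eqn:cauchyHL}) at parameter $t^{-1}$, the coefficient of $\QQ_\lambda(y;t^{-1})$ on the LHS should collapse, after careful reorganization of the determinant, to the $S_N$-symmetrization appearing in Definition \ref{def:inhHL} — thereby matching $t^{|\lambda|}(u;t)_{\ell(\lambda)}^{-1}\F_\lambda(x_1,\ldots,x_N;t^{-1})$. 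The hardest step is precisely this matching: extracting from the determinantal expansion exactly the sum $\sum_{w\in S_N}w\{\cdots\}$ of Definition \ref{def:inhHL}, with the correct inhomogeneous prefactor, including the factors $(1 - t^{N-1}x_i^{-1})$ that produce the inhomogeneity of $\F_\lambda$. Once the finite identity is established, passage to $N\to\infty$ is routine by virtue of Proposition/Definition \ref{df:Halls} and Remark \ref{rem:other}, which together supply the stability required to recover the infinite-variable refined Cauchy identity, and hence Theorem \ref{thm:main}.
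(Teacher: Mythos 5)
Your setup is the same as the paper's: apply both operators to the reproducing kernel $\Pi$, use the adjoint-to-multiplication lemma to eliminate $(\QQ_{\lambda}(\cdot;t^{-1}))^*$, and reduce the theorem to a refined Cauchy identity between $\{\F_{\lambda}\}$ and $\{\QQ_{\lambda}\}$. That reduction is sound. But the proposal stops exactly where the real work begins. The refined Cauchy identity (Proposition $\ref{prop:refined}$) is not a routine computation: you assert that after dividing by the HL Cauchy kernel the coefficient of $\QQ_{\lambda}(y;t^{-1})$ ``should collapse, after careful reorganization of the determinant,'' to the symmetrization of Definition $\ref{def:inhHL}$, but you give no mechanism for this collapse. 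The paper's argument requires several nontrivial steps you do not anticipate: expanding the determinant by multilinearity over column subsets $I$, recognizing $\det(A_I)/V(x_1,\ldots,x_N)$ as $t^{\binom{N-|I|}{2}}\prod_{i\in I, j\notin I}\frac{x_i-tx_j}{x_i-x_j}$, applying the HL Cauchy identity on the sub-alphabet $\{x_i : i\in I\}$, matching coefficients via a decomposition into $R$-restricted sub-sums indexed by tuples $R=(r_1,\ldots,r_k)$, and, crucially, the key combinatorial Lemma $\ref{lem:technical}$, which is itself proved by applying the Macdonald $q$-difference operators to $\prod_i(y_i-1)$. None of this is ``careful reorganization''; it is the substance of the proof, and without it (or a genuine substitute — note that the Nazarov--Sklyanin route via the Pieri rule is unavailable here, since no Pieri rule is known for $\F_{\lambda}$) the argument has a hole. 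The citation of Okounkov's evaluation $(\ref{evalzeros})$ is a red herring: it plays no role in this proof.

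There is also a problem with your finite-variable reduction. You propose to truncate by setting all but finitely many \emph{$y$-variables} to zero, but the object you need to produce on the right-hand side is the concrete polynomial $\F_{\lambda}(x_1,\ldots,x_N;t^{-1})$ of Definition $\ref{def:inhHL}$, and the inhomogeneous HL functions are \emph{not} stable under setting $x$-variables to zero — their stability (Proposition $\ref{prop:coherence}$) is under the specialization $x_N=t^{1-N}$ (respectively $x_N = t^{N-1}$ for the $t^{-1}$ version). The paper therefore projects in the $x$-variables using the maps $\pi^{\infty}_N$ of $(\ref{eqn:mapsinfty})$, which both intertwine $A_{\infty}$ with $A_N$ (Lemma $\ref{lem:opinfty}$) and send $\F_{\lambda}(\cdot;t^{-1})$ to $\F_{\lambda}(x_1,\ldots,x_N;t^{-1})$; this introduces an extra factor $\prod_{i}(t^{1-N}y_i;q)_{\infty}^{-1}$ into $\pi^{\infty}_N(\Pi)$ that must be tracked and cancelled. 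Your proposed $y$-truncation leaves the $x$-side as abstract symmetric functions and does not connect to the determinantal formula $(\ref{eqn:Aop})$, which lives in $N$ $x$-variables. So the reduction step, as stated, does not go through without being replaced by the $\pi^{\infty}_N$-projection argument.
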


\begin{rem}
From $(\ref{eqn:Dop})$ and \cite[(1.16)]{NS3}, the top-degree (of degree zero) of the interpolation Macdonald operator $D_N^k$ is the Macdonald $q$-difference operator $H_N^k$.
It follows that the degree zero part of $A^k$ diagonalizes the Macdonald functions and is given by
\begin{equation*}
A^k_{\textrm{Macdonald}} = \sum_{\ell(\lambda) = k}{t^{\lambda_1 + \lambda_2 + \dots}\PP_{\lambda}(\cdot; t^{-1})(\QQ_{\lambda}(\cdot; t^{-1}))^*},
\end{equation*}
because top homogeneous part of $\F_{\lambda}(\cdot; t^{-1})$ is $\PP_{\lambda}(\cdot; t^{-1})$.
The Macdonald operators at infinity, in \cite{NS3}, are expressed slightly differently.
Let us show how to obtain their formula from $(\ref{eqn:expansion})$, $(\ref{eqn:kexplicit})$.
Since the Macdonald function $M_{\lambda}(\cdot; q, t)$ is invariant under the simultaneous change of parameters $(q, t) \leftrightarrow (1/q, 1/t)$, the top-degree of the operator $A_{\infty}(u; 1/q, 1/t)$ also diagonalizes the Macdonald functions.
These are, in fact, the operators considered in \cite{NS3}.
From Theorem $\ref{thm:main}$, we can write the top-degree part of $A_{\infty}(u; 1/q, 1/t)$ as
\begin{equation*}
1 + \frac{\widetilde{A}_{\textrm{Macdonald}}^1}{(u; t^{-1})_1} + \frac{\widetilde{A}_{\textrm{Macdonald}}^2}{(u; t^{-1})_2} + \dots,
\end{equation*}
where
\begin{equation}\label{eqn:almostNS}
\widetilde{A}^k_{\textrm{Macdonald}} = \sum_{\ell(\lambda) = k}{t^{-\lambda_1 - \lambda_2 - \dots}\PP_{\lambda}(\cdot; t)(\QQ_{\lambda}(\cdot; t))^{-*}},
\end{equation}
and $f^{-*}$ is the adjoint of multiplication by $f\in\Lambda_{\Q(q, t)}$ with respect to the inner product $(\cdot, \cdot)_{1/q, 1/t}$ determined by
\begin{equation*}
(p_{\lambda}, p_{\mu})_{1/q, 1/t} = \delta_{\mu, \nu}\prod_{i\geq 1}{(i^{m_i(\mu)}m_i(\mu)!)}\cdot\prod_{i=1}^{\ell(\mu)}{\frac{1 - (1/q)^{\mu_i}}{1 - (1/t)^{\mu_i}}} = (tq^{-1})^{\mu_1 + \mu_2 + \dots}(p_{\lambda}, p_{\mu})_{q, t} \ \forall \lambda, \mu \in \Y.
\end{equation*}
It follows that $p_n^{-*} = (t/q)^n p_n^*$ for all $n\geq 1$ and, more generally, $f^{-*} = (t/q)^{\deg f} f^*$ for all homogeneous $f\in\Lambda_{\Q(q, t)}$. It follows that $(\ref{eqn:almostNS})$ equals
\begin{equation*}
\widetilde{A}^k_{\textrm{Macdonald}} = \sum_{\ell(\lambda) = k}{q^{-\lambda_1 - \lambda_2 - \dots}\PP_{\lambda}(\cdot; t)(\QQ_{\lambda}(\cdot; t))^{*}},
\end{equation*}
which is exactly the formula in the Theorem of \cite{NS3}.
\end{rem}

\section{Proof of the Main Theorem}\label{sec:mainproof}

In this section, we prove Theorem $\ref{thm:main}$.
As in \cite{NS1, NS3}, the statement of Theorem $\ref{thm:main}$ reduces to a combinatorial identity between the families of symmetric polynomials/functions $\{\QQ_{\lambda}\}_{\lambda}$ and $\{\F_{\lambda}\}_{\lambda}$.
The identity to prove is the \textit{refined Cauchy identity} in Proposition $\ref{prop:refined}$.
Our proof of this proposition is new and yields, as a corollary, the main results of the aforementioned papers.

\subsection{Formalities on completed tensor products}\label{sec:tensors}

We have been using the sequence of variables $X = (x_1, x_2, \ldots)$ and $\Lambda_{\FF}$ for the algebra of symmetric functions on the set of variables $X$.
We shall need a second sequence of variables $Y = (y_1, y_2, \ldots)$, which is why we write $\Lambda_{X, \FF}$ and $\Lambda_{Y, \FF}$ to distinguish the corresponding algebras of symmetric functions.
We also write $\Lambda_{X, N, \FF}$ for the algebra of polynomials on $x_1, \ldots, x_N$.

As it is usual, $\Lambda_{X, \FF}\otimes\Lambda_{Y, \FF}$ is the tensor product of these algebras, whose elements are of the form
\begin{equation}\label{eqn:tensor}
\sum_{\lambda\in\Y}{c_{\lambda}(a_{\lambda}(x_1, x_2, \ldots)\otimes b_{\lambda}(y_1, y_2, \ldots))}, \ c_{\lambda}\in\FF,
\end{equation}
$\{a_{\lambda}(x_1, x_2, \ldots) : \lambda\in\Y\}$, $\{b_{\lambda}(y_1, y_2, \ldots) : \lambda\in\Y\}$ are bases of $\Lambda_{X, \FF}$ and $\Lambda_{Y, \FF}$, and $c_{\lambda} = 0$ for all but finitely many $\lambda\in\Y$.

Moreover we need the \textit{completed tensor product} $\Lambda_{X, \FF} \widehat{\otimes} \Lambda_{Y, \FF}$, which is the algebra whose elements are of the form $(\ref{eqn:tensor})$, except that now $c_{\lambda}$ can be nonzero for infinitely many $\lambda\in\Y$.
The most important element of $\Lambda_{X, \FF} \widehat{\otimes} \Lambda_{Y, \FF}$, for our purposes, is the \textit{(Macdonald) reproducing kernel}
\begin{equation*}
\Pi := \prod_{i \geq 1}\prod_{j \geq 1} {\frac{(tx_iy_j; q)_{\infty}}{(x_iy_j; q)_{\infty}}},
\end{equation*}
where $(z; q)_{\infty} := (1 - z)(1 - zq)\cdots$ is the infinite Pochhammer symbol.
For example, by the \textit{Cauchy identity for Macdonald polynomials}, \cite[(4.13)]{M}, we can write it as
\begin{equation*}
\Pi = \sum_{\lambda\in\Y} {P_{\lambda}(x_1, x_2, \ldots; q, t)Q_{\lambda}(y_1, y_2, \ldots; q, t)}.
\end{equation*}

Given an operator $A$ on $\Lambda_{\FF} = \Lambda_{X, \FF}$, we can easily extend it to the tensor product $\Lambda_{X, \FF}\otimes\Lambda_{Y, \FF}$ and to the completed tensor product $\Lambda_{X, \FF}\widehat{\otimes}\Lambda_{Y, \FF}$, by making the operator act on the first coordinate.
Similarly we can extend an operator $B$ on $\Lambda_{\FF} = \Lambda_{Y, \FF}$ to $\Lambda_{X, \FF}\otimes\Lambda_{Y, \FF}$ and $\Lambda_{X, \FF}\widehat{\otimes}\Lambda_{Y, \FF}$ by making the operator act on the second coordinate.
The new operators are denoted by the same letter $A$ or $B$. This construction is used many times below.

\subsection{Reduction to an identity of symmetric functions}

\begin{lem}[Lemma from \cite{NS3}]\label{lem:NS3}
Let $f = f(x_1, x_2, \ldots) \in \Lambda_{\Q(q, t)}$ be arbitrary. Also let $f$ denote the operator $\Lambda_{\Q(q, t)} \rightarrow \Lambda_{\Q(q, t)}$ of multiplication by $f$, and let $f^*$ be the adjoint of $f$ with respect to the Macdonald inner product $(\cdot, \cdot)_{q, t}$ in $(\ref{eqn:macdonaldinner})$. Then
\begin{equation*}
f^* \left( \Pi \right) = f(y_1, y_2, \ldots)\cdot\Pi.
\end{equation*}
In the above equation, the operator $f^*$ and the operator of multiplication by $f(y_1, y_2, \ldots)$ act on $\Lambda_{X, \Q(q, t)}\widehat{\otimes}\Lambda_{Y, \Q(q, t)}$, as explained at the end of subsection $\ref{sec:tensors}$.
\end{lem}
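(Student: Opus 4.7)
The plan is to expand the reproducing kernel $\Pi$ via the Cauchy identity and exploit the duality $(\PP_\lambda, \QQ_\mu)_{q,t} = \delta_{\lambda\mu}$ so that the adjoint $f^*$ acting on the $x$-side can be transferred to multiplication by $f$ on the $y$-side. First I would write
\begin{equation*}
\Pi = \sum_{\lambda\in\Y} P_\lambda(x_1, x_2, \ldots; q, t)\, Q_\lambda(y_1, y_2, \ldots; q, t),
\end{equation*}
which is an element of $\Lambda_{X, \Q(q,t)} \widehat{\otimes} \Lambda_{Y, \Q(q,t)}$. Since $f^*$ acts only on the $x$-variables, it passes inside the sum:
\begin{equation*}
f^*(\Pi) = \sum_{\lambda\in\Y} f^*(P_\lambda(x_1,x_2,\ldots;q,t))\,Q_\lambda(y_1,y_2,\ldots;q,t).
\end{equation*}

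Next I would compute the matrix of $f^*$ in the $P$-basis using duality. Expand $f \cdot Q_\mu = \sum_\nu c^\nu_{f,\mu}\, Q_\nu$, which is possible because $\{Q_\lambda : \lambda \in \Y\}$ is a basis of $\Lambda_{\Q(q,t)}$. Then by the definition of the adjoint and by $(P_\lambda, Q_\nu)_{q,t} = \delta_{\lambda\nu}$,
\begin{equation*}
(f^*(P_\lambda), Q_\mu)_{q,t} \;=\; (P_\lambda, f\, Q_\mu)_{q,t} \;=\; \sum_\nu c^\nu_{f,\mu}(P_\lambda, Q_\nu)_{q,t} \;=\; c^\lambda_{f,\mu}.
\end{equation*}
Writing $f^*(P_\lambda) = \sum_\mu a^\lambda_\mu P_\mu$ and pairing with $Q_\mu$ recovers $a^\lambda_\mu = c^\lambda_{f,\mu}$, hence $f^*(P_\lambda) = \sum_\mu c^\lambda_{f,\mu}\, P_\mu$.

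Substituting this back and swapping the two summations gives
\begin{equation*}
f^*(\Pi) = \sum_\mu P_\mu(x_1,x_2,\ldots;q,t) \sum_\lambda c^\lambda_{f,\mu}\, Q_\lambda(y_1,y_2,\ldots;q,t) = \sum_\mu P_\mu(x_1,x_2,\ldots;q,t)\cdot f(y_1,y_2,\ldots)\, Q_\mu(y_1,y_2,\ldots;q,t),
\end{equation*}
since $\sum_\lambda c^\lambda_{f,\mu} Q_\lambda(y_1, y_2, \ldots; q, t) = f(y_1, y_2, \ldots)\,Q_\mu(y_1, y_2, \ldots; q, t)$ by definition of the coefficients $c^\lambda_{f,\mu}$. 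Factoring $f(y_1, y_2, \ldots)$ out of the $\mu$-sum and using the Cauchy identity once more gives $f(y_1, y_2, \ldots) \cdot \Pi$, as required.

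\textbf{Main obstacle.} The computation is essentially formal, so I do not anticipate a deep obstacle; the only point requiring a little care is that the identity must be read in the completed tensor product $\Lambda_{X,\Q(q,t)}\widehat{\otimes}\Lambda_{Y,\Q(q,t)}$ (since the sum over $\lambda$ is infinite), which justifies the interchange of summations above once one grades by total $|\lambda|$ and observes that, for each fixed degree on the $y$-side, only finitely many $\lambda$ contribute. The rest is bookkeeping with the duality between the $P$- and $Q$-bases.
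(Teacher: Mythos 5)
Your argument is correct. Note that the paper itself gives no proof of this lemma --- it is quoted from Nazarov--Sklyanin \cite{NS3} --- so there is nothing in the text to compare against line by line; what you have written is a valid self-contained proof. Your route goes through the Cauchy expansion $\Pi = \sum_\lambda P_\lambda(x)Q_\lambda(y)$ and the duality $(P_\lambda, Q_\mu)_{q,t} = \delta_{\lambda\mu}$, computing the matrix of $f^*$ in the $P$-basis and transferring it to multiplication by $f$ on the $y$-side; the interchange of summations is justified exactly as you say, since $c^\lambda_{f,\mu} \neq 0$ forces $|\mu| \le |\lambda| \le |\mu| + \deg f$, so the double sum is locally finite in both indices. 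The more common proof (and the one implicit in \cite{NS3}, which the paper itself echoes in its Section on the vertex operator, where it uses $p_n^* = n\tfrac{1-q^n}{1-t^n}\tfrac{\partial}{\partial p_n}$) instead observes that both $f \mapsto f^*$ and $f \mapsto f(y_1, y_2, \ldots)\cdot$ are algebra homomorphisms on the commutative ring $\Lambda_{\Q(q,t)}$, so it suffices to check the identity on the generators $p_n$, which is immediate from applying $\tfrac{\partial}{\partial p_n(x)}$ to the exponential form $\Pi = \exp\bigl(\sum_{n\ge 1} \tfrac{1}{n}\tfrac{1-t^n}{1-q^n}p_n(x)p_n(y)\bigr)$. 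That version buys a one-line verification at the cost of invoking the explicit formula for $p_n^*$; yours buys independence from that formula at the cost of the (mild) bookkeeping with the $P$--$Q$ duality. Either is acceptable.
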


Assume that we have an operator $A$ on $\Lambda_{\Q(q, t)} = \Lambda_{X, \Q(q, t)}$ such that $A(\Pi) = 0$. This implies
\begin{equation*}
A \left\{ \sum_{\lambda\in\Y} {P_{\lambda}(x_1, x_2, \ldots; q, t)Q_{\lambda}(y_1, y_2, \ldots; q, t)} \right\} = \sum_{\lambda\in\Y} {A(P_{\lambda}(x_1, x_2, \ldots; q, t))Q_{\lambda}(y_1, y_2, \ldots; q, t)} = 0,
\end{equation*}
and so $AP_{\lambda}(\cdot; q, t) = 0$, for all $\lambda\in\Y$. Since $\{P_{\lambda}(\cdot; q, t) : \lambda\in\Y\}$ is a basis of $\Lambda_{\Q(q, t)}$, then $A = 0$.

From the previous discussion and Lemma $\ref{lem:NS3}$, the main theorem is reduced to
\begin{equation}\label{eqn:toprove1}
A^k(\Pi) = \sum_{\ell(\lambda) = k}{ t^{\lambda_1 + \lambda_2 + \ldots} \F_{\lambda}(x_1, x_2, \ldots; t^{-1})\QQ_{\lambda}(y_1, y_2, \ldots; t^{-1}) }\cdot\Pi,
\end{equation}
as an equality in $\Lambda_{X, \Q(q, t)}\widehat{\otimes}\Lambda_{Y, \Q(q, t)}$.
Next, multiply both sides of $(\ref{eqn:toprove1})$ by $(u; t)_k^{-1} = ((1 - u)(1 - tu)\cdots (1 - t^{k-1}u))^{-1}$ and add from $k = 0$ to $\infty$.
It is then clear that $(\ref{eqn:toprove1})$ holds if and only if
\begin{equation}\label{eqn:toprove15}
A_{\infty}(u; q, t)(\Pi) = \sum_{k=0}^{\infty}{ \frac{1}{(u; t)_k} \sum_{\ell(\lambda) = k}{ t^{\lambda_1 + \lambda_2 + \ldots} \F_{\lambda}(x_1, x_2, \ldots; t^{-1}) \QQ_{\lambda}(y_1, y_2, \ldots; t^{-1}) } }\cdot \Pi
\end{equation}
holds as an equality in $\Lambda_{X, \Q(q, t)}\widehat{\otimes}\Lambda_{Y, \Q(q, t)}[[u]]$ (after Taylor expanding each $1/(u; t)_k$ near $u = 0$).
Our goal is to prove $(\ref{eqn:toprove15})$. In the next subsection, we make a further reduction of this equality to a combinatorial identity involving only HL functions and the inhomogeneous HL polynomials.

\subsection{Reduction to a refined Cauchy identity}

Let us begin with an observation. If $G\in\Lambda_{\Q(t)}$ is such that $\pi^{\infty}_N G = 0$ holds for all $N\in\N$, then $G = 0$ (recall the maps $\pi^{\infty}_N : \Lambda_{\Q(t)} \rightarrow \Lambda_{N, \Q(t)}$ were defined in $(\ref{eqn:mapsinfty})$).
Indeed, this can be deduced from the proof of Proposition/Definition $\ref{df:Halls}$.
As discussed at the end of subsection $\ref{sec:tensors}$, each operator $\pi^{\infty}_N$ can be extended to an operator of the form $\Lambda_{X, \Q(t)}\widehat{\otimes}\Lambda_{Y, \Q(t)}[[u]] \rightarrow \Lambda_{X, N, \Q(t)}\widehat{\otimes}\Lambda_{Y, \Q(t)}[[u]]$ by acting on the first coordinate (of each coefficient of a power $u^k$).
A similar statement holds: if $G\in \Lambda_{X, \Q(t)}\widehat{\otimes}\Lambda_{Y, \Q(t)}[[u]]$ is such that $\pi^{\infty}_N G = 0$ for all $N\in\N$, then $G = 0$.

Therefore it follows that $(\ref{eqn:toprove15})$ holds if, for each $N\in\N$, it also holds after applying to it the operator $\pi^{\infty}_N$.
As for the left side, we have $\pi^{\infty}_NA_{\infty}(u; q, t)(\Pi) = A_N(u; q, t)\pi^{\infty}_N(\Pi)$, by Lemma $\ref{lem:opinfty}$.
Next, using the well-known identity \cite[Ch. VI, (2.6)]{M}, we have
\begin{equation}\label{gather:eqns}
\begin{gathered}
\pi^{\infty}_N(\Pi) = \pi^{\infty}_N\left( \exp\left( \sum_{n=1}^{\infty}{ \frac{1 - t^n}{n(1 - q^n)} p_n(x_1, x_2, \ldots)p_n(y_1, y_2, \ldots) } \right) \right)\\
= \exp\left( \sum_{n=1}^{\infty}{ \frac{1 - t^n}{n(1 - q^n)} (p_n(x_1, \ldots, x_N) + \frac{t^{-nN}}{1-t^{-n}})p_n(y_1, y_2, \ldots) } \right) = \\
\exp\left( \sum_{n=1}^{\infty}{ \frac{1 - t^n}{n(1 - q^n)} p_n(x_1, \ldots, x_N)p_n(y_1, y_2, \ldots) } \right)
\exp\left( -\sum_{n=1}^{\infty}{ \frac{p_n(t^{1-N}y_1, t^{1-N}y_2, \ldots)}{n(1 - q^n)} } \right)\\
= \Pi_N \cdot \exp\left( -\sum_{n=1}^{\infty}{ \frac{p_n(t^{1-N}y_1, t^{1-N}y_2, \ldots)}{n(1 - q^n)} } \right) = \Pi_N \cdot\prod_{i=1}^N{\frac{1}{(t^{1-N}y_i; q)_{\infty}}},
\end{gathered}
\end{equation}
where $\Pi_N$ denotes
\begin{equation*}
\Pi_N := \prod_{i=1}^N\prod_{j\geq 1}{\frac{(tx_iy_j; q)_{\infty}}{(x_iy_j; q)_{\infty}}}.
\end{equation*}
We note that the second to last equality in the display $(\ref{gather:eqns})$ follows from \cite[Ch. VI, (2.6)]{M} after setting $x_{N+1} = x_{N+2} = \ldots = 0$, whereas the last equality in display $(\ref{gather:eqns})$ follows from the same identity but now after setting $t = 0$, and then using the variables $t^{1-N}y_i$ instead of $y_i$.

Next we find an expression for the right side of $(\ref{gather:eqns})$ after applying $\pi^{\infty}_N$ to it.
Since $\pi^{\infty}_N$ is a homomorphism of algebras, we obtain the same right side, except that we replace $\F_{\lambda}(x_1, x_2, \ldots; t^{-1})$ and $\Pi$ by
\begin{equation*}
\pi^{\infty}_N\F_{\lambda}(x_1, x_2, \ldots; t^{-1}) = \F_{\lambda}(x_1, \ldots, x_N; t^{-1}), \hspace{.2in} \pi^{\infty}_N(\Pi) = \Pi_N\cdot\prod_{i=1}^N{\frac{1}{(t^{1-N}y_i; q)_{\infty}}}.
\end{equation*}
After factoring out the factor $\prod_{i=1}^N{(t^{1-N}y_i; q)_{\infty}^{-1}}$ from both sides, we have that the result of applying $\pi^{\infty}_N$ to $(\ref{eqn:toprove2})$ is equivalent to
\begin{equation}\label{eqn:almostrefined}
\Pi_N^{-1}A_N(u; q, t)(\Pi_N) = \sum_{k=0}^{\infty}{\frac{1}{(u; t)_k} \sum_{\ell(\lambda) = k}{ t^{\lambda_1 + \lambda_2 + \ldots} \F_{\lambda}(x_1, \ldots, x_N; t^{-1})\QQ_{\lambda}(y_1, y_2, \ldots; t^{-1}) } }.
\end{equation}

We can still simplify the left side of $(\ref{eqn:almostrefined})$. After expanding the determinant in $(\ref{eqn:Aop})$, we have
\begin{equation*}
A_N(u; q, t) = \frac{1}{V(x_1, \ldots, x_N)(u; t)_N}\sum_{w\in S_N}
\epsilon(w) \prod_{i=1}^N{ x_i^{N - w(i) - 1} \{(x_i - t^{N-1})T_{q, x_i} + t^{w(i) - 1}(1 - x_i u)\} }
\end{equation*}
where $\epsilon(w)$ is the signature of the permutation $w$. Also note that all $N$ operators (indexed by $i = 1, \ldots, N$) pairwise commute, so the order in the product does not matter. From the evident
\begin{equation*}
\Pi_N^{-1}T_{q, x_i}\Pi_N = \prod_{l = 1}^{\infty}{ \frac{1 - x_i y_l}{1 - t x_i y_l} },
\end{equation*}
we deduce
\begin{gather}
\Pi_N^{-1} A_N(u; q, t) (\Pi_N) = \frac{1}{V(x_1, \ldots, x_N)(u; t)_N}\nonumber\\
\times \sum_{w\in S_N}
\epsilon(w) \prod_{i=1}^N{ x_i^{N - w(i) - 1} \{(x_i - t^{N-1})\prod_{l = 1}^{\infty}{ \frac{1 - x_i y_l}{1 - t x_i y_l} } + t^{w(i) - 1}(1 - x_i u)\} }\nonumber\\
= \frac{1}{V(x_1, \ldots, x_N)(u; t)_N}\det_{1\leq i, j\leq N}\left[ x_j^{N-i-1}\{ (x_j - t^{N-1})\prod_{l = 1}^{\infty}{\frac{1 - x_jy_l}{1 - tx_jy_l}} + t^{i-1}(1 - x_ju) \} \right]\label{eqn:leftside}
\end{gather}

To summarize, Theorem $\ref{thm:main}$ has been reduced to prove that $(\ref{eqn:leftside})$ equals the right side of $(\ref{eqn:almostrefined})$, for any $N\in\N$.
This equality follows from Proposition $\ref{prop:refined}$ below, after sending $y_i$ to $t^{-1}y_i$, using the homogeneity $t^{\lambda_1 + \lambda_2 + \ldots}\QQ_{\lambda}(t^{-1}y_1, t^{-1}y_2, \ldots; t^{-1}) = \QQ_{\lambda}(y_1, y_2, \ldots; t^{-1})$, and then replacing $t$ by $t^{-1}$.
Equalities of this sort were called \textit{refined Cauchy identities} in \cite{WZJ}.
Some discussion of their work, in connection to the papers \cite{NS3, W} and ours, is given in subsection $\ref{sec:discussion}$.

\subsection{Proof of the refined Cauchy identity}

The goal of this subsection is to prove the following proposition which, by the arguments in the previous subsection, concludes the proof of the main theorem. Observe that the parameter $q$ that gave rise to the Macdonald inner product and to the adjoint in formula $(\ref{eqn:kexplicit})$ is gone.

\begin{prop}\label{prop:refined}
As an identity on $\Lambda_{X, N, \Q(t)}\widehat{\otimes}\Lambda_{Y, \Q(t)}[[u]]$, the following holds
\begin{equation}\label{eqn:refined}
\begin{gathered}
1 + \sum_{k=1}^N{\frac{1}{(u; t^{-1})_k}\sum_{\ell(\lambda) = k}{\F_{\lambda}(x_1, \dots, x_N; t)\QQ_{\lambda}(y_1, y_2, \dots; t)}} =\\
\frac{1}{V(x_1, \dots, x_N)\cdot(u; t^{-1})_N}\det_{1\leq i, j\leq N}\left[ x_j^{N - i - 1}\left\{ (x_j - t^{1 - N})\prod_{l = 1}^{\infty}{\frac{1 - tx_jy_l}{1 - x_jy_l}} + t^{1 - i}(1 - x_ju) \right\} \right].
\end{gathered}
\end{equation}
\end{prop}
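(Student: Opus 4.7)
The plan is to multiply both sides of (\ref{eqn:refined}) by $V(x_1, \ldots, x_N) \cdot (u; t^{-1})_N$ to clear the common denominator, then verify the resulting polynomial identity in $u$ of degree at most $N$ by matching one coefficient and $N$ evaluations.

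Using $(u; t^{-1})_N/(u; t^{-1})_k = \prod_{i=k}^{N-1}(1-t^{-i}u)$, the cleared identity reads
\begin{equation*}
V(x)\sum_{\lambda \in \Y(N)}\prod_{i=\ell(\lambda)}^{N-1}(1-t^{-i}u)\,\F_\lambda(x;t)\QQ_\lambda(y;t) = \det\left[x_j^{N-i-1}\left\{(x_j-t^{1-N})\Phi(x_j) + t^{1-i}(1-x_ju)\right\}\right],
\end{equation*}
where $\Phi(x_j) := \prod_l (1-tx_jy_l)/(1-x_jy_l)$. Each side is a polynomial in $u$ of degree $\leq N$, so it suffices to match the coefficient of $u^N$ and the values at the $N$ points $u = t^s$, $s = 0, 1, \ldots, N-1$. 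The top coefficient matches immediately: only $\lambda = \emptyset$ contributes on the left, giving $V(x)(-1)^N t^{-\binom{N}{2}}$; on the right, extracting $u^N$ from the determinant yields $\det[-t^{1-i}x_j^{N-i}] = (-1)^N t^{-\binom{N}{2}}V(x)$ since $\det[x_j^{N-i}] = V(x)$.

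The substantive content is the evaluation at $u = t^s$. Here the prefactor $\prod_{i=\ell(\lambda)}^{N-1}(1-t^{s-i})$ on the left vanishes whenever $\ell(\lambda) \leq s$ (the factor at $i = s$ equals $1 - t^0 = 0$), so the left-hand side collapses to the sum over $\lambda$ with $\ell(\lambda) > s$. To match this with the right-hand side, expand the determinant by row multilinearity as $\sum_{T \subseteq [N]} \det M^{(T)}$, where rows $i \in T$ contribute the $(x_j-t^{1-N})\Phi(x_j)$-term and rows $i \notin T$ contribute the $t^{1-i}(1-x_j t^s)$-term. A Laplace expansion of $\det M^{(T)}$ along rows in $T$ factors it as a signed sum over $k$-subsets $J \subseteq [N]$, $k = |T|$, of products of two Vandermonde-like determinants in $(x_j)_{j\in J}$ and $(x_j)_{j\notin J}$. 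Crucially, the factorization $(x_j - t^{1-N}) = x_j(1 - t^{1-N}x_j^{-1})$ exhibits the inhomogeneous correction $\prod_{j \in J}(1-t^{1-N}x_j^{-1})$ that distinguishes $\F_\lambda$ from $\PP_\lambda$. Expanding $\prod_{j \in J}\Phi(x_j)$ by the Hall--Littlewood Cauchy identity (\ref{eqn:cauchyHL}) as $\sum_{\mu}\PP_\mu((x_j)_{j\in J};t)\QQ_\mu(y;t)$ and simplifying the Vandermonde-like parts via the symmetrization identity (\ref{eqn:combinatorial}), the result reassembles into a sum over partitions $\lambda$: by Definition $\ref{def:inhHL}$, the prefactor $\prod_{j \in J}(1-t^{1-N}x_j^{-1}) \PP_\lambda((x_j)_{j \in J};t)$ precisely lifts to the symmetrized $\F_\lambda(x;t)$, while the ``non-$\Phi$'' block of the Laplace expansion contributes exactly the coefficient $\prod_{i=\ell(\lambda)}^{N-1}(1-t^{s-i})$ appearing on the left.

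The main obstacle is the combinatorial bookkeeping in this last step: tracking the signs from permutation and Laplace expansions, the interplay between the powers $t^{1-i}$ and the evaluated parameter $t^s$, and the reassembly of the $T$- and $J$-indexed sums into the correct partition-indexed sum of $\F_\lambda\QQ_\lambda$ with the right normalization constants from Definition $\ref{def:inhHL}$. The identity (\ref{eqn:combinatorial}) and the Cauchy identity (\ref{eqn:cauchyHL}) for Hall--Littlewood polynomials are the essential inputs, while the factorization $(x_j - t^{1-N}) = x_j(1 - t^{1-N}x_j^{-1})$ is what automatically promotes $\PP_\lambda$ to $\F_\lambda$ in the matching.
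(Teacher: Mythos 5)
Your setup is sound and largely parallels the paper's: clearing denominators, expanding the determinant by multilinearity into subsets, factoring out $(x_j-t^{1-N})\prod_l\frac{1-tx_jy_l}{1-x_jy_l}$ versus the $u$-dependent term, applying the Hall--Littlewood Cauchy identity $(\ref{eqn:cauchyHL})$ to the product over the chosen subset, and observing that the factorization $(x_j-t^{1-N})=x_j(1-t^{1-N}x_j^{-1})$ is what upgrades $\PP_\lambda$ to $\F_\lambda$. Your degree count and the coefficient of $u^N$ are both correct, so the reduction to the $N$ evaluations $u=t^s$ is legitimate in principle.

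However, there is a genuine gap, and it sits exactly where you place "the main obstacle." The assertion that the non-$\Phi$ block "contributes exactly the coefficient $\prod_{i=\ell(\lambda)}^{N-1}(1-t^{s-i})$" is the entire content of the proposition, and you give no argument for it. Concretely: after matching the coefficient of a fixed $\QQ_\lambda(y;t)$, one must show that the sum over all subsets $I\supseteq I_0$ (where $I_0$ carries the nonzero parts of $\lambda$) of the weights
$t^{\binom{N-|I|}{2}}\prod_{i\in I}(1-t^{1-N}x_i^{-1})\prod_{j\notin I}(x_j^{-1}-u)\prod_{i\in I,\,j\notin I}\frac{x_i-tx_j}{x_i-x_j}$
collapses to the single factor $(u;t^{-1})_N/(u;t^{-1})_k$, independent of the $x$'s. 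This is Lemma $\ref{lem:technical}$ in the paper, whose proof is not routine bookkeeping: it identifies the sum as $\bigl(\sum_k(-1)^kH_K^k\bigr)$ applied to $\prod_i(y_i-1)$ and exploits the eigenvalues of the Macdonald $q$-difference operators on the elementary symmetric polynomials. Your proposal contains no substitute for this identity, and specializing to $u=t^s$ does not make it appear: at $u=t^s$ the right-hand determinant does not visibly simplify, so each evaluation point still requires the full subset-sum identity (now asserting that the sum \emph{vanishes} for $\ell(\lambda)\le s$, which is the same lemma at $q=t^{s+1-N}$). Likewise, the reassembly of the subset-and-ordering sum into the symmetrized definition $(\ref{def:inh1})$ of $\F_\lambda$ requires the $R$-restricted matching in the paper's proof of $(\ref{eqn:toprove2})$, with two applications of the symmetrization identity $(\ref{eqn:combinatorial})$; this too is deferred rather than done. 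In short, the frame of your argument is correct and essentially the paper's, but the one nontrivial combinatorial input is missing, and the evaluation-at-$N{+}1$-points device does not circumvent it.
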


Denote by $M$ the $N\times N$ matrix on the right side of $(\ref{eqn:refined})$. For any subset $I \subset \{1, 2, \ldots, N\}$, define the $N\times N$ matrix $M_I$ by
\begin{equation}
 (M_I)_{i, j} :=
        \begin{cases}
            x_j^{N - i - 1}(x_j - t^{1 - N})\prod_{l = 1}^{\infty}{\frac{1 - tx_jy_l}{1 - x_jy_j}}, & \text{ if }j\in I,\\
            x_j^{N-i-1}t^{1-i}(1 - x_ju), & \text{ if }j \notin I.
        \end{cases}
\end{equation}

By factoring out common factors from each column, the determinant of $M_I$ equals
\begin{equation}\label{eqn:detMI}
\det(M_I) = \prod_{l = 1}^{\infty}\prod_{i\in I}\frac{1 - tx_iy_l}{1 - x_iy_l}
\times \prod_{i\in I} x_i^{-1}(x_i - t^{1 - N})
\times \prod_{j\notin I} (x_j^{-1}t^{1-N}(1 - x_j u))
\times \det(A_I),
\end{equation}
where the $N\times N$ matrix $A_I$ is
\begin{equation*}
(A_I)_{i, j} := 
	\begin{cases}
            x_j^{N - i}, & \text{ if }j\in I,\\
            (tx_j)^{N-i}, & \text{ if }j \notin I.
        \end{cases}
\end{equation*}
Observe that $A_I$ is the Vandermonde determinant on variables $\{x_j : j \in I\} \sqcup \{tx_j : j\notin I\}$,
and the ordering of these variables (in the matrix) is inherited from $1 < 2 < \dots < N$.
We deduce
\begin{equation}\label{eqn:detAI}
\frac{\det(A_I)}{V(x_1, \ldots, x_N)} = t^{{N - |I| \choose 2}}\prod_{i\in I}\prod_{j\notin I}{ \frac{x_i - tx_j}{x_i - x_j} }.
\end{equation}

Plugging $(\ref{eqn:detAI})$ and $(\ref{eqn:cauchyHL})$ into $(\ref{eqn:detMI})$, we obtain
\begin{equation}\label{eqn:detVander}
\begin{gathered}
\frac{\det(M_I)}{V(x_1, \ldots, x_N)} = t^{{N - |I| \choose 2}} \prod_{i\in I}{(1 - t^{1-N}x_i^{-1})}
\prod_{j\notin I}{(t^{1-N}(x_j^{-1} - u))}
\prod_{\substack{i \in I \\ j \notin I}}{ \frac{x_i - tx_j}{x_i - x_j} }\\
\times\sum_{\ell(\lambda) \leq |I|}{ \PP_{\lambda}(\{x_i : i\in I\}; t) \QQ_{\lambda}(y_1, y_2, \ldots; t) }.
\end{gathered}
\end{equation}
From the definition of the matrix $M_I$ and multilinearity of the determinant, we have $\det(M) = \sum_{I\subseteq\{1, \ldots, N\}}{\det(M_I)}$. Therefore the expression in the right side of $(\ref{eqn:refined})$ equals
\begin{equation}\label{eqn:RHSrefined}
\begin{gathered}
\frac{1}{(u; t^{-1})_N}\sum_{I\subseteq\{1, \ldots, N\}}
t^{{N - |I| \choose 2}}
\prod_{i\in I}{(1 - t^{1-N}x_i^{-1})}
\prod_{j\notin I}{(t^{1-N}(x_j^{-1} - u))}
\prod_{\substack{i\in I \\ j\notin I}}{ \frac{x_i - tx_j}{x_i - x_j} }\\
\sum_{\ell(\lambda) \leq |I|}{ \PP_{\lambda}(\{x_i : i\in I\}; t) \QQ_{\lambda}(y_1, y_2, \ldots; t) }.
\end{gathered}
\end{equation}

We must prove that  $(\ref{eqn:RHSrefined})$ equals the left side of $(\ref{eqn:refined})$.
Both expressions are of the form $\sum_{\lambda\in\Y(N)}{G_{\lambda}(x_1, \ldots, x_N; t)\QQ_{\lambda}(y_1, y_2, \ldots; t)}$.
Thus let us choose any $\lambda\in\Y(N)$ of length $0 \leq \ell(\lambda) = k\leq N$ and show that the symmetric polynomial on $x_1, \ldots, x_N$ that accompanies $\QQ_{\lambda}(y_1, y_2, \ldots; t)$ in $(\ref{eqn:RHSrefined})$ equals $\F_{\lambda}(x_1, \ldots, x_N; t)/(u; t^{-1})_k$, which accompanies $\QQ_{\lambda}(y_1, y_2, \ldots; t)$ in the left side of $(\ref{prop:refined})$.
After simple algebraic manipulations, the identity to prove becomes
\begin{equation}\label{eqn:toprove}
\begin{gathered}
\sum_{\substack{ I \subseteq \{1, \ldots, N\} \\  |I| \geq k}}
t^{-\frac{(N + |I| - 1)(N - |I|)}{2}}
\prod_{i\in I}{(1 - t^{1-N}x_i^{-1})}
\prod_{j\notin I}{(x_j^{-1} - u)}
\prod_{\substack{ i\in I \\ j \notin I }}{ \frac{x_i - tx_j}{x_i - x_j} }
\PP_{\lambda}(\{x_i : i\in I\}; t)\\
\stackrel{?}{=} \frac{(u; t^{-1})_N}{(u; t^{-1})_k}\F_{\lambda}(x_1, \ldots, x_N; t).
\end{gathered}
\end{equation}

Before proving $(\ref{eqn:toprove})$ in full generality, let us look first at the extreme cases.\\

\textbf{Case 1.} $\ell(\lambda) = k = N$. In this case, the sum in the left side of $(\ref{eqn:toprove})$ has only one term corresponding to $I = \{1, 2, \ldots, N\}$. Such term is
\begin{equation*}
\prod_{i=1}^N{(1 - t^{1-N}x_i^{-1})}\PP_{\lambda}(x_1, \ldots, x_N; t).
\end{equation*}
The latter expression equals $\F_{\lambda}(x_1, \ldots, x_N; t)$, see Remark $\ref{rem:specialcases}$. Therefore $(\ref{eqn:toprove})$ holds in this case.\\

\textbf{Case 2.} $\ell(\lambda) = 0$, i.e., $\lambda = \emptyset$. In this case we can use $\PP_{\emptyset}(\{x_i : i\in I\}; t) = \F_{\lambda}(x_1, \ldots, x_N; t) = 1$, so that identity $(\ref{eqn:toprove})$ is easily deduced from the following lemma for $K = N$, $A = \{1, 2, \ldots, N\}$, and under the identification of variables $x_i \leftrightarrow t^{1-N}z_i^{-1}\ \forall i = 1, \ldots, N$, $u \leftrightarrow t^{N-1}q$. The proof of Lemma $\ref{lem:technical}$ is postponed till the end of this subsection.

\begin{lem}\label{lem:technical}
Let $K\in\N$, let $A$ be a set of size $K$, and let $(z_a)_{a\in A}$ be a set of variables indexed by $A$.
Moreover $q, t$ are two additional formal parameters.
Then
\begin{equation}\label{eqn:lemma}
\sum_{J\subseteq A}{ t^{{K - |J| \choose 2}}\prod_{i\in J}{(1 - z_i)}\cdot \prod_{j\in A\setminus J}{(z_j - q)} \cdot \prod_{\substack{i\in J \\ j\in A\setminus J}}{\frac{tz_i - z_j}{z_i - z_j}} } = (q; t)_K.
\end{equation}
\end{lem}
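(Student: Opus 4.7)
The plan is to induct on $K$. The base case $K = 0$ is immediate: only $J = \emptyset$ contributes, giving $1 = (q;t)_0$. For the inductive step, let $F(z_1, \ldots, z_K)$ denote the left-hand side. I will first show that $F$ is a constant in the $z_a$, and then determine that constant by specializing one variable to $q$, which reduces the problem to the lemma for size $K-1$ with parameter $tq$ in place of $q$.

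Establishing constancy has two sub-steps. First, $F$ is a polynomial: the only possible poles of the summands lie along $z_a = z_b$, and only summands with exactly one of $a, b$ inside $J$ contribute a singularity. Pair each such $J$ with $J' := J \triangle \{a, b\}$. The offending kernel factors $\frac{tz_a - z_b}{z_a - z_b}$ (in $J$) and $\frac{tz_b - z_a}{z_b - z_a}$ (in $J'$) produce opposite residues along $z_a = z_b$, while the remaining factors of $J$ and $J'$ agree upon restriction to $z_a = z_b$ (direct verification: the $(1-z_i)$, $(z_j-q)$ and leftover kernel contributions are symmetric in the roles of $a$ and $b$). Hence residues cancel pairwise and $F$ is a polynomial. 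Moreover, $F$ is symmetric in the $z_a$, by the reindexing $J \mapsto \sigma(J)$.

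Second, by symmetry it suffices to prove $F$ has degree zero in $z_K$. Each summand is a rational function of $z_K$ of degree at most $1$ at infinity, because the kernel ratios involving $z_K$ tend to finite limits and only one numerator factor depends on $z_K$. To show the $z_K^1$ coefficient vanishes, pair each $J' \subseteq A \setminus \{K\}$ of size $m$ with $J' \cup \{K\}$. A direct leading-order computation as $z_K \to \infty$ yields $z_K^1$-coefficients $+\, t^{\binom{K-m}{2}} \mathcal{C}_{J'}$ and $-\, t^{\binom{K-m-1}{2} + (K-m-1)} \mathcal{C}_{J'}$ respectively, for a common $z_K$-independent factor $\mathcal{C}_{J'}$; the sign and the extra $t^{K-m-1}$ in the second term come from $(1 - z_K) \sim -z_K$ and $\prod_{j \notin J' \cup \{K\}} \frac{tz_K - z_j}{z_K - z_j} \to t^{K-m-1}$. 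The cancellation hinges on the identity $\binom{K-m-1}{2} + (K-m-1) = \binom{K-m}{2}$, which forces the two coefficients to sum to zero. This is the main technical point of the proof: it is not a priori obvious that the $t$-powers align so that the pairing produces exact cancellation.

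Finally, to evaluate the constant, set $z_{a_0} = q$ for some $a_0 \in A$. Summands with $a_0 \notin J$ vanish because of the factor $(z_{a_0} - q) = 0$. For surviving summands (with $a_0 \in J$, write $J = \{a_0\} \sqcup J'$), use the elementary identity $(z_j - q) \cdot \frac{tq - z_j}{q - z_j} = z_j - tq$ to absorb the $i = a_0$ kernel factors into the $(z_j - q)$ factors, and pull out the common $(1-q)$. What remains is exactly $(1-q)$ times the lemma's LHS for $A \setminus \{a_0\}$ (of size $K-1$) with $q$ replaced by $tq$. By induction this equals $(1-q)(tq;t)_{K-1} = (q;t)_K$. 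Since $F$ is constant, $F \equiv (q;t)_K$, completing the proof.
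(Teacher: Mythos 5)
Your proof is correct, but it takes a genuinely different route from the paper's. The paper substitutes $z_i = y_i^{-1}$ and recognizes the left-hand side, up to the prefactor $\prod_i(1-qy_i)^{-1}$, as the alternating sum $\sum_k(-1)^kH_K^k$ of Macdonald $q$-difference operators applied to $\prod_i(y_i-1)=\sum_i(-1)^{K-i}e_i(y)$; it then reads off the answer from the known eigenvalues of $H_K^k$ on $e_i = P_{(1^i)}$, with the coefficient of $e_i$ collapsing to $0$ for $i<K$ because of a factor $1-q^0$. Your argument is instead the classical ``constant term'' strategy: pairwise residue cancellation across $J\leftrightarrow J\triangle\{a,b\}$ to get polynomiality, the pairing $J'\leftrightarrow J'\cup\{K\}$ with the exponent identity $\binom{K-m-1}{2}+(K-m-1)=\binom{K-m}{2}$ to kill the top degree, and the specialization $z_{a_0}=q$ (using $(z_j-q)\cdot\frac{tq-z_j}{q-z_j}=z_j-tq$) to set up an induction in which $q$ is replaced by $tq$; the telescoping $(1-q)(tq;t)_{K-1}=(q;t)_K$ then closes the loop. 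I verified the key computations: the residues along $z_a=z_b$ are indeed $\pm(t-1)z_b$ times a common factor, the leading coefficients in $z_K$ are $t^{\binom{K-m}{2}}\mathcal{C}_{J'}$ and $-t^{\binom{K-m-1}{2}+(K-m-1)}\mathcal{C}_{J'}$, and the evaluation step is right. Your approach is completely elementary and self-contained, at the cost of three separate verifications; the paper's is shorter and arguably more natural in context, since the whole article is organized around the Macdonald operators $H_K^k$ and their spectra, but it imports that spectral input as a black box. One cosmetic point: the lemma states $K\in\N$, so you may prefer to anchor the induction at $K=1$ (where the identity reads $(z_1-q)+(1-z_1)=1-q$) rather than at the empty case $K=0$, though either works.
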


\textbf{General case.} $1 \leq \ell(\lambda) = k \leq N-1$. The idea is to use the definition of the HL polynomial for $\PP_{\lambda}(\{x_i : i\in I\})$ and the definition of the inhomogeneous HL polynomial for $\F_{\lambda}(x_1, \ldots, x_N; t)$, then write both sides of $(\ref{eqn:toprove})$ as big sums and match terms of these sums with the help of Lemma $\ref{lem:technical}$.

The prefactors for both $\PP_{\lambda}(\{x_i : i\in I\}; t)$ and $\F_{\lambda}(x_1, \ldots, x_N; t)$ are very similar and almost match each other, except for factor corresponding to the part $0$ of the partition.
For $\PP_{\lambda}(\{x_i : i\in I\}; t)$, this factor is $\prod_{j=1}^{|I| - k}{(1-t)/(1-t^j)}$, whereas for $\F_{\lambda}(x_1, \ldots, x_N; t)$, this factor is $\prod_{j=1}^{N-k}{(1-t)/(1-t^j)}$. For any $I\subset\{1, \ldots, N\}$, let $S_I$ be the group of permutations of elements of $I$; it is finite of size $|I|!$.
Also denote by $i_1 < i_2 < \ldots < i_k$ the smallest elements of $I$.
With these considerations, the equality $(\ref{eqn:toprove})$ we wish to prove becomes
\begin{equation}\label{eqn:toprove2}
\begin{gathered}
\sum_{\substack{ I \subseteq \{1, \ldots, N\} \\  |I| \geq k}}
\sum_{w\in S_I}
t^{-\frac{(N + |I| - 1)(N - |I|)}{2}}
\prod_{j=1}^{|I|-k}{\frac{1 - t}{1 - t^j}}
\prod_{i\in I}{(1 - t^{1-N}x_i^{-1})}
\prod_{j\notin I}{(x_j^{-1} - u)}\\
\times\prod_{\substack{ i\in I \\ j \notin I }}{ \frac{x_i - tx_j}{x_i - x_j} }
\ w\left\{x_{i_1}^{\lambda_1}\cdots x_{i_k}^{\lambda_k}\prod_{\substack{i, j\in I \\ i < j}}{ \frac{x_i - tx_j}{x_i - x_j} } \right\}
\stackrel{?}{=} \frac{(u; t^{-1})_N}{(u; t^{-1})_k}
\prod_{j=1}^{N-k}{\frac{1 - t}{1 - t^j}}\\
\times\sum_{\sigma\in S_N}
\sigma\left\{ \prod_{i=1}^k{(1 - t^{1-N}x_i^{-1})}x_1^{\lambda_1}\cdots x_k^{\lambda_k}
\prod_{1\leq i<j\leq N}{\frac{x_i - tx_j}{x_i - x_j}} \right\}.
\end{gathered}
\end{equation}

Next we break each of the two sides of the equality above into $N(N-1)\cdots (N-k+1)$ terms, and match those on the left with those on the right. Let
\begin{equation*}
R := (r_1, \ldots, r_k) \in \{1, 2, \ldots, N\}^{k}
\end{equation*}
be an arbitrary tuple of size $k$. For our argument, let us fix one such $k$-tuple $R$.
We say that $\sigma\in S_N$ is $R$-restricted if $\sigma(1) = r_1, \ldots, \sigma(k) = r_k$.
The right side of $(\ref{eqn:toprove2})$ with $\sigma\in S_N$ replaced by only those $R$-restricted $\sigma\in S_N$ is called the $R$-restricted right side of $(\ref{eqn:toprove2})$. Next consider a pair $(I \subseteq \{1, \ldots, N\}, \ w\in S_N)$ such that $|I| \geq k$ and $I = \{i_1 < \ldots < i_k < \ldots\}$ are the elements of $I$ in increasing order.
We say that $(I, w)$ is $R$-restricted if $\{r_1, \ldots, r_k\}\subset I$ and $w(i_1) = r_1, \ldots, w(i_k) = r_k$.
The left side of $(\ref{eqn:toprove2})$ with the double sum over $(I, w)$ replaced by only those $R$-restricted pairs is called the $R$-restricted left side of $(\ref{eqn:toprove2})$.

Let us simplify the $R$-restricted sides of the equality and then show they are equal to each other.
Begin with the right side.
Let $I_0 := \{r_1, \ldots, r_k\}$ and denote by $S(I_0)$ be the set of bijective mappings $\sigma' : \{k+1, \ldots, N\} \rightarrow \{1, \ldots, N\} \setminus \{r_1, \ldots, r_k\}$.
Then the $R$-restricted right side of $(\ref{eqn:toprove2})$, corresponding to $R = (r_1, \ldots, r_k)$, equals
\begin{equation}\label{rhs:desired}
\begin{gathered}
\frac{(u; t^{-1})_N}{(u; t^{-1})_k}\prod_{i\in I_0}{(1 - t^{1-N}x_i^{-1})}
\prod_{i=1}^k{x_{r_i}^{\lambda_i}}
\prod_{1\leq i<j\leq k}{\frac{x_{r_i} - tx_{r_j}}{x_{r_i} - x_{r_j}}}
\prod_{\substack{i\in I_0 \\ j\notin I_0}}{\frac{x_i - tx_j}{x_i - x_j}}\\
\times\prod_{j=1}^{N-k}{\frac{1 - t}{1 - t^j}}
\sum_{\sigma'\in S(I_0)}{\sigma'\left\{ \prod_{k+1\leq i<j\leq N}{\frac{x_i - tx_j}{x_i - x_j}} \right\}}.
\end{gathered}
\end{equation}
The second line in the display $(\ref{rhs:desired})$ is equal to $1$ by virtue of $(\ref{eqn:combinatorial})$.
It follows that the $R$-restricted right side of $(\ref{eqn:toprove2})$ is equal to the first line in the display $(\ref{rhs:desired})$.

We switch to simplifying the $R$-restricted left side of $(\ref{eqn:toprove2})$. Again set $I_0 := \{r_1, \ldots, r_k\}$. It is clear that
\begin{equation*}
\begin{aligned}
w\left\{x_{i_1}^{\lambda_1}\cdots x_{i_k}^{\lambda_k}\prod_{\substack{i, j\in I \\ i < j}}{ \frac{x_i - tx_j}{x_i - x_j} } \right\}
=& \prod_{i=1}^k{x_{r_i}^{\lambda_i}}\prod_{1\leq i<j\leq k}{\frac{x_{r_i} - tx_{r_j}}{x_{r_i} - x_{r_j}}}\prod_{\substack{i\in I_0 \\ j\in I\setminus I_0}}{\frac{x_i - tx_j}{x_i - x_j}}\\
&\times w\left\{ \prod_{\substack{i, j\in I\setminus\{i_1, \ldots, i_k\} \\ i < j}}{\frac{x_i - tx_j}{x_i - x_j}} \right\}.
\end{aligned}
\end{equation*}
Let $S(\{i_1, \ldots, i_k\}, I_0)$ be the set of bijective maps $I\setminus\{i_1, \ldots, i_k\} \rightarrow I \setminus I_0$; the restriction of $w\in S_N$ to $I\setminus\{i_1, \ldots, i_k\}$, to be denoted $w'$, is an element of $S(\{i_1, \ldots, i_k\}, I_0)$. Now we can write the $R$-restricted left side of $(\ref{eqn:toprove2})$ as
\begin{equation}\label{eqn:largedisplay}
\begin{gathered}
\prod_{i=1}^k{x_{r_i}^{\lambda_i}}\prod_{1\leq i<j\leq k}{\frac{x_{r_i} - tx_{r_j}}{x_{r_i} - x_{r_j}}}
\sum_{I_0 \subseteq I \subseteq \{1, \ldots, N\}}
t^{-\frac{(N + |I| - 1)(N - |I|)}{2}}
\prod_{\substack{i\in I_0 \\ j\in I\setminus I_0}}{\frac{x_i - tx_j}{x_i - x_j}}
\prod_{i\in I}{(1 - t^{1-N}x_i^{-1})}\\
\prod_{j\notin I}{(x_j^{-1} - u)}
\prod_{\substack{ i\in I \\ j \notin I }}{ \frac{x_i - tx_j}{x_i - x_j} }
\times\prod_{j=1}^{|I|-k}{\frac{1 - t}{1 - t^j}}\sum_{w' \in S(\{i_1, \ldots, i_k\}, I_0)}
w'\left\{ \prod_{\substack{i, j\in I\setminus\{i_1, \ldots, i_k\} \\ i < j}}{\frac{x_i - tx_j}{x_i - x_j}} \right\}\\
= \prod_{i=1}^k{x_{r_i}^{\lambda_i}}\prod_{1\leq i<j\leq k}{\frac{x_{r_i} - tx_{r_j}}{x_{r_i} - x_{r_j}}}\\
\times\sum_{I_0 \subseteq I \subseteq \{1, \ldots, N\}}
t^{-\frac{(N + |I| - 1)(N - |I|)}{2}}
\prod_{\substack{i\in I_0 \\ j\in I\setminus I_0}}{\frac{x_i - tx_j}{x_i - x_j}}
\prod_{i\in I}{(1 - t^{1-N}x_i^{-1})}
\prod_{j\notin I}{(x_j^{-1} - u)}
\prod_{\substack{ i\in I \\ j \notin I }}{ \frac{x_i - tx_j}{x_i - x_j} }\\
= \prod_{i=1}^k{x_{r_i}^{\lambda_i}}\prod_{1\leq i<j\leq k}{\frac{x_{r_i} - tx_{r_j}}{x_{r_i} - x_{r_j}}}\prod_{i\in I_0}{(1 - t^{1-N}x_i^{-1})}\prod_{\substack{i\in I_0 \\ j\notin I_0}}{\frac{x_i - tx_j}{x_i - x_j}}\times \sum_{I_1 \subseteq \{1, \ldots, N\}\setminus I_0} t^{{N - k - |I_0| \choose 2}}\\
\prod_{i\in I_1}{(1 - t^{1-N}x_i^{-1})}\prod_{j\in (\{1, \ldots, N\}\setminus I_0)\setminus I_1}{(t^{1-N}x_j^{-1} - ut^{1-N})}
\prod_{\substack{i\in I_1 \\ j\in (\{1, \ldots, N\}\setminus I_0)\setminus I_1}}{\frac{x_i - tx_j}{x_i - x_j}}\\
= \prod_{i=1}^k{x_{r_i}^{\lambda_i}}\prod_{1\leq i<j\leq k}{\frac{x_{r_i} - tx_{r_j}}{x_{r_i} - x_{r_j}}}\prod_{i\in I_0}{(1 - t^{1-N}x_i^{-1})}\prod_{\substack{i\in I_0 \\ j\notin I_0}}{\frac{x_i - tx_j}{x_i - x_j}}\times (ut^{1-N}; t)_{N - k},
\end{gathered}
\end{equation}
where the first equality in display $(\ref{eqn:largedisplay})$ follows from $(\ref{eqn:combinatorial})$, the second equality is a simple algebraic manipulation, and the third equality is a consequence of Lemma $\ref{lem:technical}$ applied to $K = N-k$, $A = \{1, \ldots, N\}\setminus I_0$, $z_i \leftrightarrow t^{1-N}x_i^{-1} \ \forall i = 1, \ldots, N$, and $q \leftrightarrow ut^{1-N}$.

Finally, observe $(ut^{1-N}; t)_{N - k} = (u; t^{-1})_N/(u; t^{-1})_k$, so the last line of display $(\ref{eqn:largedisplay})$ equals the first line of display $(\ref{rhs:desired})$. This implies that, for a fixed $R = (r_1, \ldots, r_k)$, the $R$-restricted left side and $R$-restricted right side of $(\ref{eqn:toprove2})$ are equal. Since this holds for any $R\in \{1, \ldots, N\}^{k}$, adding these equalities over all $N(N-1)\cdots (N-k+1)$ distinct $k$-tuples $R$, the identity $(\ref{eqn:toprove2})$ follows. It only remains to prove the key Lemma $\ref{lem:technical}$.

\begin{proof}[Proof of Lemma $\ref{lem:technical}$]
For convenience, let $A = \{1, 2, \ldots, K\}$, so the variables are $z_1, z_2, \ldots, z_K$.
Let us make the change of variables $z_i \leftrightarrow y_i^{-1}$ for $i = 1, 2, \ldots, K$ and let the sum run over subsets $I := \{1, \ldots, K\} \setminus J$. After minor algebraic manipulations, the identity to prove $(\ref{eqn:lemma})$ becomes
\begin{equation}\label{lemmaproof1}
\sum_{I\subseteq\{1, \ldots, K\}}
t^{{|I| \choose 2}}\prod_{i\notin I}{\frac{y_i - 1}{1 - y_i q}}
\prod_{\substack{i\in I \\ j \notin I}}{\frac{ty_i - y_j}{y_i - y_j}}
= (q; t)_K\cdot\prod_{i=1}^K{(y_i^{-1} - q)^{-1}}.
\end{equation}
We prove $(\ref{lemmaproof1})$ with the help of Macdonald $q$-difference operators $\ref{eqn:macdonaldops}$. Begin with the equality
\begin{equation*}
\prod_{i\in I}{T_{q, y_i}}\left\{\prod_{i=1}^K{(y_i - 1)}\right\} = \prod_{j\notin I}{(y_j - 1)}\prod_{i\in I}{(qy_i - 1)} = (-1)^{|I|}\prod_{i=1}^K{(1 - qy_i)}\prod_{i\notin I}{\frac{y_i - 1}{1 - y_iq}},
\end{equation*}
which implies that the left side of $(\ref{lemmaproof1})$ equals
\begin{equation*}
\begin{gathered}
\prod_{i=1}^K{(1 - qy_i)^{-1}}\times\left( \sum_{k=0}^K{(-1)^k \sum_{\substack{ I \subset \{1, \ldots, K\} \\ |I| = k}}
{t^{{k \choose 2}} \prod_{\substack{i\in I \\ j\notin I}}{\frac{ty_i - y_j}{y_i - y_j}}\prod_{i\in I}{T_{q, y_i}} }
} \right) \left( \prod_{i=1}^K{(y_i - 1)} \right)\\
= \prod_{i=1}^K{(1 - qy_i)^{-1}}\times\left( \sum_{k=0}^K{(-1)^k H_K^k} \right)\left( \prod_{i=1}^K{(y_i - 1)} \right)\\
= \prod_{i=1}^K{(1 - qy_i)^{-1}}\times\left( \sum_{k=0}^K{(-1)^k H_K^k} \right)\left( \sum_{i=0}^K{(-1)^{K-i}e_i(y_1, \ldots, y_K)} \right),
\end{gathered}
\end{equation*}
where $\{H_K^k : k = 1, \ldots, K\}$  are the Macdonald $q$-difference operators, $H_K^0 := 1$, and $\{e_i(y_1, \ldots, y_K) : i = 0, \ldots, K\}$ are the elementary symmetric polynomials. It follows that $(\ref{lemmaproof1})$ is equivalent to
\begin{equation}\label{lemmaproof2}
\left( \sum_{k=0}^K{(-1)^k H_K^k} \right)\left( \sum_{i=0}^K{(-1)^{K-i}e_i(y_1, \ldots, y_K)} \right) = (q; t)_K (y_1y_2\cdots y_N).
\end{equation}
The essential property of the Macdonald $q$-difference operators is that they diagonalize the Macdonald polynomials, in particular they diagonalize the elementary symmetric polynomials $e_i(y_1, \ldots, y_K) = P_{(1^i)}(y_1, \ldots, y_K)$, \cite[Ch. VI (4.8)]{M}.
The eigenvalue is also known; in fact, we have
\begin{equation*}
H_K^ke_i(y_1, \ldots, y_K) = e_k(qt^{K-1}, \ldots, qt^{K-i}, t^{K-i-1}, \ldots, 1)e_{i}(y_1, \ldots, y_K).
\end{equation*}
Thus the left side of $(\ref{lemmaproof2})$ equals
\begin{equation*}
\sum_{i=0}^K{e_i(y_1, \ldots, y_K)\sum_{k=0}^K{(-1)^ke_k(qt^{K-1}, \ldots, qt^{K-i}, t^{K-i-1}, \ldots, t, 1)}}
\end{equation*}
and because $\sum_{k=0}^K{(-1)^ke_k(a_1, \ldots, a_K)} = \prod_{i=1}^K{(1 - a_i)}$, the coefficient of $e_i(y_1, \ldots, y_K)$ vanishes if $0\leq i < K$ (because the factor $\prod_{s=1}^i{(1 - qt^{K-s})}\prod_{r=1}^{K-i}{(1 - q^{r-1})}$ contains $1 - q^0 = 0$ unless $i = K$). The coefficient of $e_K(y_1, \ldots, y_K) = (y_1y_2\cdots y_K)$ is $\prod_{s=1}^K{(1 - qt^{K-s})} = (q; t)_K$.
\end{proof}

\subsection{Some corollaries}\label{sec:discussion}

By equating the top homogeneous components of the equality in Proposition $\ref{prop:refined}$, and using that the top-degree homogeneous component of $\F_{\lambda}(x_1, \ldots, x_N; t)$ is $\PP_{\lambda}(x_1, \ldots, x_N; t)$, we obtain:

\begin{cor}\label{cor:refined}
The following is an equality in $\Lambda_{X, N, \Q(t)}\widehat{\otimes}\Lambda_{Y, \Q(t)}[[u]]$:
\begin{equation}\label{eqn:refinedcauchy}
\begin{gathered}
\frac{1}{V(x_1, \dots, x_N)(u; t^{-1})_N}\det_{1\leq i, j\leq N}\left[ x_j^{N - i}\left\{ \prod_{l = 1}^{\infty}{\frac{1 - t x_j y_l}{1 - x_j y_l}} - t^{1 - i} u \right\} \right]\\
= 1 + \sum_{k=1}^{N}{\frac{1}{(u; t^{-1})_k}\sum_{\ell(\lambda) = k}{ \PP_{\lambda}(x_1, \dots, x_N; t)\QQ_{\lambda}(y_1, y_2, \dots; t)}}.
\end{gathered}
\end{equation}
\end{cor}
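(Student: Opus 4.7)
The plan is to derive Corollary~\ref{cor:refined} by extracting the top homogeneous component in the $x$-variables from the identity~(\ref{eqn:refined}) of Proposition~\ref{prop:refined}. To make both sides transform coherently, I would introduce an auxiliary parameter $a$ and apply the substitution $x_i \mapsto a x_i$, $y_l \mapsto y_l/a$ (leaving $u$ and $t$ fixed), then isolate the leading $a \to \infty$ asymptotics on each side. The key observation that makes this work cleanly is that $(ax_j)(y_l/a) = x_j y_l$, so the infinite product $\prod_{l}\frac{1 - t x_j y_l}{1 - x_j y_l}$ is invariant under this combined substitution.

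On the left-hand side of~(\ref{eqn:refined}), using that $\QQ_{\lambda}$ is homogeneous of degree $|\lambda|$ in the $y$-variables and that the top-degree homogeneous component of $\F_{\lambda}(\cdot; t)$ is $\PP_{\lambda}(\cdot; t)$, one computes
\begin{equation*}
\F_{\lambda}(ax_1, \ldots, ax_N; t)\, \QQ_{\lambda}(y_1/a, y_2/a, \ldots; t) = \PP_{\lambda}(x_1, \ldots, x_N; t)\, \QQ_{\lambda}(y_1, y_2, \ldots; t) + O(a^{-1}).
\end{equation*}
Summing over $\lambda$ weighted by the coefficients $1/(u; t^{-1})_k$ and letting $a \to \infty$ produces exactly the right-hand side of~(\ref{eqn:refinedcauchy}).

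On the right-hand side of~(\ref{eqn:refined}), the same substitution transforms each matrix entry into
\begin{equation*}
(ax_j)^{N-i-1}\left\{(ax_j - t^{1-N})\prod_{l=1}^{\infty}\frac{1 - tx_jy_l}{1 - x_jy_l} + t^{1-i}(1 - ax_j u)\right\},
\end{equation*}
whose leading term as $a \to \infty$ is $a^{N-i}\, x_j^{N-i}\left\{\prod_{l=1}^{\infty}\frac{1 - tx_jy_l}{1 - x_jy_l} - t^{1-i}u\right\}$; the discarded subleading pieces, namely $-t^{1-N}\prod_{l}\frac{1-tx_jy_l}{1-x_jy_l}$ and $t^{1-i}$, contribute only at order $a^{N-i-1}$. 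Expanding the determinant over permutations, the top-order contribution gathers a total factor of $a^{\sum_{i=1}^{N}(N-i)} = a^{\binom{N}{2}}$, which is precisely cancelled by the scaling $V(ax_1, \ldots, ax_N) = a^{\binom{N}{2}} V(x_1, \ldots, x_N)$ in the denominator. After cancellation, the leading coefficient is the left-hand side of~(\ref{eqn:refinedcauchy}).

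The whole argument is essentially bookkeeping and involves no substantive combinatorial obstacle, since the hard work is already done in Proposition~\ref{prop:refined}. The one step requiring some care is verifying that the surviving leading term of the matrix entry has the correct form $x_j^{N-i}\{\prod_{l}\tfrac{1-tx_jy_l}{1-x_jy_l} - t^{1-i}u\}$, for which it is crucial that the rescaling of $y$ is paired with the inverse rescaling of $x$ so that the Cauchy-type product stays invariant.
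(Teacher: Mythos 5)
Your proposal is correct and follows essentially the same route as the paper, which obtains the corollary in one line by ``equating the top homogeneous components'' of Proposition \ref{prop:refined} and using that the top-degree part of $\F_{\lambda}$ is $\PP_{\lambda}$; your substitution $x_i \mapsto a x_i$, $y_l \mapsto y_l/a$ is just an explicit implementation of that degree bookkeeping, exploiting the invariance of the Cauchy-type products.
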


An equivalent version of $(\ref{eqn:refinedcauchy})$ was needed for the result of Nazarov-Sklyanin, \cite{NS3}.
Their proof is different from ours; it uses induction on $N$, and the Pieri rule for HL polynomials.
We do not have a Pieri rule for the inhomogeneous HL polynomials, so we devised a different method.

The identity of Corollary $\ref{cor:refined}$ was also proved by Wheeler and Zinn-Justin \cite{WZJ} (who introduced the name \textit{refined Cauchy identity}) and by Warnaar \cite{W}. The identity proved in both of these papers is when $y_{N+1} = y_{N+2} = \dots$, but this turns out to be equivalent to $(\ref{eqn:refinedcauchy})$.
Comparing $(\ref{eqn:refinedcauchy})$ with the result of \cite{WZJ, W}, one obtains the following nontrivial equality of degree $N$ polynomials in $u$:

\begin{equation*}
\begin{gathered}
\frac{1}{V(x_1, \dots, x_N)}\det_{1\leq i, j\leq N}\left[ x_j^{N - i}\left\{ \prod_{l = 1}^N{\frac{1 - q x_j y_l}{1 - x_j y_l}} - q^{1 - i} u \right\} \right] \\
= 
\frac{\prod_{i = 1}^N\prod_{j=1}^N{(1 - qx_i y_j)}}{V(x_1, \dots, x_N)V(y_1, \dots, y_N)}\det_{1\leq i, j\leq N}\left[ \frac{1 - uq^{1 - N} + (uq^{1-N} - q)x_iy_j}{(1 - qx_iy_j)(1 - x_iy_j)} \right].
\end{gathered}
\end{equation*}

In a different direction, we can set $u = 0$ in Proposition $\ref{prop:refined}$. We obtain the following \textit{inhomogeneous Cauchy identity}.
\begin{cor}\label{cor:cauchy}
The following is an equality in $\Lambda_{X, N, \Q(t)}\widehat{\otimes}\Lambda_{Y, \Q(t)}$:
\begin{equation}\label{coreqn:cauchy}
\begin{gathered}
\sum_{\ell(\lambda) \leq N}{\F_{\lambda}(x_1, \dots, x_N; t)\QQ_{\lambda}(y_1, y_2, \dots; t)}\\
= \frac{1}{V(x_1, \dots, x_N)}\det_{1\leq i, j\leq N}\left[ x_j^{N - i - 1}\left\{ (x_j - t^{1 - N})\prod_{l = 1}^{\infty}{\frac{1 - tx_jy_l}{1 - x_jy_l}} + t^{1 - i} \right\} \right].
\end{gathered}
\end{equation}
\end{cor}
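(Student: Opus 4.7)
The plan is to obtain Corollary \ref{cor:cauchy} as the direct $u \to 0$ specialization of Proposition \ref{prop:refined}, which has just been proved. Since both sides of the equation in Proposition \ref{prop:refined} are elements of $\Lambda_{X, N, \Q(t)}\widehat{\otimes}\Lambda_{Y, \Q(t)}[[u]]$, I can legitimately extract the constant term in $u$ (or equivalently substitute $u = 0$), and the resulting equality lives in $\Lambda_{X, N, \Q(t)}\widehat{\otimes}\Lambda_{Y, \Q(t)}$, exactly as claimed.

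Concretely, I would observe that $(u; t^{-1})_k = \prod_{i=0}^{k-1}(1 - t^{-i}u)$ satisfies $(0; t^{-1})_k = 1$ for every $k \geq 0$, so all the denominators $(u; t^{-1})_k$ appearing on the left of $(\ref{eqn:refined})$, as well as the factor $(u; t^{-1})_N$ in the denominator on the right, become $1$ upon setting $u = 0$. The left side of $(\ref{eqn:refined})$ therefore collapses to
\[
1 + \sum_{k=1}^N \sum_{\ell(\lambda)=k} \F_{\lambda}(x_1,\ldots,x_N;t)\,\QQ_{\lambda}(y_1,y_2,\ldots;t),
\]
which is exactly $\sum_{\ell(\lambda)\leq N} \F_{\lambda}(x_1,\ldots,x_N;t)\,\QQ_{\lambda}(y_1,y_2,\ldots;t)$ once we incorporate the leading $1$ as the contribution from $\lambda = \emptyset$ (using $\F_{\emptyset} = \QQ_{\emptyset} = 1$). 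On the right side, the factor $1 - x_j u$ inside the determinant reduces to $1$, producing the determinant displayed in $(\ref{coreqn:cauchy})$.

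There is essentially no obstacle here: the entire content has been packaged into Proposition \ref{prop:refined}, and the corollary is a one-line specialization. The only minor bookkeeping point is to confirm that the substitution $u = 0$ is well-defined, but this is immediate because every denominator in $(\ref{eqn:refined})$ is of the form $(u; t^{-1})_k$ with non-vanishing value at $u = 0$, so no singularity can arise. Thus the proof is complete after this one substitution.
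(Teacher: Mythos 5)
Your proof is correct and is exactly the paper's argument: the paper obtains Corollary \ref{cor:cauchy} by setting $u=0$ in Proposition \ref{prop:refined}, noting that all Pochhammer factors $(u;t^{-1})_k$ become $1$ and the entry $t^{1-i}(1-x_ju)$ reduces to $t^{1-i}$. (Your written form of $(u;t^{-1})_k$ differs from the paper's convention $(u;t)_k=\prod_{i=0}^{k-1}(1-t^{i-1}u)$ by a shift in the exponent, but this is immaterial since every factor equals $1$ at $u=0$.)
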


By equating the top-degree homogeneous components of both sides of identity $(\ref{coreqn:cauchy})$, we obtain the usual Cauchy identity $(\ref{eqn:cauchyHL})$. However, the right side of $(\ref{coreqn:cauchy})$ does not have the usual factorized form of a reproducing kernel.

\section{Inhomogeneous Hall-Littlewood polynomials}\label{sec:combinatorialprops}

In this section, we study the inhomogeneous HL polynomials $\F_{\lambda}(x_1, \ldots, x_N; t)$.
The main result is Theorem $\ref{cor:equalityHLs}$, which proves that they are limits of interpolation Macdonald polynomials in the regime $q\rightarrow 0$.
The statement of this corollary was conjectured by Grigori Olshanski, who also suggested a proof; the elaboration of this idea is in the first two subsections below.

\subsection{Expansion in the basis of Hall-Littlewood polynomials}\label{sec:Hall1}

For $n\in\N_0$, let
\begin{equation*}
\phi_n(t) := \begin{cases}
(1 - t)(1 - t^2)\cdots (1 - t^n), &\textrm{ if }n \geq 1,\\
1, &\textrm{ if }n = 0.
\end{cases}
\end{equation*}
The $t$-analogue of the factorial is $[n]! := \phi_n(t)/(1-t)^n$.
For integers $n \geq k \geq 0$, the $t$-binomial coefficient ${n \brack k}$ is
\begin{equation*}
{n \brack k} := \frac{[n]!}{[k]![n-k]!} = \frac{\phi_n(t)}{\phi_k(t)\phi_{n-k}(t)}.
\end{equation*}
For convenience, we extend the definition to all integers $k\in\Z$ by setting
\begin{equation*}
{n \brack k} := 0 \ \forall k\in\Z \textrm{ with } k < 0 \textrm{ or }k > n.
\end{equation*}
Next, for any partitions $\lambda, \mu\in\Y(N)$, define
\begin{equation}\label{eqn:tau}
\tau_{\lambda/\mu}(t; N) := (-t^{1-N})^{|\lambda| - |\mu|} {N - \mu_1' \brack \lambda_1' - \mu_1'}\prod_{i \geq 1}{\mu_i' - \mu_{i+1}' \brack \lambda_{i+1}' - \mu_{i+1}'}.
\end{equation}
Note that $\tau_{\lambda/\mu}(t; N) = 0$ unless $\mu\subseteq\lambda$ and 	$\lambda/\mu$ is a vertical strip.

\begin{prop}\label{prop:expansionHL}
For any $\lambda\in\Y(N)$, we have
\begin{equation*}
\F_{\lambda}(x_1, \ldots, x_N; t) = \sum_{\mu}{\tau_{\lambda/\mu}(t; N)\PP_{\mu}(x_1, \ldots, x_N; t)}.
\end{equation*}
\end{prop}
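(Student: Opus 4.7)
The plan is a direct expansion of the definition (\ref{def:inh1}). Writing $k = \ell(\lambda)$, $V_t(x) := \prod_{i<j}(x_i-tx_j)/(x_i-x_j)$ and $c_\lambda(t) := \prod_{i\geq 0}\prod_{j=1}^{m_i(\lambda)}(1-t)/(1-t^j)$, one distributes $\prod_{i=1}^k(1-t^{1-N}x_i^{-1}) = \sum_{S\subseteq\{1,\ldots,k\}}(-t^{1-N})^{|S|}\prod_{i\in S}x_i^{-1}$ inside the $S_N$-symmetrization to get
\begin{equation*}
\F_\lambda(x_1,\ldots,x_N;t) = c_\lambda(t)\sum_{S\subseteq\{1,\ldots,k\}}(-t^{1-N})^{|S|}R_{\mu^{(S)}}(x;t),
\end{equation*}
where $R_\nu(x;t) := \sum_{w\in S_N}w\{x^\nu V_t(x)\}$ for a composition $\nu\in\Z_{\geq 0}^N$, and $\mu^{(S)}$ is the composition with $\mu^{(S)}_i := \lambda_i - \mathbf{1}_{i\in S}$ for $i\leq k$ and $\mu^{(S)}_i := 0$ for $i>k$. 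Each sorted partition $\mu := \textrm{sort}(\mu^{(S)})$ satisfies $\mu\subseteq\lambda$ with $\lambda/\mu$ a vertical strip of size $|S|$.

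The key technical step is a lemma: for any composition $\nu\in\Z_{\geq 0}^N$ whose sorted version is the partition $\mu$,
\begin{equation*}
R_\nu(x;t) = t^{\textrm{asc}(\nu)}\cdot v_\mu(t)\cdot \PP_\mu(x;t),
\end{equation*}
where $\textrm{asc}(\nu) := |\{(i,j):i<j,\,\nu_i<\nu_j\}|$ and $v_\mu(t) := c_\mu(t)^{-1}$. One proves this by showing that swapping two adjacent distinct entries of $\nu$ multiplies $R_\nu$ by $t^{\pm 1}$ depending on whether a strict ascent is created or destroyed, while swapping equal entries leaves $R_\nu$ invariant; iterating adjacent swaps from $\nu$ to its sorted $\mu$ accumulates the factor $t^{\textrm{asc}(\nu)}$. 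Grouping the sum over $S$ by $\mu$, the coefficient of $\PP_\mu$ in $\F_\lambda$ becomes
\begin{equation*}
c_\lambda(t)v_\mu(t)(-t^{1-N})^{|\lambda|-|\mu|}\sum_{S:\,\textrm{sort}(\mu^{(S)})=\mu}t^{\textrm{asc}(\mu^{(S)})}.
\end{equation*}

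A direct case-check shows $\textrm{asc}(\mu^{(S)})$ decomposes across blocks of equal $\lambda$-parts: distinct $\lambda$-block values differ by at least one, so no strict ascent of $\mu^{(S)}$ can cross between two blocks, nor between the support of $\lambda$ and its zero tail. Hence $\textrm{asc}(\mu^{(S)}) = \sum_{j\geq 1}\inv(d^{(j)})$, where $d^{(j)}\in\{0,1\}^{m_j(\lambda)}$ records which rows of the $j$-block are decremented. Since the number of decrements in block $j$ is forced by $\mu$ to be $c_j := \lambda_j'-\mu_j'\in\{0,1\}$, summing over admissible $d^{(j)}$ produces the Gaussian binomial $\binom{m_j(\lambda)}{c_j}_t$. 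Using the vertical-strip identity $m_i(\mu)-m_i(\lambda) = c_{i+1}-c_i$ (with $c_0:=0$) and cancelling the relevant $\phi$-factors in $c_\lambda, v_\mu$, and the $t$-binomials, one verifies the algebraic identity
\begin{equation*}
c_\lambda(t)v_\mu(t)\prod_{j\geq 1}\binom{m_j(\lambda)}{c_j}_t = \binom{N-\mu_1'}{\lambda_1'-\mu_1'}_t\prod_{i\geq 1}\binom{\mu_i'-\mu_{i+1}'}{\lambda_{i+1}'-\mu_{i+1}'}_t,
\end{equation*}
which, combined with the prefactor $(-t^{1-N})^{|\lambda|-|\mu|}$, is exactly $\tau_{\lambda/\mu}(t;N)$.

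The main obstacle is the lemma $R_\nu = t^{\textrm{asc}(\nu)}v_\mu(t)\PP_\mu$, since it requires a careful analysis of how the Weyl-like symmetrization $\sum_w w\{x^\nu V_t\}$ depends on the arrangement of the composition $\nu$. Once this is in hand, the factorization of $\textrm{asc}$ into block-wise $\inv$-statistics and the resulting Gaussian-binomial identification follow by elementary combinatorial counting, and the final algebraic simplification uses only the standard identities among $\phi_n(t)$ and $t$-binomial coefficients.
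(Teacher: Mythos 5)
Your proposal follows essentially the same route as the paper's own proof: expand $\prod_{i=1}^k(1-t^{1-N}x_i^{-1})$ over subsets, straighten each symmetrized term $R_{\mu^{(S)}}$ into a Hall--Littlewood polynomial times a power of $t$, group by the sorted partition $\mu$, recognize the sum of $t$-powers as a product of Gaussian binomials via a block decomposition, and finish with the $\phi$-factor identity. All of these steps correspond to the paper's Lemmas on almost-partitions, vertical strips, and the $t$-binomial count, and your final algebraic identity is identical to the one the paper checks.

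There is, however, one genuine (though repairable) gap: your key lemma $R_\nu = t^{\mathrm{asc}(\nu)}v_\mu(t)\PP_\mu$ is false for general compositions $\nu$, and the proof sketch --- that swapping two adjacent \emph{distinct} entries multiplies $R_\nu$ by $t^{\pm 1}$ --- only holds when the swapped entries differ by exactly $1$. Already in two variables,
\begin{equation*}
R_{(0,2)} = t(x_1^2+x_1x_2+x_2^2) - x_1x_2, \qquad t\,R_{(2,0)} = t(x_1^2+x_1x_2+x_2^2) - t^2 x_1x_2,
\end{equation*}
so a swap of entries differing by $2$ produces a correction term $(t^2-1)x_1^{a+1}x_2^{a+1}$; this is why the paper invokes Morris's straightening lemma, whose general form has extra terms that only disappear for almost-partitions. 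Fortunately, the compositions $\mu^{(S)}$ you actually need are obtained from a partition by decrementing some parts by $1$, so every out-of-order adjacent pair encountered while sorting (within a block of equal $\lambda$-parts) consists of entries $a$ and $a+1$, for which the clean relation $R_{(\ldots,a,a+1,\ldots)} = t\,R_{(\ldots,a+1,a,\ldots)}$ does hold. Restricting the lemma to this class of compositions (the paper's ``almost-partitions'') closes the argument. A minor additional slip: the block counts $c_j = \lambda_j'-\mu_j'$ lie in $\{0,1,\ldots,m_j(\lambda)\}$, not in $\{0,1\}$ (it is the individual decrements $d_i$ that are $0$ or $1$); your Gaussian-binomial formula is unaffected.
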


We need some preparatory lemmas; in the rest of this subsection, $N$ is a fixed positive integer.
First, we slightly extend the definition of Hall-Littlewood polynomial.
An $N$-tuple $l = (l_1, \ldots, l_N)\in\N_0^N$ is called an \textit{almost-partition} if
\begin{itemize}
	\item for each $i = 1, 2, \ldots, N-1$, either $l_i \geq l_{i+1}$, or $l_i = l_{i+1} - 1$;
	\item there do not exist indices $1\leq i < j < k \leq N$ with $l_i < l_j < l_k$.
\end{itemize}
For such $l\in\N_0^N$, define
\begin{eqnarray}
m_k(l) &:=& \#\{ 1\leq i\leq N : l_i = k \} \ \forall k \geq 0,\nonumber\\
\inv(l) &:=& \#\{ 1\leq i < j \leq N : l_i > l_j \},\nonumber\\
v_l(t) &:=& \frac{\prod_{i \geq 0}{\phi_{m_i(l)}(t)}}{(1-t)^N} = \prod_{i\geq 0}\prod_{j=1}^{m_i(l)} \frac{1 - t^j}{1 - t},\label{vDef}\\
\PP_{l}(x_1, \ldots, x_N; t) &:=&  v_l(t)^{-1}\sum_{w\in S_N}
w\left\{ x_1^{l_1}\cdots x_N^{l_N}\prod_{1\leq i<j\leq N}\frac{x_i - tx_j}{x_i - x_j} \right\}.\label{Pdef}
\end{eqnarray}

Given an almost-partition $l\in\N_0^N$, let $\lambda\in\Y(N)$ be the partition given by
\begin{equation*}
\lambda := (1^{m_1(l)}2^{m_2(l)}\dots).
\end{equation*}
We say that $\lambda\in\Y(N)$ is the partition \textit{linked to} $l\in\N_0^N$.
In particular, we have
\begin{equation*}
m_k(l) = m_k(\lambda) \ \forall k \geq 1, \ m_0(l) = N - \ell(\lambda), \hspace{.2in} v_{\lambda}(t) = v_l(t),
\end{equation*}
and the $N$-tuple $\lambda = (\lambda_1, \ldots, \lambda_N)$ is obtained after applying $\inv(l)$ transpositions to $l = (l_1, \ldots, l_N)$.

\begin{lem}\label{lem:straight}
Let $l\in\N_0^N$ be an almost-partition and $\lambda\in\Y(N)$ be the partition linked to $l$, as defined above. Then
\begin{equation*}
\PP_{l}(x_1, \ldots, x_N; t) = t^{\inv(l)}\PP_{\lambda}(x_1, \ldots, x_N; t).
\end{equation*}
\end{lem}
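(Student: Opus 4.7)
The plan is to prove the lemma via an adjacent-swap identity together with induction on the number of ascents in $l$. The key step is a \emph{swap identity}: if $l \in \N_0^N$ has $l_i = l_{i+1} - 1$ at some position $i$ and $l' = s_i \cdot l$ is the sequence obtained by interchanging $l_i$ and $l_{i+1}$, then
\[
R_l := \sum_{w \in S_N} w\!\left(x^l \prod_{j<k}\frac{x_j - tx_k}{x_j - x_k}\right) = t \cdot R_{l'}.
\]
I would prove this by factoring $\Pi := \prod_{j<k}(x_j - tx_k)/(x_j - x_k) = \Pi^{(i)} \cdot (x_i - tx_{i+1})/(x_i - x_{i+1})$, where $\Pi^{(i)}$ omits the $(i,i+1)$ factor. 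A direct check shows $\Pi^{(i)}$ is $s_i$-symmetric ($x_i \leftrightarrow x_{i+1}$): the factors for pairs $(j,i)$ and $(j,i+1)$ (with $j<i$) swap into each other, and likewise for the pairs $(i,k)$ and $(i+1,k)$ with $k>i+1$. Writing the $s_i$-symmetric common factor $B := x_i^{l_i} x_{i+1}^{l_i} \cdot \prod_{j \neq i, i+1} x_j^{l_j}$, a short calculation gives
\[
x^l \Pi - t\, x^{l'} \Pi = B \cdot \Pi^{(i)} \cdot \frac{(x_i - tx_{i+1})(x_{i+1} - tx_i)}{x_i - x_{i+1}}.
\]
The first two factors are $s_i$-symmetric while the third is $s_i$-antisymmetric (symmetric numerator, antisymmetric denominator), so the whole expression is $s_i$-antisymmetric, and pairing $w$ with $ws_i$ in $\sum_w w(\cdot)$ collapses the sum to zero.

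Granting the swap identity, I would run an induction on the number of ascents of $l$. Since $v_l(t)$ depends only on the multiplicity sequence and is preserved by swapping, $v_l = v_{l'} = v_\lambda$, so the swap identity translates to $P_l = t \cdot P_{l'}$. If $l = \lambda$ we are done; otherwise $l$ has an ascent, and condition (a) of the almost-partition definition forces at least one ascent to have the form $l_i = l_{i+1} - 1$. The main point to verify is that the swapped sequence $l'$ remains an almost-partition so the induction can continue. For condition (a) at the neighboring positions $i-1, i+1$: a failure would require $l_{i-1} = l_i - 1$ or $l_{i+1} = l_{i+2} - 1$, but either, combined with the ascent $l_i < l_{i+1}$, produces a strictly increasing triple $l_{i-1}<l_i<l_{i+1}$ or $l_i<l_{i+1}<l_{i+2}$ in $l$, contradicting condition (b). For condition (b) on $l'$: a hypothetical strict triple $l'_a < l'_b < l'_c$ must use at least one of the indices $i, i+1$, and a short case analysis (six sub-cases, based on which of $a, b, c$ equals $i$ or $i+1$) produces, in each instance, a strict triple in $l$ by substituting the partner index, again contradicting condition (b). Iterating, each swap contributes a factor of $t$ to $R$, and the total number of swaps required to reach $\lambda$ is the inversion statistic matching the exponent of $t$ in the lemma.

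The main obstacle is the structural preservation of the almost-partition property under the swap, specifically condition (b). Without the no-strictly-increasing-triple restriction, intermediate sequences could develop ascents of size greater than one, the swap identity would no longer apply, and the induction would break; condition (b) is precisely what guarantees every ascent remains a $+1$ jump both before and after each swap. The swap identity itself is a standard $t$-antisymmetrization argument from Hall-Littlewood theory, but requires careful bookkeeping of which subfactors of $\Pi$ retain $s_i$-invariance.
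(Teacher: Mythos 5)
Your proof is correct, but it takes a different route from the paper: the paper disposes of this lemma by citing the straightening lemma of Morris \cite{Mo} (see also \cite{L}), which gives a Hall--Littlewood expansion of $\PP_l$ for an \emph{arbitrary} tuple $l\in\N_0^N$ and then observing that it collapses to a single term for almost-partitions, whereas you give a self-contained argument from scratch. Your two ingredients both check out. The swap identity is sound: with $l_{i+1}=l_i+1$ and $B:=x_i^{l_i}x_{i+1}^{l_i}\prod_{j\neq i,i+1}x_j^{l_j}$ one indeed has $x^l\Pi - t\,x^{l'}\Pi = B\,\Pi^{(i)}(x_{i+1}-tx_i)(x_i-tx_{i+1})/(x_i-x_{i+1})$, which is $s_i$-antisymmetric, so the symmetrization over $S_N$ kills it; and since $v_l$ depends only on the multiplicities, $\PP_l=t\,\PP_{l'}$. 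The structural step --- that both conditions of the almost-partition definition survive the swap --- is exactly the point where a naive induction would break, and your case analysis is right: a violation of condition (a) at position $i-1$ or $i+1$ after the swap forces a strictly increasing triple $l_{i-1}<l_i<l_{i+1}$ or $l_i<l_{i+1}<l_{i+2}$ in $l$, and any strict triple in $l'$ involving positions $i$ or $i+1$ converts to one in $l$ (the case using both positions is impossible since $l'_i>l'_{i+1}$). Each swap removes exactly one pair $i<j$ with $l_i<l_j$, so the exponent you obtain is $\#\{i<j: l_i<l_j\}$; note this matches how $\inv$ is actually used in the proof of Lemma \ref{lem:combinatorial} (the displayed definition of $\inv$ in the paper has the inequality reversed, which would make the statement false already for $l=\lambda$, so your reading is the intended one). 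What your approach buys is a complete, elementary proof with no external dependency and no need for the general (and more complicated) expansion of \cite{Mo}; what the citation buys the paper is brevity. Either is acceptable here.
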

\begin{proof}
This is a very special case of \cite[Lemma]{Mo}. See also \cite{L}.
In fact, the Lemma in \cite{Mo} gives a Hall-Littlewood polynomial expansion for $\PP_{l}(x_1, \ldots, x_N; t)$ and any $N$-tuple $l$ of nonnegative integers, being the definition $(\ref{Pdef})$ extended in the obvious way. Such expansion is more complicated in the general case, but it simplifies greatly for almost-partitions.
\end{proof}

Next let $\lambda\in\Y(N)$ be a partition with $\ell(\lambda) = k\leq N$, and $I\subseteq\{1, 2, \ldots, k\}$.
Consider the almost-partition $p\in\N_0^N$, defined from $(\lambda, I)$, via
\begin{equation*}
p := (\lambda_1 - \mathbf{1}_{\{1\in I\}}, \ldots, \lambda_k - \mathbf{1}_{\{k\in I\}}, \underbrace{0, \ldots, 0}_{N-k \textrm{ zeroes}}).
\end{equation*}
We write $p = \Pi(\lambda, I)$ for this dependence.
We denote the partition linked to $p = \Pi(\lambda, I)$ by $\pi(\lambda, I)$.
Some evident relations are
\begin{equation}\label{eqn:evident}
|I| = |\lambda| - |\pi(\lambda, I)|, \hspace{.5in} v_{\Pi(\lambda, I)}(t) = v_{\pi(\lambda, I)}(t).
\end{equation}

\begin{lem}\label{lem:verticalstrip}
Let $\lambda, \mu\in\Y(N)$ be arbitrary, and $\ell(\lambda) = k\leq N$.
There exists $I\subseteq\{1, \ldots, k\}$ such that $\pi(\lambda, I) = \mu$ if and only if $\mu\subseteq\lambda$ and $\lambda/\mu$ is a vertical strip.
\end{lem}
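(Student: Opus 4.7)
The plan is to reduce the whole statement to a single counting identity for the conjugate partition of $\mu = \pi(\lambda, I)$. Recall that by definition $\mu$ is the decreasing rearrangement of the $N$-tuple
\[
p = \Pi(\lambda, I) = (\lambda_1 - \mathbf{1}_{\{1\in I\}}, \ldots, \lambda_k - \mathbf{1}_{\{k\in I\}}, 0, \ldots, 0).
\]
For any $j \geq 1$ the trailing zeros of $p$ contribute nothing to $\mu_j'$, so we can compute
\[
\mu_j' \;=\; \#\{a\in[N]:p_a\ge j\} \;=\; \#\{a\notin I:\lambda_a\ge j\} + \#\{a\in I:\lambda_a\ge j+1\} \;=\; \lambda_j' - \#\{a\in I:\lambda_a = j\}.
\]
This single identity drives both directions of the lemma.

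For the forward direction, assume $\pi(\lambda,I)=\mu$. The formula immediately gives $\lambda_j' - \mu_j' = \#\{a \in I : \lambda_a = j\} \geq 0$, hence $\mu\subseteq\lambda$. Using moreover that $\lambda_j' - \lambda_{j+1}' = \#\{a:\lambda_a=j\}$, we obtain
\[
\mu_j' - \lambda_{j+1}' \;=\; \#\{a:\lambda_a=j\} - \#\{a\in I:\lambda_a=j\} \;=\; \#\{a\notin I:\lambda_a=j\} \;\geq\; 0,
\]
which is precisely the transposed characterization of $\lambda/\mu$ being a vertical strip (equivalently, $\lambda'/\mu'$ is a horizontal strip, i.e.\ $\lambda'_j \geq \mu'_j \geq \lambda'_{j+1}$).

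For the backward direction, suppose $\mu\subseteq\lambda$ and $\lambda/\mu$ is a vertical strip, so $\lambda_i-\mu_i\in\{0,1\}$ for every $i\geq 1$. Define
\[
I := \{\,i\in\{1,\ldots,k\} : \lambda_i > \mu_i\,\} \subseteq \{1,\ldots,k\}.
\]
Then for $i\leq k$ we have $p_i = \lambda_i - \mathbf{1}_{\{i\in I\}} = \mu_i$ (this is where the vertical strip hypothesis enters, guaranteeing the decrement is at most one), while for $i>k$ we have $\lambda_i=0$ and $\mu_i\leq\lambda_i=0$, so $p_i=0=\mu_i$. Therefore $p = (\mu_1,\ldots,\mu_N)$ is already weakly decreasing, and the partition linked to $p$ is exactly $\mu$, i.e.\ $\pi(\lambda,I)=\mu$.

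The proof is essentially bookkeeping once the right counting identity is written down; the only minor subtlety is to select the correct equivalent form of the vertical-strip condition (the transposed one, $\mu'_j \geq \lambda'_{j+1}$) so that the identity for $\mu_j'$ matches the condition directly. Issues about $p$ being an almost-partition — needed for $\pi(\lambda,I)$ to be well defined in the first place — are a routine case analysis on pairs $(i,i+1)$ and are not a real obstacle here; indeed in the backward construction $p$ is already a genuine partition.
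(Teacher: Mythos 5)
Your proof is correct and follows essentially the same route as the paper: both arguments reduce to the identity $\lambda_j' - \mu_j' = \#\{a\in I : \lambda_a = j\}$ (the paper obtains it by solving a recursion for the multiplicities $m_j(\mu)$, you by counting $\#\{a : p_a \ge j\}$ directly) and then read off the interlacing $\lambda_j' \ge \mu_j' \ge \lambda_{j+1}'$ characterizing vertical strips. Your converse is marginally more explicit in exhibiting $I = \{i : \lambda_i > \mu_i\}$ and checking $p = \mu$ outright, but this is the same construction the paper describes via the condition $|I \cap X_j| = \lambda_j' - \mu_j'$.
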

\begin{proof}
Let $d := \lambda_1$, and for each $1\leq j\leq d$, let $X_j := \{ i : \lambda_i = j\}$.
Then $|X_j| = m_j(\lambda)$. Also, for $I\subseteq\{1, \ldots, k\}$, let $I_j := I \cap X_j$ and $i_j := |I_j|$, for each $1\leq j\leq d$, so that $0 \leq i_j \leq m_j(\lambda)$ and $i_1 + \ldots + i_d = |I|$.
If $\mu = \pi(\lambda, I)$, then the construction of the map $\pi$ implies
\begin{align*}
m_d(\mu) &= m_d(\lambda) - i_d,\\
m_{d-1}(\mu) &= m_{d-1}(\lambda) - i_{d-1} + i_d,\\
&\cdots\\
m_1(\mu) &= m_1(\lambda) - i_1 + i_2.
\end{align*}
From $m_d(\lambda) = \lambda_d'$, $m_i(\lambda) = \lambda_i' - \lambda_{i+1}'$ for $i < d$, and the analogous relations for $\mu$, we deduce
\begin{align*}
i_d &= \lambda_d' - \mu_d',\\
i_{d-1} &= \lambda_{d-1}' - \mu_{d-1}',\\
&\cdots\\
i_1 &= \lambda_1' - \mu_1'.
\end{align*}
The bounds $0\leq i_j \leq m_j(\lambda) = \lambda_j' - \lambda_{j+1}'$ then yield the interlacing relation
\begin{equation}\label{interlacing}
\lambda_1' \geq \mu_1' \geq \lambda_2' \geq \dots \geq \lambda_{d-1}' \geq \mu_{d-1}' \geq \lambda_d'.
\end{equation}
In particular, this implies that if $\mu = \pi(\lambda, I)$ for some $I \subseteq\{1, \ldots, k\}$, then $\mu\subseteq\lambda$ and $\lambda/\mu$ is a vertical strip.
Conversely, if $\lambda/\mu$ is a vertical strip, then the interlacing relation $(\ref{interlacing})$ is satisfied. Then the previous argument shows that any $I\subseteq\{1, \ldots, k\}$ with $|I \cap X_j| = \lambda_j' - \mu_j'$ gives $\pi(\lambda, I) = \mu$.
\end{proof}

\begin{lem}\label{lem:combinatorial}
Let $\lambda\in\Y(N)$ be arbitrary with $\ell(\lambda) = k\leq N$, and let $\mu\in\Y(N)$ be such that $\lambda/\mu$ is a vertical strip.
Then
\begin{equation}\label{eqn:toproveproduct}
\sum_{\substack{I\subseteq\{1, \ldots, k\} : \\ \pi(\lambda, I) = \mu}}t^{\inv(\Pi(\lambda, I))} = \prod_{i \geq 1}
{\lambda_i' - \lambda_{i+1}' \brack \mu_i' - \lambda_{i+1}'}.
\end{equation}
\end{lem}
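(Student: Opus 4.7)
The plan is to decompose the sum over $I$ into independent sums, one per plateau of $\lambda$, each of which collapses to a Gaussian binomial.

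From the proof of Lemma~\ref{lem:verticalstrip}, a subset $I\subseteq\{1,\ldots,k\}$ satisfies $\pi(\lambda,I)=\mu$ if and only if, for every $j\ge 1$, the intersection $I_j:=I\cap X_j$ with the plateau $X_j:=\{a:\lambda_a=j\}$ has cardinality $i_j:=\lambda_j'-\mu_j'$; recall $|X_j|=m_j(\lambda)=\lambda_j'-\lambda_{j+1}'$. Thus the sum splits as
\[
\sum_{I:\pi(\lambda,I)=\mu} t^{\inv(\Pi(\lambda,I))}
= \sum_{(I_j)_{j\ge 1}}\; t^{\inv(\Pi(\lambda,I))},
\]
where the outer sum runs over all independent choices of $I_j\subseteq X_j$ with $|I_j|=i_j$.

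Next I would factor the statistic itself. The indices $1,\ldots,k$ are arranged into consecutive blocks $X_d,X_{d-1},\ldots,X_1$ (with $d=\lambda_1$), followed by $N-k$ trailing zeros, and the entry $l_a$ of $l:=\Pi(\lambda,I)$ equals $j$ or $j-1$ when $a\in X_j$, and $0$ when $a>k$. For any $a<b$ lying in different blocks (or with $b>k$), writing $a\in X_j$ and $b\in X_{j'}$ with $j>j'$, one has $l_a\ge j-1\ge j'\ge l_b$; consequently no such pair contributes to $\inv$. Hence $\inv(\Pi(\lambda,I))=\sum_{j\ge 1}\inv(l|_{X_j})$, and the full sum factors as $\prod_j \sum_{I_j\subseteq X_j,\,|I_j|=i_j} t^{\inv(l|_{X_j})}$. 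Each inner sum is a classical Gaussian-binomial identity: placing $i_j$ marked positions (those with $l_a=j-1$) among the $m_j$ positions of $X_j$ and weighting by the inversion statistic of the resulting $0/1$-sequence yields ${m_j\brack i_j}_t$. Using the symmetry ${m_j\brack i_j}_t={m_j\brack m_j-i_j}_t$ together with $m_j-i_j=\mu_j'-\lambda_{j+1}'$ gives the stated right-hand side $\prod_{j\ge 1}{\lambda_j'-\lambda_{j+1}'\brack \mu_j'-\lambda_{j+1}'}$.

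The substantive obstacle is the between-block vanishing claim. It rests on the precise block structure of $\Pi(\lambda,I)$ and on interpreting $\inv$ as an ascent count---the reading forced by consistency with Lemma~\ref{lem:straight}. Everything else is routine bookkeeping with standard $q$-analog identities.
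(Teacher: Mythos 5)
Your proposal is correct and follows essentially the same route as the paper's proof: restrict to subsets with prescribed intersection sizes $|I_j|=\lambda_j'-\mu_j'$ via Lemma \ref{lem:verticalstrip}, observe that the statistic $\inv(\Pi(\lambda,I))$ receives no contributions from pairs in distinct plateaus of $\lambda$ and hence factors over the blocks $X_j$, and evaluate each block sum as a Gaussian binomial (the paper does this via $h_s(1,t,\ldots,t^{k-s})$, you via the equivalent $0/1$-word inversion generating function). Your reading of $\inv$ as the count of out-of-order pairs is indeed the one forced by consistency with Lemma \ref{lem:straight}, so no gap there.
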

\begin{proof}
We begin with the simplest case, which is $\lambda = (a^k) = (\underbrace{a, \ldots, a}_{k\textrm{ times}})$, for some $a \geq 1$.
Then given any $I \subseteq \{1, 2, \ldots, k\}$, say $I = \{i_1 < \ldots < i_s\}$, we clearly have $a' := \Pi(\lambda, I) = (a - \mathbf{1}_{\{1\in I\}}, \ldots, a - \mathbf{1}_{\{k\in I\}}, 0^{N-k})$, and $\{(i, j) : i < j, \ a'_i > a_j'\} = \{(i, j) : i < j, \ i\in I, \ j \notin I\}$. Thus
\begin{equation*}
\inv(a') = \inv(\Pi(\lambda, I)) = \sum_{r=1}^s{(k - i_r + r - s)}.
\end{equation*}
Next, the only $\mu\in\Y(N)$ such that $\lambda/\mu$ is a vertical strip are those of the form $\mu = (a, \ldots, a, a-1, \ldots, a-1)$. Let us say that $\mu$ has $(k - s)$ entries that are $a$ and $s$ entries that are $a-1$. Then the sets $I$ such that $\pi(\lambda, I) = \mu$ are exactly those of size $s$; it follows that
\begin{equation*}
\sum_{I : \pi(\lambda, I) = \mu}t^{\inv(\Pi(\lambda, I))} = \sum_{1\leq i_1 < \ldots < i_s \leq k}{t^{\sum_{r=1}^s{(k - i_r + r - s)}}} = \sum_{0\leq j_1 \leq \ldots \leq j_s \leq k-s}{t^{\sum_{r=1}^s{(k - j_r - s)}}}.
\end{equation*}
The latter sum equals $h_s(1, t, \ldots, t^{k-s})$, where $h_r$ is the $r$-th complete homogeneous symmetric polynomial. There is an explicit formula for this evaluation, see e.g. \cite[Ch. I.3, Ex. 1]{M}, and it yields the desired result:
\begin{equation*}
\sum_{I : \pi(\lambda, I) = \mu}t^{\inv(\Pi(\lambda, I))} = {k \brack s}.
\end{equation*}

For a general partition $\lambda\in\Y(N)$ of length $k$, let us use the notation of the previous lemma.
Let $d := \lambda_1$, and for each $1\leq j\leq d$, let $X_j := \{ i : \lambda_i = j\}$.
Then $|X_j| = m_j(\lambda)$. Also, for $I\subseteq\{1, \ldots, k\}$, let $I_j := I \cap X_j$ and $i_j := |I_j|$, for each $1\leq j\leq d$, so that $0 \leq i_j \leq m_j(\lambda)$ and $i_1 + \ldots + i_d = |I|$.

Denote also $a = (a_1, \ldots, a_N) := \Pi(\lambda, I) = (\lambda_1 - \mathbf{1}_{\{1\in I\}}, \ldots, \lambda_k - \mathbf{1}_{\{k\in I\}}, 0^{N-k})$.
It is clear that if $(i, j)$ is a pair such that $1\leq i < j \leq N$, $a_i < a_j$, then $i, j\in X_r$ for some $r$.
Thus, if we let $s_j+1$ be the smallest element of $X_j$, so that $X_j = \{s_j + 1, \ldots, s_j + m_j\}$, and $t_j$ the number of involutions needed to transform $(a_{s_j + 1}, \ldots, a_{s_j + m_j})$ into a partition,
it follows that the number of involutions needed to transform $a$ into a partition is the sum $t_1 + \ldots + t_d$.
By definition, this is denoted as $\inv(\Pi(\lambda, I)) = t_1 + \ldots + t_d$.
On the other hand, it is clear that $t_j$ is also the number of involutions needed to transform $(1 - \mathbf{1}_{\{s_j + 1 \in I_j\}}, \ldots, 1 - \mathbf{1}_{\{s_j + m_j\in I_j\}})$ into a partition.
Therefore
\begin{equation}\label{breakinv}
\inv(\Pi(\lambda, I)) = \sum_{j=1}^d{\inv(\Pi((1^{m_j}), I_j - s_j))},
\end{equation}
where $(1^{m_j})$ is the column partition of length $m_j$ and $I_j - s_j := \{i - s_j : i\in I_j\} \subseteq \{1, 2, \ldots, m_j\}$.
	
To proceed, observe that by inspecting the proof of Lemma $\ref{lem:verticalstrip}$, $\pi(\lambda, I) = \mu$ if and only if
\begin{align*}
|I_d| &= \lambda_d' - \mu_d',\\
|I_{d-1}| &= \lambda_{d-1}' - \mu_{d-1}',\\
&\cdots\\
|I_1| &= \lambda_1' - \mu_1'.
\end{align*}
Thus the sum in the left side of $(\ref{eqn:toproveproduct})$ is over those $I \subseteq \{1, \ldots, k\}$ with $|I_j| = \lambda_j' - \mu_j'$.
Combining this observation with $(\ref{breakinv})$ and the case previously considered, we obtain
\begin{equation*}
\sum_{\substack{I\subseteq\{1, \ldots, k\} : \\ \pi(\lambda, I) = \mu}}t^{\inv(\Pi(\lambda, I))} =
\prod_{j=1}^d \sum_{\substack{I_j \subseteq \{1, \ldots, m_j(\lambda)\} \\ |I_j| = \lambda_j' - \mu_j'}}t^{\inv(\Pi((1^{m_j(\lambda)}), I_j - s_j))} = \prod_{j=1}^d
{m_j(\lambda) \brack \lambda_j' - \mu_j'}.
\end{equation*}
Since $m_j(\lambda) = \lambda_j' - \lambda_{j+1}'$, the lemma follows.
\end{proof}

\begin{proof}[Proof of Proposition $\ref{prop:expansionHL}$]
For $\lambda\in\Y(N)$, let $v_{\lambda}(t)$ be the factor in front of the (inhomogeneous) HL polynomials (same formula as in $(\ref{vDef})$).
Then, from the definition $(\ref{Pdef})$, we have
\begin{gather*}
\F_{\lambda}(x_1, \ldots, x_N; t) = v_{\lambda}(t)^{-1}\sum_{w\in S_N}w\left\{ \prod_{i=1}^k{(1 - t^{1-N}x_i^{-1})} \prod_{i=1}^k{x_i^{\lambda_i}}
\prod_{1\leq i<j\leq N}\frac{x_i - tx_j}{x_i - x_j} \right\}\\
= v_{\lambda}(t)^{-1}\sum_{w\in S_N}w\left\{ \sum_{I\subseteq\{1, \ldots, k\}}{(-t^{1-N})^{|I|}} \prod_{i=1}^k{x_i^{\lambda_i - \mathbf{1}_{\{i\in I\}}}}
\prod_{1\leq i<j\leq N}\frac{x_i - tx_j}{x_i - x_j} \right\}\\
= v_{\lambda}(t)^{-1} \sum_{I\subseteq\{1, \ldots, k\}} {(-t^{1-N})^{|I|}} \sum_{w\in S_N} w\left\{ \prod_{i=1}^k{x_i^{\lambda_i - \mathbf{1}_{\{i\in I\}}}}
\prod_{1\leq i<j\leq N}\frac{x_i - tx_j}{x_i - x_j} \right\}\\
= v_{\lambda}(t)^{-1} \sum_{I\subseteq\{1, \ldots, k\}} {(-t^{1-N})^{|I|}}\cdot v_{\Pi(\lambda, I)}(t) \PP_{\Pi(\lambda, I)}(x_1, \ldots, x_N; t).
\end{gather*}

From Lemma $\ref{lem:straight}$, we have
\begin{equation*}
\PP_{\Pi(\lambda, I)}(x_1, \ldots, x_N; t) = t^{\inv(\Pi(\lambda, I))}\PP_{\pi(\lambda, I)}(x_1, \ldots, x_N; t),
\end{equation*}
and therefore, by using also $(\ref{eqn:evident})$, we obtain
\begin{gather*}
\F_{\lambda}(x_1, \ldots, x_N; t) = v_{\lambda}(t)^{-1} \sum_{I\subseteq\{1, \ldots, k\}} (-t^{1-N})^{|I|}\cdot v_{\Pi(\lambda, I)}(t)  t^{\inv(\Pi(\lambda, I))}\PP_{\pi(\lambda, I)}(x_1, \ldots, x_N; t)\\
= v_{\lambda}(t)^{-1} \sum_{\mu} \PP_{\mu}(x_1, \ldots, x_N; t) \sum_{I : \pi(\lambda, I) = \mu}
(-t^{1-N})^{|I|}\cdot v_{\Pi(\lambda, I)}(t)  t^{\inv(\Pi(\lambda, I))}\\
= v_{\lambda}(t)^{-1} \sum_{\mu} \PP_{\mu}(x_1, \ldots, x_N; t) (-t^{1-N})^{|\lambda| - |\mu|} v_{\mu}(t) \sum_{I : \pi(\lambda, I) = \mu}
t^{\inv(\Pi(\lambda, I))}.
\end{gather*}
To finish the proof of the proposition, it then suffices to show
\begin{equation*}
(-t^{1-N})^{|\lambda| - |\mu|} v_{\lambda}(t)^{-1} v_{\mu}(t) \sum_{I : \pi(\lambda, I) = \mu} t^{\inv(\Pi(\lambda, I))} = \tau_{\lambda/\mu}(t; N).
\end{equation*}

From Lemma $\ref{lem:combinatorial}$, the definition of $\tau_{\lambda/\mu}(t; N)$, and the definition $(\ref{vDef})$, the latter equation is equivalent to
\begin{equation}\label{toprove1}
\frac{\phi_{N - \mu_1'}(t)}{\phi_{N - \lambda_1'}(t)}\prod_{i \geq 1}{\frac{\phi_{\mu_i' - \mu_{i+1}'}(t)}{\phi_{\lambda_i' - \lambda_{i+1}'}(t)}} \prod_{i \geq 1}\frac{\phi_{\lambda_i' - \lambda_{i+1}'}(t)}{\phi_{\mu_i' - \lambda_{i+1}'}(t)\phi_{\lambda_i' - \mu_i'}(t)} = {N - \mu_1' \brack \lambda_1' - \mu_1'}\prod_{i \geq 1}{\mu_i' - \mu_{i+1}' \brack \mu_i' - \lambda_{i+1}'},
\end{equation}
which we now check.
On the left side of $(\ref{toprove1})$, we can cancel the factors $\phi_{\lambda_i' - \lambda_{i+1}'}(t)$, $i \geq 1$, from the numerator and denominator.
Next, by using $\prod_{i \geq 1}\phi_{\lambda_i' - \mu_i'}(t) = \phi_{\lambda_1' - \mu_1'}(t)\prod_{i \geq 1}\phi_{\lambda_{i+1}' - \mu_{i+1}'}(t)$, the left side of $(\ref{toprove1})$ can be written as
\begin{equation*}
\frac{\phi_{N - \mu_1'}(t)}{\phi_{N - \lambda_1'}(t)\phi_{\lambda_1' - \mu_1'}(t)}\prod_{i \geq 1} \frac{\phi_{\mu_i' - \mu_{i+1}'}(t)}{\phi_{\mu_i' - \lambda_{i+1}'}(t)\phi_{\lambda_{i+1}' - \mu_{i+1}'}(t)} = {N - \mu_1' \brack \lambda_1' - \mu_1'}\prod_{i \geq 1}{\mu_i' - \mu_{i+1}' \brack \mu_i' - \lambda_{i+1}'}.
\end{equation*}
\end{proof}

\subsection{A degeneration of interpolation Macdonald polynomials}\label{sec:Hall2}

Let us begin with the Hall-Littlewood degeneration of the interpolation Macdonald polynomials, see \cite{Ol}.

\begin{prop}
\begin{enumerate}
	\item For any $\mu\in\Y(N)$, there exists a limit
\begin{equation*}
\Fcal_{\mu}(x_1, \ldots, x_N; t) = \lim_{q \rightarrow 0} I_{\mu | N}(x_1, \ldots, x_N; 1/q, 1/t),
\end{equation*}
which is a polynomial.
	\item One has the following combinatorial formula
\begin{equation*}
\Fcal_{\mu}(x_1, \ldots, x_N; t) = \sum_{T\in\Tab(\mu, N)} \psi_T(t)\prod_{(i, j)\in\mu}{(x_{T(i, j)} - \delta_{j, 1}t^{T(i, j) - N - i + 1})},
\end{equation*}
where $\Tab(\mu, N)$ is the set of semistandard Young tableaux of shape $\mu$, filled with numbers in the set $\{1, 2, \ldots, N\}$, and each $\psi_T(t)$ is the weight of the tableau $T$ that shows up in the combinatorial formula for Hall-Littlewood polynomials; see \cite[Ch. III, (5.9')]{M}.
\end{enumerate}
\end{prop}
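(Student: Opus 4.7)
The plan is to deduce the proposition from Okounkov's combinatorial tableau formula for the interpolation (shifted) Macdonald polynomials, then to pass to the limit $q \to 0$ term by term. In our notation, one uses the translation $I_{\mu|N}(x_1,\ldots,x_N;q,t) = P^*_\mu(x_1, x_2/t, \ldots, x_N/t^{N-1}; 1/q, 1/t)$ together with Okounkov's expansion for $P^*_\mu$ to obtain a formula of the shape
$$I_{\mu|N}(x_1,\ldots,x_N;q,t) \;=\; \sum_{T\in\Tab(\mu,N)} \widetilde{\psi}_T(q,t)\prod_{s=(i,j)\in\mu}\bigl(x_{T(s)}-c(s,T;q,t)\bigr),$$
where $\widetilde{\psi}_T(q,t)$ is a (rescaling of the) Macdonald tableau weight and $c(s,T;q,t)$ is an explicit shift whose $q$-exponent is controlled by the coarm $a'(s)=j-1$. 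Part~(1) will then follow from part~(2): a finite sum whose every summand admits a limit converges, and the limit is a polynomial.

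To identify the limit I would analyze the two ingredients separately. First, the weight: after the substitution $(q,t)\mapsto(1/q,1/t)$, the Macdonald weight $\widetilde{\psi}_T(1/q,1/t)$ converges as $q\to 0$ to the Hall-Littlewood tableau weight $\psi_T(t)$ featured in \cite[Ch.~III,(5.9')]{M}. This is the standard specialization $M_\mu(\cdot;q,t)\xrightarrow{q\to 0}\PP_\mu(\cdot;t)$ carried out at the level of the tableau expansion, and it is essentially a routine verification from the closed form of $\widetilde{\psi}_T$. Second, the shift: $c(s,T;q,t)$ is (up to powers of $t$ introduced by the $x_k\mapsto x_k/t^{k-1}$ rescaling) a monomial with $q$-exponent equal to $a'(s)=j-1$. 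Upon $(q,t)\mapsto(1/q,1/t)$ and $q\to 0$, this shift vanishes whenever $j\geq 2$, leaving just the bare factor $x_{T(s)}$; for $j=1$ the shift carries no $q$ and survives in the limit.

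The final step is to check that the surviving shift for a box $s=(i,1)$ in the first column equals $t^{T(i,1)-N-i+1}$. This is where the main bookkeeping sits: one must carry the rescaling factor $1/t^{k-1}$ that occurred in the argument of $P^*_\mu$, together with the $t$-part of Okounkov's original shift (expressible in terms of $T(s)$ and $l'(s)=i-1$), through the inversion $t\mapsto 1/t$, and verify that the combined $t$-exponent is $T(i,1)-N-i+1$. Assembling everything with the limiting weight $\psi_T(t)$, one recovers exactly the stated formula for $\Fcal_\mu(x_1,\ldots,x_N;t)$. The main obstacle is precisely this last $t$-power verification, since it requires pinning down Okounkov's conventions and matching them carefully to ours; the existence statement in part~(1) and the simplification to a product over first-column boxes in part~(2) are then automatic once the shift analysis is in place.
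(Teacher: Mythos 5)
Your argument is correct in substance, but it is worth noting that the paper does not actually prove this proposition: its entire proof is the citation ``This is proved in \cite[Lem. 9.2]{Ol}.'' What you propose is the natural direct argument, and it is consistent with the version of Okounkov's combinatorial formula that the paper itself quotes later (in the proof of Lemma $\ref{limitcomb}$), namely $I_{\mu|N}(x_1,\ldots,x_N;q,t)=\sum_{T\in\Tab(\mu,N)}\psi_T(q,t)\prod_{(i,j)\in\mu}(x_{T(i,j)}-q^{1-j}t^{N-1+i-T(i,j)})$. Substituting $(q,t)\mapsto(1/q,1/t)$ turns the shift into $q^{j-1}t^{T(i,j)-N-i+1}$, so $q^{j-1}\to\delta_{j,1}$ as $q\to 0$ and the surviving $t$-exponent is exactly $T(i,j)-N-i+1$; the ``main obstacle'' you flag does check out, and part (1) follows from part (2) since the sum over $\Tab(\mu,N)$ is finite. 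The one step you should make precise is the weight degeneration: a naive limit of $\psi_T(1/q,1/t)$ as $q\to 0$ would produce $\psi_T(t^{-1})$ rather than $\psi_T(t)$. The correct route is to first invoke the invariance $\psi_T(q^{-1},t^{-1})=\psi_T(q,t)$, which follows from $M_{\lambda}(x;q,t)=M_{\lambda}(x;q^{-1},t^{-1})$ and the branching rule, and only then let $q\to 0$ to land on the Hall--Littlewood weight $\psi_T(t)$ of \cite[Ch.~III, (5.9')]{M}. With that point fixed, your proof is complete and arguably more self-contained than the paper's, at the cost of taking Okounkov's tableau formula for $P^*_\mu$ and the translation of conventions as inputs.
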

\begin{proof}
This is proved in \cite[Lem. 9.2]{Ol}.
\end{proof}

Next we degenerate the well known binomial formula for interpolation Macdonald polynomials, \cite{Ok3}, in the limit regime $q\rightarrow 0$.
We need several lemmas.

\begin{lem}\label{limitspec}
For any $\mu\in\Y(N)$,
\begin{equation*}
\lim_{q\rightarrow 0} q^{2n(\mu') + |\mu|}\cdot I_{\mu|N}(q^{-\mu_1}, q^{-\mu_2}t, \ldots, q^{-\mu_N}t^{N-1}; q, t) = t^{n(\mu)}.
\end{equation*}
\end{lem}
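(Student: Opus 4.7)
The plan is to apply the explicit normalization formula \eqref{eqn:alternative}--\eqref{Hexp} directly and then take the limit $q \to 0$. By those formulas,
\begin{equation*}
I_{\mu | N}(q^{-\mu_1}, q^{-\mu_2}t, \ldots, q^{-\mu_N}t^{N-1}; q, t) = C(\mu; q, t) = q^{-2n(\mu') - |\mu|}\, t^{n(\mu)} \prod_{s \in \mu} \bigl(1 - q^{a(s)+1}t^{l(s)}\bigr).
\end{equation*}
Multiplying by $q^{2n(\mu') + |\mu|}$ cancels the prefactor, yielding
\begin{equation*}
q^{2n(\mu') + |\mu|}\cdot I_{\mu|N}(q^{-\mu_1}, q^{-\mu_2}t, \ldots, q^{-\mu_N}t^{N-1}; q, t) = t^{n(\mu)} \prod_{s \in \mu} \bigl(1 - q^{a(s)+1}t^{l(s)}\bigr).
\end{equation*}

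Now I would observe that for every box $s \in \mu$, the arm length satisfies $a(s) \geq 0$, so that $a(s)+1 \geq 1$. Consequently each factor $1 - q^{a(s)+1}t^{l(s)}$ tends to $1$ as $q \to 0$, and the finite product over $s \in \mu$ tends to $1$ as well. Taking the limit gives the desired value $t^{n(\mu)}$.

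There is essentially no obstacle here: this is a one-line consequence of the normalization constant \eqref{Hexp}, which was already quoted in the text. The only thing worth double-checking is the sign convention for $a(s)$ and the fact that $a(s) \geq 0$ for all boxes $s \in \mu$ (not merely $a(s) \geq -1$, which would force us to separately examine boxes where $a(s) = -1$); this is immediate from the definition $a(s) = \mu_i - j$ with $1 \leq j \leq \mu_i$.
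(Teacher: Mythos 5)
Your proof is correct and coincides with the paper's own argument: both apply the normalization $(\ref{eqn:alternative})$--$(\ref{Hexp})$, cancel the power of $q$, and note that each factor $1 - q^{a(s)+1}t^{l(s)}$ tends to $1$ since $a(s)+1 \geq 1$. Your extra remark confirming $a(s) \geq 0$ is a harmless (and correct) sanity check.
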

\begin{proof}
From $(\ref{eqn:alternative})$ and $(\ref{Hexp})$, we have
\begin{equation*}
\begin{gathered}
q^{2n(\mu') + |\mu|}I_{\mu|N}(q^{-\mu_1}, q^{-\mu_2}t, \ldots, q^{-\mu_N}t^{N-1}; q, t)\\
= q^{2n(\mu') + |\mu|}C(\mu; q, t) = t^{n(\mu)}\prod_{s\in\mu}{(1 - q^{a(s)+1}t^{l(s)})}
\end{gathered}
\end{equation*}
and the result follows because $\lim_{q\rightarrow 0}{q^{a(s)+1}} = 0$.
\end{proof}

\begin{lem}\label{limitzeroes}
For any $\mu\in\Y(N)$, we have
\begin{equation*}
\lim_{q\rightarrow 0} q^{-n(\mu')} I_{\mu|N}(0^N; 1/q, 1/t) = (-t^{1-N})^{|\mu|}{N \brack \mu_1'}\prod_{i \geq 1}{\mu_i' \brack \mu_{i+1}'}.
\end{equation*}
\end{lem}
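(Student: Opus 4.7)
The plan is to compute the limit directly from the explicit formula $(\ref{evalzeros})$. Substituting $q \to 1/q$ and $t \to 1/t$ in $(\ref{evalzeros})$ yields
\begin{equation*}
I_{\mu|N}(0^N; 1/q, 1/t) = (-1)^{|\mu|}\, q^{n(\mu')}\, t^{-2n(\mu)}\prod_{s\in\mu}\frac{1 - q^{-a'(s)}t^{-(N-l'(s))}}{1 - q^{-a(s)}t^{-(l(s)+1)}}.
\end{equation*}
For each $s\in\mu$, when $a'(s) > 0$ the numerator factor behaves like $-q^{-a'(s)}t^{-(N-l'(s))}$ to leading order as $q\to 0$, and similarly for the denominator when $a(s) > 0$. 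Using the standard identities $\sum_{s\in\mu}a'(s) = \sum_{s\in\mu}a(s) = n(\mu')$, the $q$-powers produced by the product cancel, so the ratio tends to a finite nonzero constant. Multiplying by $q^{-n(\mu')}$ then precisely cancels the external factor $q^{n(\mu')}$, giving a well-defined limit.

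The limit splits into four groups according to whether $a'(s)$ or $a(s)$ vanishes. The boxes with $a'(s) = 0$ are those in the first column (there are $\mu_1'$ of them), contributing $\prod_{i=1}^{\mu_1'}(1 - t^{-(N-i+1)})$ to the numerator. The boxes with $a(s) = 0$ are the rightmost boxes of each nonempty row; grouping rows by common length $a$, the values $l(s)+1$ on such boxes run through $\{1, \ldots, m_a(\mu)\}$, so the denominator contribution is $\prod_{a \geq 1}\prod_{j=1}^{m_a(\mu)}(1 - t^{-j})$. The remaining boxes contribute only signs and powers of $t$, controlled by the sums $\sum_{s : a'(s) > 0}(N-l'(s))$ and $\sum_{s : a(s) > 0}(l(s)+1)$, which are straightforward to evaluate using $\sum_s l'(s) = \sum_s l(s) = n(\mu)$.

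The final step is bookkeeping. Converting every factor of the form $1 - t^{-k}$ into $-t^{-k}(1 - t^k)$, the two ``main'' products become $\phi_N(t)/\phi_{N-\mu_1'}(t)$ and $\prod_{a\geq 1}\phi_{m_a(\mu)}(t)$ up to explicit signs and powers of $t$. Using $\ell(\mu) = \mu_1'$, all signs collapse to $(-1)^{|\mu|}$; all appearances of $n(\mu)$, $\binom{\mu_1'}{2}$ and $\sum_a\binom{m_a+1}{2}$ cancel; and the surviving $t$-power reduces to $t^{(1-N)|\mu|}$. Finally, using $m_i(\mu) = \mu_i' - \mu_{i+1}'$ and a telescoping product, one checks the identity
\begin{equation*}
\frac{\phi_N(t)}{\phi_{N - \mu_1'}(t)\prod_{i \geq 1}\phi_{m_i(\mu)}(t)} = {N \brack \mu_1'}\prod_{i \geq 1}{\mu_i' \brack \mu_{i+1}'},
\end{equation*}
matching the claimed right-hand side. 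The main obstacle is the sign/power-of-$t$ bookkeeping, which is lengthy but mechanical once the four groups of contributions are separated; no deeper input is needed beyond $(\ref{evalzeros})$ and the identities $\sum_s a(s) = \sum_s a'(s) = n(\mu')$ and $\sum_s l(s) = \sum_s l'(s) = n(\mu)$.
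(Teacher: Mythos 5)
Your proposal is correct and follows essentially the same route as the paper: both start from $(\ref{evalzeros})$ with $(q,t)\mapsto(1/q,1/t)$, use $\sum_s a(s)=\sum_s a'(s)=n(\mu')$ and $\sum_s l(s)=\sum_s l'(s)=n(\mu)$ to cancel the $q$- and $t$-powers, and then identify the surviving factors with the first-column boxes and the row-end boxes to produce the $t$-binomial coefficients. The only difference is cosmetic: the paper normalizes each factor to the form $1-q^{a'(s)}t^{N-l'(s)}$ before taking $q\to 0$, whereas you take the limit first and convert $1-t^{-k}$ to $-t^{-k}(1-t^k)$ afterwards.
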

\begin{proof}
From $(\ref{evalzeros})$, with $(q, t)$ replaced by $(1/q, 1/t)$, we obtain
\begin{eqnarray*}
q^{-n(\mu')} I_{\mu|N}(0^N; 1/q, 1/t) &=& (-1)^{|\mu|}t^{-2n(\mu)}\prod_{s\in\mu} { \frac{q^{-a'(s)}t^{l'(s) - N} - 1}{q^{-a(s)}t^{-l(s)-1} - 1} }\\
&=& (-1)^{|\mu|}t^{-2n(\mu)}\prod_{s\in\mu} { t^{1-N + l'(s) + l(s)} \frac{1- q^{a'(s)}t^{N - l'(s)}}{1 - q^{a(s)}t^{l(s) + 1}} }\\
&=& (-t^{1-N})^{|\mu|}\prod_{s\in\mu} { \frac{1- q^{a'(s)}t^{N - l'(s)}}{1 - q^{a(s)}t^{l(s) + 1}} },
\end{eqnarray*}
where the middle equality holds because of the equalities $\sum_{s\in\mu}{a(s)} = \sum_{s\in\mu}{a'(s)}$, whereas the last one holds because $\sum_{s\in\mu}{l(s)} = \sum_{s\in\mu}{l'(s)} = n(\mu)$.
Therefore
\begin{equation}\label{eqn:limitzero}
\lim_{q\rightarrow 0} q^{-n(\mu')} I_{\mu|N}(0^N; 1/q, 1/t) = (-t^{1-N})^{|\mu|}\prod_{s\in\mu}\frac{1 - t^{N - l'(s)}\mathbf{1}_{\{a'(s) = 0\}}}{1 - t^{l(s)+1}\mathbf{1}_{\{a(s) = 0\}}}.
\end{equation}

The coarm length $a'(s)$ vanishes if and only if $s = (i, 1)$, for $i = 1, \ldots, \mu_1'$. Therefore $\prod_{s\in\mu}{(1 - t^{N - l'(s)}\mathbf{1}_{\{a'(s) = 0\}})} = \prod_{i=1}^{\mu_1'}{(1 - t^{N - i + 1})} = (1-t)^{\mu_1'}\cdot [N]!/[N - \mu_1']!$.

On the other hand, the arm length $a(s)$ vanishes if and only if $s = (i, \mu_i)$, for $i = 1, \ldots, \mu_1'$.
Let $\{1, 2, \ldots, \mu_1'\} = X_1\sqcup X_2\sqcup\dots$, where $X_k := \{1 \leq i\leq \mu_1' : \mu_i = k \}$; observe that $|X_k| = m_k(\mu) = \mu_k' - \mu_{k+1}'$.
Clearly $\prod_{s = (i, j) : i\in X_k}{(1 - t^{l(s)+1}\mathbf{1}_{\{a(s) = 0\}})} = \prod_{s = (i, \lambda_i) : i\in X_k}{(1 - t^{l(s)+1})} = (1-t)(1-t^2)\cdots (1-t^{|X_k|}) = (1-t)^{|X_k|}\cdot [|X_k|]! = (1-t)^{\mu_k' - \mu_{k+1}'}\cdot [\mu_k' - \mu_{k+1}']!$.
Then $\prod_{s\in\mu}{(1 - t^{l(s)+1}\mathbf{1}_{\{a(s)=0\}})} = (1-t)^{\mu_1'}\prod_{k\geq 1}{[\mu_k' - \mu_{k+1}']!}$.

Therefore, from $(\ref{eqn:limitzero})$ and the previous simplifications:
\begin{equation*}
\lim_{q\rightarrow 0} q^{-n(\mu')} I_{\mu|N}(0^N; 1/q, 1/t) = \frac{(-t^{1-N})^{|\mu|}\cdot [N]!}{[N-\mu_1']!\prod_{k\geq 1}{[\mu_k' - \mu_{k+1}']!}} = (-t^{1-N})^{|\mu|}{N \brack \mu_1'}\prod_{i \geq 1}{\mu_i' \brack \mu_{i+1}'}.
\end{equation*}
\end{proof}

For the next limiting statement, Lemma $\ref{limitcomb}$, we need a preparatory lemma.

\begin{lem}\label{lem:bound1}
Let $\mu\subseteq\lambda$ be two partitions of length $\leq N$. Let $T$ be a semistandard Young tableau of shape $\mu$ and filled with numbers in $\{1, 2, \ldots, N\}$. Then
\begin{equation}\label{ineq1}
n(\mu') + n(\lambda') + |\mu| - \sum_{(i, j)\in\mu}{\max(\lambda_{T(i, j)},\ j - 1)} \geq 0,
\end{equation}
and equality holds if and only if $\lambda/\mu$ is a vertical strip, and $\lambda_{T(i, j)} = \lambda_i$ for all $(i, j)\in\mu$.
\end{lem}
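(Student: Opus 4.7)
The plan is to split the inequality into two successive bounds, each of which gives one of the equality conditions.

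First, I would exploit the fact that in a semistandard Young tableau the entries in each column are strictly increasing, so $T(i,j) \geq i$ for every box $(i,j) \in \mu$. Because $\lambda$ is weakly decreasing, this yields $\lambda_{T(i,j)} \leq \lambda_i$. Moreover, for a box $(i,j) \in \mu$ we have $j \leq \mu_i \leq \lambda_i$, so $j-1 < \lambda_i$ and consequently
\begin{equation*}
\max(\lambda_{T(i,j)},\, j-1) \leq \max(\lambda_i,\, j-1) = \lambda_i.
\end{equation*}
Summing over $(i,j)\in\mu$ gives
\begin{equation*}
\sum_{(i,j)\in\mu} \max(\lambda_{T(i,j)},\, j-1) \leq \sum_{(i,j)\in\mu} \lambda_i = \sum_{i\geq 1} \mu_i \lambda_i.
\end{equation*}

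Next, I would reduce the inequality \eqref{ineq1} to a purely numerical statement about $\mu$ and $\lambda$ by showing
\begin{equation*}
n(\mu') + n(\lambda') + |\mu| - \sum_{i\geq 1} \mu_i \lambda_i = \tfrac{1}{2} \sum_{i\geq 1} (\lambda_i - \mu_i)(\lambda_i - \mu_i - 1).
\end{equation*}
This follows from the standard identity $n(\nu') = \sum_i \binom{\nu_i}{2}$: expanding $n(\mu') + n(\lambda') + |\mu|$ and completing the square gives exactly the right-hand side. Since $\lambda_i - \mu_i \geq 0$ is an integer, each summand $(\lambda_i - \mu_i)(\lambda_i - \mu_i - 1)$ is a nonnegative integer, so the whole quantity is $\geq 0$, and it vanishes if and only if $\lambda_i - \mu_i \in \{0,1\}$ for every $i$, i.e.\ $\lambda/\mu$ is a vertical strip.

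Finally, I would trace back the equality case. The bound $\max(\lambda_{T(i,j)}, j-1) \leq \lambda_i$ is an equality exactly when $\lambda_{T(i,j)} = \lambda_i$: indeed, if $\lambda_{T(i,j)} < \lambda_i$, then because $j-1 < \lambda_i$ we get $\max(\lambda_{T(i,j)}, j-1) < \lambda_i$. Combined with the vertical strip condition from the numerical identity, we recover the two conditions asserted in the lemma. The only conceptual step is the observation $T(i,j)\geq i$ together with the monotonicity of $\lambda$, which is quite elementary; the algebraic identity for $n(\mu') + n(\lambda') + |\mu| - \sum_i \mu_i\lambda_i$ is straightforward, so I do not anticipate any serious obstacle here.
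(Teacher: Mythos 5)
Your proof is correct, and it reaches the conclusion by a noticeably shorter route than the paper's. The paper also starts from the column-strictness bound $T(i,j)\geq i$, hence $\lambda_{T(i,j)}\leq\lambda_i$, and also reduces everything to a row-by-row numerical inequality via $n(\nu')=\sum_i\binom{\nu_i}{2}$; but to control $\sum_{j=1}^{\mu_i}\max(\lambda_{T(i,j)},j-1)$ it introduces the crossover index $a_i$ at which the maximum switches from $\lambda_{T(i,j)}$ to $j-1$, proves the sharper bound $a_i\lambda_i+\binom{\mu_i}{2}-\binom{a_i}{2}$, and then finishes by completing the square in $a_i$ and using $\lambda_i\geq\mu_i\geq a_i$. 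Your observation that $j-1\leq\mu_i-1<\lambda_i$ for every box $(i,j)\in\mu$, so that $\max(\lambda_{T(i,j)},j-1)\leq\lambda_i$ uniformly with equality exactly when $\lambda_{T(i,j)}=\lambda_i$, makes the crossover analysis unnecessary: the whole inequality collapses to the identity $\binom{\mu_i}{2}+\binom{\lambda_i}{2}+\mu_i-\mu_i\lambda_i=\binom{\lambda_i-\mu_i}{2}$, whose nonnegativity and vanishing condition are immediate. The equality analysis is also cleanly separated into the two independent conditions of the lemma (one per inequality in the chain), whereas the paper's version entangles them through $a_i$. The only price of the cruder bound is that it is not tight term by term, but since the final lower bound is the same nonnegative quantity $\binom{\lambda_i-\mu_i}{2}$ in both arguments, nothing is lost.
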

\begin{proof}
By using
\begin{gather*}
n(\mu') = \sum_{i=1}^N{ {\mu_i \choose 2} }, \hspace{.1in} n(\lambda') = \sum_{i=1}^N{ {\lambda_i \choose 2} },
\end{gather*}
the statement of the lemma is equivalent to
\begin{equation}\label{ineq2}
{\mu_i \choose 2} + {\lambda_i \choose 2} + \mu_i - \sum_{j=1}^{\mu_i}{\max(\lambda_{T(i, j)}, j - 1)} \geq 0 \ \forall i = 1, 2, \ldots, N,
\end{equation}
with equality if and only if $\lambda_i \in \{\mu_i, \mu_i + 1\}$ and $\lambda_{T(i, j)} = \lambda_i$.

Since $T(i, 1) \leq T(i, 2) \leq \cdots \leq T(i, \mu_i)$, we have $\lambda_{T(i, 1)} \geq \lambda_{T(i, 2)} \geq \cdots \geq \lambda_{T(i, \lambda_i)}$. Thus $\lambda_{T(i, j)}$ is decreasing in $j$, whereas $j - 1$ is increasing. This implies there exists $0 \leq a_i \leq \mu_i$ such that
\begin{equation*}
\max(\lambda_{T(i, j)}, j - 1) = \begin{cases}
\lambda_{T(i, j)},& \textrm{ if }j \leq a_i,\\
j -1, & \textrm{ if }j > a_i.
\end{cases}
\end{equation*}
Then we deduce
\begin{equation}\label{ineq3}
\sum_{j=1}^{\mu_i}{\max(\lambda_{T(i, j)}, j - 1)} = \sum_{j=1}^{a_i}{\lambda_{T(i, j)}} + \sum_{j=a_i+1}^{\mu_i}{(j-1)} \leq a_i\lambda_i + {\mu_i \choose 2} - {a_i \choose 2},
\end{equation}
because $T(i, j) > T(i-1, j) > \dots > T(1, j)$ implies $T(i, j) \geq i$ and $\lambda_{T(i, j)} \leq \lambda_i$; equality holds if and only if $\lambda_i = \lambda_{T(i, j)}$ for all $1\leq j\leq a_i$.

From $(\ref{ineq3})$, the left side of $(\ref{ineq2})$ multiplied by two is at least equal to
\begin{equation}\label{somebound}
\lambda_i(\lambda_i - 1) + 2\mu_i - 2a_i\lambda_i + a_i(a_i - 1) = (\lambda_i - a_i)^2 + (2\mu_i - \lambda_i - a_i).
\end{equation}
Since $\mu\subseteq\lambda$, then $\lambda_i \geq \mu_i \geq a_i$, which implies that $(\ref{somebound})$ is lower bounded by
\begin{equation*}
(\lambda_i - a_i)^2 + (2\mu_i - \lambda_i - a_i) \geq (\lambda_i - a_i)^2 -(\lambda_i - a_i) = (\lambda_i - a_i)(\lambda_i - a_i - 1) \geq 0,
\end{equation*}
and equality holds if and only if $\mu_i = a_i$ and $\lambda_i = \mu_i$ or $\lambda_i = \mu_i + 1$. Putting everything together, the lemma is proved.
\end{proof}

\begin{lem}\label{limitcomb}
For any $\mu, \lambda\in\Y(N)$ with $\mu\subseteq\lambda$, we have
\begin{equation}\label{toprovecomb}
\lim_{q\rightarrow 0}{q^{n(\mu') + n(\lambda') + |\mu|}}\cdot I_{\mu|N}(q^{-\lambda_1}, q^{-\lambda_2}t, \ldots, q^{-\lambda_N}t^{N-1}; q, t) = t^{n(\mu)} \prod_{i \geq 1}{{\lambda_i' - \lambda_{i+1}'} \brack \lambda_i' - \mu_i'}.
\end{equation}
In particular, this limit is zero unless $\lambda/\mu$ is a vertical strip.
\end{lem}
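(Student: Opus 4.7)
The plan is to degenerate a combinatorial (tableau) formula for $I_{\mu|N}(x_1,\ldots,x_N;q,t)$ — the $q$-deformation of the formula for $\Fcal_\mu$ in the preceding proposition — in the limit $q\to 0$ after the specialization $x_k = q^{-\lambda_k}t^{k-1}$. Such a tableau formula, due to Okounkov, reads
$$I_{\mu|N}(x_1,\ldots,x_N;q,t) = \sum_{T\in\Tab(\mu,N)}\Psi_T(q,t)\prod_{(i,j)\in\mu}\bigl(x_{T(i,j)} - q^{-(j-1)}t^{N+i-j-T(i,j)}\bigr),$$
and it degenerates to the formula for $\Fcal_\mu$ under the parameter flip $(q,t)\mapsto(1/q,1/t)$ as $q\to 0$ (the factor $q^{-(j-1)}$ collapses to $\delta_{j,1}$, and $\Psi_T(q,t)\to\psi_T(t)$).

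After this substitution, the $(i,j)$-factor becomes $q^{-\lambda_{T(i,j)}}t^{T(i,j)-1} - q^{-(j-1)}t^{N+i-j-T(i,j)}$, whose leading $q$-order is $-\max(\lambda_{T(i,j)},\,j-1)$. Multiplying by $q^{n(\mu')+n(\lambda')+|\mu|}$ and invoking Lemma~\ref{lem:bound1}, each summand has nonnegative $q$-order, with nonvanishing constant term precisely when $\lambda/\mu$ is a vertical strip and $\lambda_{T(i,j)}=\lambda_i$ for every $(i,j)\in\mu$. In particular the limit vanishes unless $\lambda/\mu$ is a vertical strip, matching the last assertion of the lemma.

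For the surviving tableaux, since $\lambda_i\geq j$ whenever $(i,j)\in\mu$, the dominant contribution of each factor is simply $q^{-\lambda_i}t^{T(i,j)-1}$. The equality condition $\lambda_{T(i,j)}=\lambda_i$ then forces $T(i,j)$ to lie in the row-block $X_{\lambda_i}=\{r:\lambda_r=\lambda_i\}$, so the set of surviving tableaux decomposes as an independent choice within each block. A direct evaluation of $\lim_{q\to 0}\Psi_T(q,t)$ together with the $t^{T(i,j)-1}$-factors in each block of size $m_k=\lambda_k'-\lambda_{k+1}'$ reduces the intrablock sum to the complete homogeneous symmetric polynomial $h_{\lambda_k'-\mu_k'}(1,t,\ldots,t^{m_k-(\lambda_k'-\mu_k')})$, which equals ${\lambda_k'-\lambda_{k+1}'\brack \lambda_k'-\mu_k'}$ by the standard $t$-binomial identity already used in the proof of Lemma~\ref{lem:combinatorial}. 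Multiplying across blocks and collecting the remaining $t$-exponents produces $t^{n(\mu)}\prod_{i\geq 1}{\lambda_i'-\lambda_{i+1}'\brack \lambda_i'-\mu_i'}$, as required.

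The main obstacle will be the last step: verifying that the $q\to 0$ limits of the tableau weights $\Psi_T(t)$ on the vertical-strip equality locus really do sum, row-block by row-block, to the stated $t$-binomials, rather than to some related combinatorial expression. An alternative route that bypasses explicit weight bookkeeping is to use the defining properties of $I_{\mu|N}$ (degree, leading monomial, interpolation vanishing) to pin down the $\lambda$-dependence of the limit as a Laurent polynomial in $t^{\lambda_i}$, and then fix the overall constant by comparing with the known special case $\lambda=\mu$ from Lemma~\ref{limitspec} and the vertical-strip constraint coming from Lemma~\ref{lem:bound1}.
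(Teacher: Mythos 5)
Your proposal follows essentially the same route as the paper: degenerate Okounkov's tableau formula at the specialization $x_k = q^{-\lambda_k}t^{k-1}$, use Lemma~\ref{lem:bound1} to isolate the vertical-strip tableaux with $\lambda_{T(i,j)}=\lambda_i$, and evaluate the surviving block-by-block sums as complete homogeneous polynomials in $1,t,t^2,\ldots$ to get the $t$-binomials. The step you flag as the main obstacle is resolved in the paper exactly as you anticipate (one checks $\psi_T(t)=1$ on the equality locus, since $T(i,j)=i$ except at the free end-of-row boxes), so the proposal is correct and matches the paper's argument.
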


\begin{proof}
\textbf{Step 1.}
The combinatorial formula for interpolation Macdonald polynomials, see \cite[Thm. III]{Ok}, is 
\begin{equation*}
I_{\mu | N}(q^{-\lambda_1}, \ldots, q^{-\lambda_N}t^{N-1}; q, t) = \sum_{T \in\Tab(\mu, N)}{\psi_T(q, t) \prod_{(i, j)\in\mu}{(q^{-\lambda_{T(i, j)}}t^{T(i, j)-1} - q^{1-j}t^{N-1+i-T(i, j)})}}.
\end{equation*}
We want to show that for all tableaux $T\in\Tab(\mu, N)$, the limit
\begin{equation}\label{limittab}
\lim_{q\rightarrow 0}
q^{n(\mu') + n(\lambda') + |\mu|}\prod_{(i, j)\in\mu}{(q^{-\lambda_{T(i, j)}}t^{i-1} - q^{1-j}t^{N-1+i-T(i, j)})}
\end{equation}
exists and we want to determine conditions on $\lambda \supseteq \mu$ and $T\in\Tab(\mu, N)$ for which the limit is nonzero.
Write the prelimit expression above as
\begin{equation}\label{prelimit2}
\begin{gathered}
q^{n(\mu') + n(\lambda') + |\mu| - \sum_{(i, j)\in\mu}{\max(\lambda_{T(i, j)}, j-1)}}\\
\times\prod_{(i, j)\in\mu}{(q^{\max(0,\ j-1-\lambda_{T(i, j)})}t^{i-1} - q^{\max(0, \ \lambda_{T(i, j)} - j + 1)}t^{N-1+i-T(i, j)})}.
\end{gathered}
\end{equation}
It is clear that the second line of $(\ref{prelimit2})$ has a limit as $q\rightarrow 0$; in fact, this limit is
\begin{equation}
\prod_{(i, j)\in\mu}{\left( \mathbf{1}_{\{ \lambda_{T(i, j)} + 1 \geq j \}} t^{i-1} - \mathbf{1}_{\{ j \geq \lambda_{T(i, j)} - 1 \}} t^{N-1+i-T(i, j)}\right)}.
\end{equation}
From Lemma $\ref{lem:bound1}$, the first line of $(\ref{prelimit2})$ has a limit as $q\rightarrow 0$, and that limit is zero unless $\lambda/\mu$ is a vertical strip and $\lambda_{T(i, j)} = \lambda_i$ for all $(i, j)\in\mu$.

So far, we have proved that if $\lambda/\mu$ is not a vertical strip, then
\begin{equation*}
\lim_{q\rightarrow 0}{q^{n(\mu') + n(\lambda') + |\mu|}}\cdot I_{\mu|N}(q^{-\lambda_1}, q^{-\lambda_2}t, \ldots, q^{-\lambda_N}t^{N-1}; q, t) = 0.
\end{equation*}
Moreover, if $\lambda/\mu$ is a vertical strip, then
\begin{equation}\label{limitexp2}
\begin{gathered}
\lim_{q\rightarrow 0}{q^{n(\mu') + n(\lambda') + |\mu|}}\cdot I_{\mu|N}(q^{-\lambda_1}, q^{-\lambda_2}t, \ldots, q^{-\lambda_N}t^{N-1}; q, t)\\
= \sum_{\substack{T \in\Tab(\mu, N) \\ \lambda_{T(i, j)} = \lambda_i \ \forall (i, j)\in\mu}}{\psi_T(t) \prod_{(i, j)\in\mu}{\left( \mathbf{1}_{\{ \lambda_{T(i, j)} + 1 \geq j \}} t^{T(i, j)-1} - \mathbf{1}_{\{ j \geq \lambda_{T(i, j)} - 1 \}} t^{N-1+i-T(i, j)}\right)} }.
\end{gathered}
\end{equation}
It remains to simplify the last expression. Assume in the remainder of the proof that $\lambda/\mu$ is a vertical strip.\\

\textbf{Step 2.} If $T\in\Tab(\mu, N)$ satisfies $\lambda_{T(i, j)} = \lambda_i$ for any $(i, j)\in\mu$, then $\lambda_{T(i, j)} + 1 = \lambda_i + 1 > j$.
Thus each factor simplifies as $( \mathbf{1}_{\{ \lambda_{T(i, j)} + 1 \geq j \}} t^{T(i, j)-1} - \mathbf{1}_{\{ j \geq \lambda_{T(i, j)} - 1 \}} t^{N-1+i-T(i, j)}) = t^{T(i, j)-1}$.
Therefore, the second line of $(\ref{limitexp2})$ is simplified to
\begin{equation}\label{sumphis}
\sum_{\substack{T \in\Tab(\mu, N) \\ \lambda_{T(i, j)} = \lambda_i \ \forall (i, j)\in\mu}}{\psi_T(t)\cdot \prod_{(i, j)\in\mu}{t^{T(i, j) - 1}}}.
\end{equation}
In the remaining steps we show that $(\ref{sumphis})$ equals the right side of $(\ref{toprovecomb})$.
More explicitly, we show in Step 3 that for any $T\in\Tab(\mu, N)$ with $\lambda_{T(i, j)} = \lambda_i$ for all $(i, j)\in\mu$, one has $\psi_T(t) = 1$. Finally, in Step 4, it is shown that $\sum_{T}{\prod_{(i, j)\in\mu}t^{T(i, j)-1}}$, the sum being over $T\in\Tab(\mu, N)$ with $\lambda_{T(i, j)} = \lambda_i$, equals the right side of $(\ref{toprovecomb})$.\\

\textbf{Step 3.}
Let $T \in \Tab(\mu, N)$ be such that $\lambda_{T(i, j)} = \lambda_i$ for all $(i, j)\in\mu$.
We recall the definition of $\psi_T(t)$.
The tableau $T$ is given by a sequence $\mu = \mu^{(N)} \succ \mu^{(N-1)} \succ \dots \succ \mu^{(1)} \succ \mu^{(0)} = \emptyset$, where $\mu^{(k)}$ is the set of boxes of $\mu$ filled with numbers $\leq k$.
Given $\nu \succ \kappa$, $\theta := \nu - \kappa$ is a horizontal strip; set
\begin{equation}\label{psihorizontal}
\psi_{\nu/\kappa}(t) := \prod_{j\in J}{(1 - t^{m_j(\kappa)})},
\end{equation}
where $J := \{j : \theta_j' = 0, \ \theta_{j+1}' = 1\}$.
Then by definition
\begin{equation*}
\psi_T(t) := \prod_{i=1}^N{\psi_{\mu^{(i)}/\mu^{(i-1)}}(t)}.
\end{equation*}
For $T$ as described above, and any $i \geq 1$, we will argue that $\psi_{\mu^{(i)}/\mu^{(i-1)}}(t) = 1$; this will show $\psi_T(t)=1$.
For any $1\leq k\leq \mu_1'$, let
\begin{equation*}
Y_k := \{ i : \mu_i = k, \ \lambda_i = k+1\}, \ Z_k := \{i : \mu_i = k = \lambda_i\}.
\end{equation*}
Since $\lambda/\mu$ is a vertical strip, we deduce
\begin{equation*}
|Y_k| = \lambda_{k+1}' - \mu_{k+1}', \ |Z_k| = \mu_k' - \lambda_{k+1}'.
\end{equation*}

We claim that for any $i\in Y_k$, then $T(i, j) = i$.
In fact, $\{i : \lambda_i = k+1\} = Y_k \sqcup Z_{k+1}$.
Say $Z_{k+1} = \{z_1 < \ldots < z_r\}$, $Y_k = \{y_1 < \ldots < y_r\}$, so that $z_r < y_1$.
By definition of Young tableau, $T(z_1, j) < \ldots < T(z_r, j) < T(y_1, j) < \ldots < T(y_r, j)$.
But by assumption $T(z_1, j), \ldots, T(z_r, j), T(y_1, j), \ldots, T(y_r, j) \in Y_k \sqcup Z_{k+1} = \{z_1 < \ldots < z_r < y_1 < \ldots < y_r\}$.
In particular, we have $T(y_i, j) = y_i$ for all $i$, i.e., $T(i, j) = i$ for any $i\in Y_k$, as claimed.

By a similar reasoning as above, we deduce: if $i\in Z_k$ and $j \neq k$, then $T(i, j) = i$; and if $i\in Z_k$ and $Y_{k-1} = \emptyset$, then also $T(i, j) = i$.

From the claims above we see that, possibly, the only boxes $(i, j)\in\mu$ with $T(i, j) \neq i$ are those with $j = \mu_i = \lambda_i$ and for which there exist $k > i$ with $\lambda_k = \mu_k + 1 = j$.
For a fixed $j$, there exist $\mu_j' - \lambda_{j+1}'$ such boxes $(i, j)$, and the numbers on those boxes can be chosen from the set $\{\lambda_{j+1}' + 1, \lambda_{j+1}' + 2, \ldots, \lambda_j'\}$ in such a way that they are strictly increasing from bottom to top.
See Figure $\ref{fig:young}$ for an illustration in the case $\mu = (6, 5, 5, 4, 4, 4, 2, 2, 1)$, $\lambda = (7, 5, 5, 5, 5, 5, 2, 2, 1, 1, 1)$.

From these considerations, it is clear that $\psi_{\mu^{(i)}/\mu^{(i-1)}}(t) = 1$ for any $i$, because each set $J$ in the definition $(\ref{psihorizontal})$ is the empty set. This is what we wished to prove.\\

\begin{figure}
\begin{center}
\begin{ytableau}
1& 1& 1& 1& 1& 1& *(battleshipgrey)\\
2& 2 & 2 &2 & *(ashgrey)i\\
3& 3& 3& 3& *(ashgrey)j\\
4& 4& 4& 4& *(battleshipgrey)\\
5& 5& 5& 5& *(battleshipgrey)\\
6& 6& 6& 6\\
7& 7\\
8& 8\\
*(ashgrey)k\\
*(battleshipgrey)\\
*(battleshipgrey)\\
\end{ytableau}
\caption{$\mu = (6, 4, 4, 4, 4, 4, 2, 2, 1)$, $\lambda = (7, 5, 5, 5, 5, 4, 2, 2, 1, 1, 1)$.
Filled squares (with numbers or letters) belong to $\mu$, whereas the dark-grey squares belong to $\lambda/\mu$.
Any $T \in \Tab(\mu, 15)$ with $\lambda_{T(i, j)} = \lambda_i$ has $T(i, j) = i$ for most squares $(i, j)\in\mu$. The numbers have been written in those squares.
For the light-grey squares $(i, j)\in\mu$, $T(i, j)$ could take one of several values.
In our example, $i < j$ take values in $\{2, 3, 4, 5\}$, whereas $k$ takes values in $\{9, 10, 11\}$.}
\label{fig:young}
\end{center}
\end{figure}

\textbf{Step 4.}
From the previous step, $T(i, j) = i$, unless $j = \mu_i = \lambda_i$ and $\lambda_{j-1}' - \lambda_j' > 0$.
In the latter case, $T(i, j) - i$ could be any number in the set $\{0, 1, \ldots, \lambda_{j-1}' - \lambda_j' - 1\}$;
we note also that for a given $j$, there are $\mu_j' - \lambda_{j+1}'$ boxes like these.

Since $\prod_{(i, j)\in\mu}{t^{i-1}} = t^{\mu_2 + 2\mu_3 + \dots} = t^{n(\mu)}$, we deduce
\begin{equation*}
\sum_{\substack{T \in\Tab(\mu, N) \\ \lambda_{T(i, j)} = \lambda_i \ \forall (i, j)\in\mu}}{\prod_{(i, j)\in\mu}{t^{T(i, j) - 1}}}
= t^{n(\mu)}\prod_{j\geq 1}{\sum_{0\leq k_1 \leq \ldots \leq k_{\mu_j' - \lambda_{j+1}'}\leq \lambda_{j-1}' - \lambda_j' - 1}{t^{k_1 + \ldots + k_{\mu_j' - \lambda_{j+1}'}}}}.
\end{equation*}
The inner sum above can be calculated:
\begin{equation*}
\sum_{0\leq k_1 \leq \ldots \leq k_{\mu_j' - \lambda_{j+1}'}\leq \lambda_{j-1}' - \lambda_j' - 1}{t^{k_1 + \ldots + k_{\mu_j' - \lambda_{j+1}'}}} = h_{\mu_j' - \lambda_{j+1}'}(1, t, \ldots, t^{\lambda_j' - \lambda_{j+1}' - 1}) = {\lambda_j' - \lambda_{j+1}' \brack \mu_j' - \lambda_{j+1}'}
\end{equation*}
as in the proof of Lemma $\ref{lem:combinatorial}$; the proof is now finished.
\end{proof}

\begin{prop}\label{prop:Fcal}
For any $\lambda\in\Y(N)$, we have
\begin{equation*}
\Fcal_{\lambda}(x_1, \ldots, x_N; t) = \sum_{\mu}{\tau_{\lambda/\mu}(t; N)\PP_{\mu}(x_1, \ldots, x_N; t)},
\end{equation*}
where the expressions $\tau_{\lambda/\mu}(t; N)$ were defined in $(\ref{eqn:tau})$.
\end{prop}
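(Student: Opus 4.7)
My plan is to derive the claim as the $q\to 0$ degeneration of Okounkov's binomial expansion of the interpolation Macdonald polynomial in the Macdonald basis, after the substitution $(q,t)\mapsto(1/q,1/t)$. Under this limit, the left side $I_{\lambda|N}(x;1/q,1/t)$ converges to $\Fcal_\lambda(x;t)$ by definition of the Hall--Littlewood degeneration, while on the right side $M_{\mu|N}(x;1/q,1/t) = M_{\mu|N}(x;q,t)$ degenerates to $\PP_\mu(x;t)$; the real content lies in identifying the limiting coefficients.

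Concretely, Okounkov's binomial formula \cite{Ok3} expresses
\begin{equation*}
I_{\lambda|N}(x;q,t) = \sum_{\mu\subseteq\lambda} c_{\lambda,\mu}(q,t,N)\, M_{\mu|N}(x;q,t),
\end{equation*}
where each coefficient $c_{\lambda,\mu}$ is obtained, by evaluating both sides at the interpolation points $q_\nu = (q^{-\nu_1}, q^{-\nu_2}t,\ldots,q^{-\nu_N}t^{N-1})$ and using the vanishing of $I_{\lambda|N}$ at $q_\nu$ for $\nu\ne\lambda$ with $|\nu|\le|\lambda|$, as an explicit rational expression in the cross-evaluations $I_{\mu|N}(q_\lambda;q,t)$ (with $\mu\subseteq\lambda$), the diagonal normalization $C(\mu;q,t) = I_{\mu|N}(q_\mu;q,t)$, and the principal specialization $M_{\mu|N}(q_\mu;q,t)$. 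Substituting $(q,t)\mapsto(1/q,1/t)$ and rescaling by a suitable power of $q$ so that each factor has a finite, nonzero limit as $q\to 0$, I would invoke Lemma $\ref{limitspec}$ for the behavior of $C(\mu;q,t)$, Lemma $\ref{limitcomb}$ for the cross-evaluations $I_{\mu|N}(q_\lambda;q,t)$, and the classical principal specialization formula for Hall--Littlewood polynomials (see \cite[Ch.~III, (2.1)]{M}) to control $\lim M_{\mu|N}(q_\mu;q,t)$.

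The main obstacle is the algebraic bookkeeping in the final step: the products of $t$-binomials $\prod_i {\lambda_i'-\lambda_{i+1}' \brack \lambda_i'-\mu_i'}$ supplied by Lemma $\ref{limitcomb}$ must combine with the several $\phi$-factorial ratios arising from $C(\lambda;q,t)/C(\mu;q,t)$ and from the principal specialization of $\PP_\mu$, telescoping into the compact product $(-t^{1-N})^{|\lambda|-|\mu|}{N-\mu_1' \brack \lambda_1'-\mu_1'}\prod_{i\ge 1}{\mu_i'-\mu_{i+1}' \brack \lambda_{i+1}'-\mu_{i+1}'}$ that defines $\tau_{\lambda/\mu}(t;N)$. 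The vertical-strip constraint (which already appears in the statement of Lemma $\ref{limitcomb}$) matches the vanishing of $\tau_{\lambda/\mu}(t;N)$ when $\lambda/\mu$ is not a vertical strip, serving as a useful consistency check. Lemma $\ref{limitzeroes}$ enters as the $\lambda=\varnothing$ boundary case, ensuring the rescaling of the normalization constants is consistent.
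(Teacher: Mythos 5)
Your overall strategy coincides with the paper's: take the $q\to 0$ limit of Okounkov's binomial expansion of $I_{\lambda|N}(\cdot;1/q,1/t)$ in the Macdonald basis, using Lemma $\ref{limitspec}$ for the diagonal normalization and Lemma $\ref{limitcomb}$ for the cross-evaluations. However, there is a concrete gap in your identification of the coefficients of that expansion. In the form the paper uses (display $(\ref{binomialbefore})$), the coefficient of $M_{\mu|N}(x_1,\ldots,x_N;q,t)$ is the product of two ratios,
\begin{equation*}
\frac{I_{\lambda|N}(0^N;1/q,1/t)}{I_{\mu|N}(0^N;1/q,1/t)}\cdot\frac{I_{\mu|N}(q^{-\lambda_1},\ldots,q^{-\lambda_N}t^{N-1};q,t)}{I_{\mu|N}(q^{-\mu_1},\ldots,q^{-\mu_N}t^{N-1};q,t)},
\end{equation*}
and the first ratio, the evaluations at the zero point, is absent from your list of ingredients. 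That ratio is exactly what Lemma $\ref{limitzeroes}$ computes; it must be applied to both $\lambda$ and $\mu$ for \emph{every} pair in the sum, not merely as a $\lambda=\varnothing$ boundary case as you suggest. In the limit it contributes the factor
\begin{equation*}
\frac{(-t^{1-N})^{|\lambda|}\,{N \brack \lambda_1'}\prod_{i\geq 1}{\lambda_i' \brack \lambda_{i+1}'}}{(-t^{1-N})^{|\mu|}\,{N \brack \mu_1'}\prod_{i\geq 1}{\mu_i' \brack \mu_{i+1}'}},
\end{equation*}
which is what combines with $\prod_{i\geq 1}{\lambda_i'-\lambda_{i+1}' \brack \lambda_i'-\mu_i'}$ from Lemma $\ref{limitcomb}$ to telescope into $\tau_{\lambda/\mu}(t;N)$; without it the bookkeeping step you flag as the main obstacle cannot close.

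The object you propose in its place, ``the principal specialization $M_{\mu|N}(q_\mu;q,t)$'', conflates two different evaluations: the principal specialization is $M_{\mu|N}(1,t,\ldots,t^{N-1};q,t)$, whereas $q_\mu=(q^{-\mu_1},q^{-\mu_2}t,\ldots,q^{-\mu_N}t^{N-1})$ is the interpolation node, and $M_{\mu|N}(q_\mu;q,t)$ does not occur in the binomial expansion. Your instinct is nevertheless close to something correct: by formula $(\ref{evalzeros})$ one has $I_{\mu|N}(0^N;q,t)=(-1)^{|\mu|}q^{-n(\mu')}t^{n(\mu)}M_{\mu|N}(1,t,\ldots,t^{N-1};q,t)$, so a version of your plan in which the zero-point evaluations are rewritten through genuine principal specializations (and their $q\to 0$ limits taken via the classical Hall--Littlewood specialization formula) would be equivalent to the paper's use of Lemma $\ref{limitzeroes}$. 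As written, though, the decomposition of the binomial coefficient is incorrect, and executing the plan literally would not produce $\tau_{\lambda/\mu}(t;N)$.
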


\begin{proof}
From the binomial formula for interpolation Macdonald polynomials in \cite{Ok3} with $(1/q, 1/t)$ instead of $(q, t)$, and using the well-known symmetry $M_{\mu}(x; q, t) = M_{\mu}(x; 1/q, 1/t)$, we obtain
\begin{equation}\label{binomialbefore}
\begin{gathered}
I_{\lambda | N}(x_1, \ldots, x_N; 1/q, 1/t)
= \sum_{\mu \subseteq \lambda}
\left\{\frac{q^{-n(\lambda')}I_{\lambda | N}(0^N; 1/q, 1/t)}{q^{-n(\mu')}I_{\mu | N}(0^N; 1/q, 1/t)}\right.\\
\left.\frac{q^{n(\mu') + n(\lambda') + |\mu|} I_{\mu | N}(q^{-\lambda_1}, \ldots, q^{-\lambda_N}t^{N-1}; q, t)}{q^{2n(\mu')+|\mu|} I_{\mu | N}(q^{-\mu_1}, \ldots, q^{-\mu_N}t^{N-1}; q, t)}
M_{\mu | N}(x_1, \ldots, x_N; q, t)\right\}.
\end{gathered}
\end{equation}

We know
\begin{gather*}
\lim_{q\rightarrow 0} I_{\lambda | N}(x_1, \ldots, x_N; 1/q, 1/t) = \Fcal_{\lambda}(x_1, \ldots, x_N; t),\\
\lim_{q\rightarrow 0} M_{\mu | N}(x_1, \ldots, x_N; q, t) = \PP_{\mu}(x_1, \ldots, x_N; t).
\end{gather*}
From Lemmas $\ref{limitspec}$, $\ref{limitzeroes}$ and $\ref{limitcomb}$, the binomial formula in $(\ref{binomialbefore})$ has a limit as $q$ tends to zero.
The limiting coefficient that accompanies $\PP_{\mu}(x_1, \ldots, x_N; t)$ is
\begin{gather*}
\frac{(-t^{1-N})^{|\lambda|}{N \brack \lambda_1'}\prod_{i \geq 1}{\lambda_i' \brack \lambda_{i+1}'}}{(-t^{1-N})^{|\mu|}{N \brack \mu_1'}\prod_{i \geq 1}{\mu_i' \brack \mu_{i+1}'}}
\times\frac{t^{n(\mu)} \prod_{i \geq 1}{{\lambda_i' - \lambda_{i+1}'} \brack \lambda_i' - \mu_i'}}{t^{n(\mu)}},
\end{gather*}
which is easily seen to be equal to $\tau_{\lambda/\mu}(t; N)$.
\end{proof}

From Propositions $\ref{prop:expansionHL}$ and $\ref{prop:Fcal}$, we obtain the main result of this section:

\begin{thm}\label{cor:equalityHLs}
For any $\lambda\in\Y(N)$, we have
\begin{equation*}
\F_{\lambda}(x_1, \ldots, x_N; t) = \Fcal_{\lambda}(x_1, \ldots, x_N; t) = \sum_{T\in\Tab(\lambda, N)} \psi_T(t)\prod_{(i, j)\in\lambda}{(x_{T(i, j)} - \delta_{j, 1}t^{T(i, j) - N - i + 1})}.
\end{equation*}
\end{thm}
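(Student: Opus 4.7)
The theorem has two equalities, and essentially all the work has already been done by the preceding material in Sections $\ref{sec:Hall1}$ and $\ref{sec:Hall2}$. My plan is therefore to assemble the statement directly from Propositions $\ref{prop:expansionHL}$ and $\ref{prop:Fcal}$, together with the combinatorial formula for $\Fcal_{\lambda}$ recalled at the beginning of Section $\ref{sec:Hall2}$.

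First, I would establish the equality $\F_{\lambda}(x_1, \ldots, x_N; t) = \Fcal_{\lambda}(x_1, \ldots, x_N; t)$. Proposition $\ref{prop:expansionHL}$ expresses $\F_{\lambda}$ as the linear combination
$$\F_{\lambda}(x_1, \ldots, x_N; t) = \sum_{\mu}{\tau_{\lambda/\mu}(t; N)\PP_{\mu}(x_1, \ldots, x_N; t)},$$
with coefficients $\tau_{\lambda/\mu}(t; N)$ defined in $(\ref{eqn:tau})$. Proposition $\ref{prop:Fcal}$ gives exactly the same expansion for $\Fcal_{\lambda}$. Since $\{\PP_{\mu}(x_1, \ldots, x_N; t) : \mu\in\Y(N)\}$ is a basis of $\Lambda_{N, \Q(t)}$, matching coefficients yields the equality of the two inhomogeneous symmetric polynomials.

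Second, the combinatorial expression $\sum_{T\in\Tab(\lambda, N)} \psi_T(t)\prod_{(i, j)\in\lambda}(x_{T(i, j)} - \delta_{j, 1}t^{T(i, j) - N - i + 1})$ is precisely the formula for $\Fcal_{\lambda}(x_1, \ldots, x_N; t)$ stated in item (2) of the unlabeled proposition at the start of Section $\ref{sec:Hall2}$ (itself quoted from \cite[Lem. 9.2]{Ol}). So this equality is given to us.

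The only nontrivial ingredient is really Proposition $\ref{prop:Fcal}$, whose proof used the binomial formula for interpolation Macdonald polynomials and the $q \to 0$ limiting lemmas (Lemmas $\ref{limitspec}$, $\ref{limitzeroes}$, $\ref{limitcomb}$). But from the point of view of the present theorem, no new obstacle remains: the proof is a direct consequence of the two expansions already proved and the definition of $\Fcal_{\lambda}$. One can therefore simply write: by Propositions $\ref{prop:expansionHL}$ and $\ref{prop:Fcal}$, both $\F_{\lambda}$ and $\Fcal_{\lambda}$ equal $\sum_{\mu}\tau_{\lambda/\mu}(t; N)\PP_{\mu}(x_1, \ldots, x_N; t)$, and the final combinatorial formula is the definition of $\Fcal_{\lambda}$ from \cite[Lem. 9.2]{Ol}.
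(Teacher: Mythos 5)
Your proposal matches the paper exactly: the theorem is stated immediately after Propositions $\ref{prop:expansionHL}$ and $\ref{prop:Fcal}$ with the remark that it follows from them, and the combinatorial formula is indeed just the definition of $\Fcal_{\lambda}$ quoted from \cite[Lem. 9.2]{Ol}. Nothing is missing; the argument is correct and identical in structure to the paper's.
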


\subsection{Special cases: one column partition and one row partition}

For any $k\geq 1$, let
\begin{equation*}
\EE_k(x_1, \ldots, x_N; t) := 
	\begin{cases}
            1, & \text{ if }k = 0,\\
            \F_{(1^k)}(x_1, \ldots, x_N; t), & \text{ if } 1 \leq k \leq N.
        \end{cases}
\end{equation*}
The polynomials $\EE_k(x_1, \ldots, x_N; t)$ are inhomogeneous analogues of the elementary symmetric polynomials $e_k(x_1 \ldots, x_N) = \sum_{1\leq i_1 < \ldots < i_k \leq N}{x_{i_1}\cdots x_{i_k}}$.

\begin{prop}
For any $0\leq k\leq N$, we have
\begin{equation}\label{eqn:elementary}
\EE_k(x_1, \ldots, x_N; t)= I_{(1^k)|N}(x_1, \ldots, x_N; q^{-1}, t^{-1}) = \sum_{1\leq i_1 < \dots < i_k \leq N}{\prod_{s=1}^k{(x_{i_s} - t^{s+1-k-i_s})}}.
\end{equation}
Consequently, if we denote the symmetric function $\EE_k(\cdot; t) := \F_{(1^k)}(\cdot; t)$, then $\EE_k(\cdot; t) = I_{(1^k) |N}(\cdot; q^{-1}, t^{-1})$.
\end{prop}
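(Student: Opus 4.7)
My plan is to combine Theorem $\ref{cor:equalityHLs}$ with an index-reversal bijection, and then confirm that the resulting polynomial satisfies the characterizing properties of $I_{(1^k)|N}(\cdot; q^{-1}, t^{-1})$. Applying Theorem $\ref{cor:equalityHLs}$ to $\lambda = (1^k)$: the tableaux $T \in \Tab((1^k), N)$ are in bijection with strictly increasing sequences $1 \leq i_1 < \cdots < i_k \leq N$ via $i_s = T(s, 1)$, and each such column tableau has $\psi_T(t) = 1$, since every skew shape $\mu^{(r)}/\mu^{(r-1)}$ is either empty or a single box in column $1$, which forces the index set $J$ in $(\ref{psihorizontal})$ to be empty. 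Using $\delta_{j, 1} = 1$ at every box of $(1^k)$, the theorem produces
\begin{equation}\label{eqn:plan_Astar}
\EE_k(x_1, \ldots, x_N; t) = \sum_{1 \leq i_1 < \cdots < i_k \leq N} \prod_{s=1}^k \bigl( x_{i_s} - t^{i_s - N - s + 1} \bigr).
\end{equation}

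Let $B(x_1, \ldots, x_N)$ denote the right-hand side of the last equality in $(\ref{eqn:elementary})$. The key observation is that $i'_s := N + 1 - i_{k+1-s}$ defines a bijection from strictly increasing $k$-tuples in $\{1, \ldots, N\}$ to themselves, and a direct computation shows that under the combined change of variables $x_i \mapsto x_{N+1-i}$ and the reindexing $s \mapsto k + 1 - s'$, the exponent $s + 1 - k - i_s$ transforms into $i'_{s'} - N - s' + 1$. Thus $B(x_N, x_{N-1}, \ldots, x_1)$ equals the right-hand side of $(\ref{eqn:plan_Astar})$, namely $\EE_k(x_1, \ldots, x_N; t)$; since $\EE_k = \F_{(1^k)}$ is symmetric, this forces $B(x_1, \ldots, x_N) = \EE_k(x_1, \ldots, x_N; t)$, establishing the equality between the first and third expressions in $(\ref{eqn:elementary})$.

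It remains to identify $B$ with $I_{(1^k)|N}(\cdot; q^{-1}, t^{-1})$, which I would do by checking the defining conditions: the degree bound $\deg B \leq k$ and the leading term $m_{(1^k)} = e_k$ are immediate from the product structure, and symmetry was just established. For the vanishing at the points $x_j = q^{\mu_j} t^{1-j}$ with $\mu \neq (1^k)$ and $|\mu| \leq k$: any such $\mu$ must have $\ell(\mu) < k$ (otherwise $|\mu| \geq k$ with equality attained only at $(1^k)$), so $\mu_{i_k} = 0$ since $i_k \geq k > \ell(\mu)$, making the factor at $s = k$ equal to $t^{1 - i_k}(q^{0} - 1) = 0$ and killing every term of $B$. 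For the normalization at $\lambda = (1^k)$, only the term $I = \{1, \ldots, k\}$ survives, evaluating to $t^{-k(k-1)/2} \prod_{s=1}^k (q - t^{s-k})$, which equals $C((1^k); q^{-1}, t^{-1})$ after substituting $(q, t) \mapsto (q^{-1}, t^{-1})$ in $(\ref{Hexp})$. Hence all three expressions in $(\ref{eqn:elementary})$ coincide; the symmetric-function statement then follows by letting $N \to \infty$ via Proposition/Definition $\ref{df:Halls}$. The only mildly delicate step is the normalization check, which amounts to cleanly tracking the $t^{1-i_s}$ prefactors that emerge at the evaluation point.
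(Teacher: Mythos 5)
Your argument is correct, but it takes a genuinely more self-contained route than the paper's for the second equality in $(\ref{eqn:elementary})$. The paper's proof is two lines: it quotes Okounkov's formula \cite[(1.6)]{Ok3} for the explicit sum (translated through $I_{\mu|N}(x_1,\ldots,x_N;q,t) = P^*_{\mu}(x_1, x_2/t, \ldots, x_N/t^{N-1}; 1/q, 1/t)$), observes that $I_{(1^k)|N}(x_1,\ldots,x_N;q^{-1},t^{-1})$ therefore does not depend on $q$, and invokes Theorem $\ref{cor:equalityHLs}$ --- the $q\rightarrow 0$ limit of a $q$-independent quantity being itself, the first equality follows at once. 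You instead extract the explicit expression for $\EE_k$ from the tableau formula of Theorem $\ref{cor:equalityHLs}$ (single-column tableaux, $\psi_T(t) = 1$), match it to the stated sum via the reversal $i_s \mapsto N+1-i_{k+1-s}$ combined with the symmetry of $\F_{(1^k)}$, and then reprove Okounkov's identity from scratch by verifying the defining properties of the interpolation polynomial at parameters $(q^{-1},t^{-1})$. All of your checks are sound: the exponent bookkeeping in the reversal is right; for $\mu \neq (1^k)$ with $|\mu| \leq k$ one indeed has $\ell(\mu) < k$, so $i_k \geq k > \ell(\mu)$ kills the $s = k$ factor of every term; and the single surviving term at $\mu = (1^k)$ evaluates to $t^{-k(k-1)/2}\prod_{s=1}^k(q - t^{s-k})$, which matches $C((1^k); q^{-1}, t^{-1})$ computed from $(\ref{Hexp})$ using $n((1^k)) = k(k-1)/2$ and $n((1^k)') = 0$. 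Both proofs ultimately rest on Theorem $\ref{cor:equalityHLs}$ for the identification of $\EE_k$ with $\F_{(1^k)}$; what your version buys is independence from the external citation for the column interpolation polynomials, at the cost of length.
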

\begin{proof}
The second equality in $(\ref{eqn:elementary})$ is in the article of Okounkov \cite[(1.6)]{Ok3}. Recall that Okounkov's notation and ours, for interpolation polynomials, are related by $I_{\mu|N}(x_1, \ldots, x_N; q, t) = P_{\mu}^*(x_1, x_2/t, \ldots, x_N/t^{N-1}; 1/q, 1/t)$.

Note that $I_{(1^k) | N}(x_1, \ldots, x_N; q^{-1}, t^{-1})$ does not depend on $q$.
Thus Theorem $\ref{cor:equalityHLs}$ yields the first equality $\EE_k(x_1, \ldots, x_N; t) = \F_{(1^k)}(x_1, \ldots, x_N; t) = I_{(1^k) | N}(x_1, \ldots, x_N; q^{-1}, t^{-1})$.
\end{proof}

\begin{cor}
\begin{equation*}
\sum_{k = 0}^N{\frac{\EE_k(x_1, \ldots, x_N; t)}{(u+1)(u+t^{-1})\cdots (u + t^{1-k})}}
= \prod_{i=1}^N{\frac{1 + x_i/u}{1 + t^{1-i}/u}}.
\end{equation*}
\end{cor}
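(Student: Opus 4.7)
The plan is to clear denominators and reduce the claim to a polynomial identity in $u$, which I would then verify by induction on $N$. Multiplying both sides by $\prod_{j=0}^{N-1}(u+t^{-j})$ (noting that $u+t^{1-i}=u+t^{-(i-1)}$), the corollary is equivalent to
\[
\prod_{i=1}^{N}(u+x_i)\;=\;\sum_{k=0}^{N}\EE_k(x_1,\ldots,x_N;t)\prod_{j=k}^{N-1}(u+t^{-j}),
\]
where the $k=N$ summand is $\EE_N$ times an empty product equal to $1$. Both sides are polynomials of degree $N$ in $u$ with leading coefficient $1$.

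For the induction, the base case $N=0$ is the tautology $1=1$. For the inductive step, set $D(u;x_1,\ldots,x_N)$ to be the difference of the two sides. By the combinatorial formula for $\F_{(1^k)}$ from Theorem $\ref{cor:equalityHLs}$, $\EE_k$ is a sum over strictly increasing tuples $i_1<\cdots<i_k$ of products of factors each linear in a distinct $x_{i_s}$; hence $\EE_k$ (and therefore $D$) is symmetric and multilinear in $x_1,\ldots,x_N$.

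Specializing $x_N=t^{1-N}$, Proposition $\ref{prop:coherence}$ gives $\EE_k(x_1,\ldots,x_{N-1},t^{1-N};t)=\EE_k(x_1,\ldots,x_{N-1};t)$ for $k\le N-1$ and $\EE_N(x_1,\ldots,x_{N-1},t^{1-N};t)=0$. After factoring $(u+t^{-(N-1)})=(u+t^{1-N})$ out of each $\prod_{j=k}^{N-1}(u+t^{-j})$ (for $k\le N-1$), the inductive hypothesis collapses the right side precisely to the corresponding specialization of the left side, yielding $D(u;x_1,\ldots,x_{N-1},t^{1-N})=0$. Symmetry then gives vanishing whenever any $x_i=t^{1-N}$, so multilinearity forces $D=c(u)\prod_{i=1}^N(x_i-t^{1-N})$ for some function $c(u)$ independent of the $x_i$.

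To conclude, I would compare the coefficient of the top monomial $x_1 x_2\cdots x_N$: on the left this coefficient is $1$, and on the right only the $k=N$ term can contribute since $\deg_x \EE_k = k$; by Remark $\ref{rem:specialcases}$ (using $\PP_{(1^N)}=e_N$), $\EE_N=\prod_{i=1}^N(x_i-t^{1-N})$, also contributing $1$. Thus the coefficient of $x_1\cdots x_N$ in $D$ vanishes, which forces $c(u)\equiv 0$ and completes the proof. The main subtlety lies in the inductive step: the shift of the denominator product $(u+t^{-j})$ as $N$ changes must exactly match the shift of the $(1-t^{1-N}x_i^{-1})$ factors inside $\F_{(1^k)}$, and this matching is precisely what Proposition $\ref{prop:coherence}$ provides.
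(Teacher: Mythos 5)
Your proof is correct, but it takes a genuinely different route from the paper. The paper disposes of this corollary in one line: it is a direct rewriting of Okounkov's generating-function identity \cite[(2.9)]{Ok3} for one-column interpolation Macdonald polynomials, made available by the identification $\EE_k(x_1, \ldots, x_N; t) = I_{(1^k)|N}(x_1, \ldots, x_N; q^{-1}, t^{-1})$ established in the preceding proposition. You instead clear denominators to obtain the polynomial identity $\prod_{i=1}^N(u + x_i) = \sum_{k=0}^N \EE_k(x_1, \ldots, x_N; t)\prod_{j=k}^{N-1}(u + t^{-j})$ and prove it by induction on $N$: multilinearity of $\EE_k$ (visible from the column-tableau formula of Theorem~$\ref{cor:equalityHLs}$ or from $(\ref{eqn:elementary})$), the coherence property of Proposition~$\ref{prop:coherence}$ at $x_N = t^{1-N}$ together with $\EE_N(x_1,\ldots,x_{N-1},t^{1-N};t) = 0$, symmetry, and the leading-coefficient check $\EE_N = \prod_{i=1}^N(x_i - t^{1-N})$ from Remark~$\ref{rem:specialcases}$. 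All the ingredients you invoke are established earlier in the paper and do not depend on this corollary, so there is no circularity, and each step (the collapse of the right side under the specialization, the factorization $D = c(u)\prod_i(x_i - t^{1-N})$ forced by multilinearity and symmetry, and the vanishing of $c(u)$) checks out. What the two approaches buy: the paper's citation is shorter but imports an external result of Okounkov, whereas your argument is self-contained within the paper's own machinery and in effect gives an independent proof of the relevant special case of \cite[(2.9)]{Ok3}; it also illustrates concretely the point you flag at the end, namely that the shift in the Pochhammer-type denominators as $N$ varies is exactly compensated by the coherence relation for the inhomogeneous polynomials.
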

\begin{proof}
This is a rewriting of \cite[(2.9)]{Ok3}. Let us observe that we can replace $N$ by $\infty$ in the upper limit of the sum and product, if we also replace $\EE_k(x_1, \ldots, x_N; t)$ by $\EE_k(\cdot; t)$.
\end{proof}

Next, let us denote
\begin{equation*}
\HH_k(x_1, \ldots, x_N; t) := \F_{(k)}(x_1, \ldots, x_N; t), \text{ if } k \geq 0,
\end{equation*}
in particular $\HH_0(x_1, \ldots, x_N; t) = 1$. The polynomials $\HH_k(x_1, \ldots, x_N; t)$ are inhomogeneous analogues of complete homogeneous symmetric polynomials $h_k(x_1 \ldots, x_N) = \sum_{1\leq i_1 \leq \ldots \leq i_k \leq N}{x_{i_1}\cdots x_{i_k}}$.

From Proposition $\ref{prop:expansionHL}$, we have
\begin{equation}\label{eqn:complete}
\begin{gathered}
\HH_1(x_1, \ldots, x_N; t) = \PP_{(1)}(x_1, \ldots, x_N; t) - \frac{t^{1-N}(1 - t^N)}{1 - t},\\
\HH_k(x_1, \ldots, x_N; t) = \PP_{(k)}(x_1, \ldots, x_N; t) - t^{1-N}\PP_{(k-1)}(x_1, \ldots, x_N; t), \textrm{ for } k\geq 2.
\end{gathered}
\end{equation}
From \cite[Ch. III, (2.10)]{M}, we have the generating series
\begin{equation}\label{eqn:generatingHall}
1 + (1-t)\sum_{n=1}^{\infty}{\PP_{(n)}(x_1, \ldots, x_N; t)u^n} = \prod_{i=1}^N{\frac{1 - x_itu}{1 - x_iu}}.
\end{equation}

\begin{prop}
The following is the generating series for $\HH_n(x_1, \ldots, x_N; t)$:
\begin{equation}\label{eqn:generatingH}
\sum_{n=0}^{\infty}{\HH_n(x_1, \ldots, x_N; t)u^n} = \frac{1 - t^{1-N}u}{1 - t}\prod_{i=1}^N{\frac{1 - x_i tu}{1 - x_iu}} - \frac{t(1 - u)}{1 - t},
\end{equation}
and consequently, if we denote $\HH_n(\cdot; t) := \F_{(n)}(\cdot. t)$, we have
\begin{equation}\label{eqn:generatingfun}
\sum_{n=0}^{\infty}{\HH_n(\cdot; t)u^n} = \frac{1}{1 - t}\prod_{i=1}^{\infty}{\frac{1 - x_i tu}{1 - x_iu}} - \frac{t(1 - u)}{1 - t}.
\end{equation}
\end{prop}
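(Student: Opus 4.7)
The plan is to derive the finite $N$ generating identity $(\ref{eqn:generatingH})$ directly from the expansion $(\ref{eqn:complete})$ and the known Hall-Littlewood generating series $(\ref{eqn:generatingHall})$, and then to transfer the identity to $\Lambda_{\Q(t)}$ by applying the characterization of $\F_\lambda(\cdot;t)$ from Proposition/Definition $\ref{df:Halls}$.

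First I would combine the three cases of $(\ref{eqn:complete})$ (namely $\HH_0 = 1$, the special formula for $\HH_1$, and the uniform recursion $\HH_k = \PP_{(k)} - t^{1-N}\PP_{(k-1)}$ for $k \geq 2$) into one generating series. Writing $S(u) := \sum_{k\geq 1}\PP_{(k)}(x_1,\ldots,x_N;t)\,u^k$, an elementary rearrangement gives
\begin{equation*}
\sum_{n=0}^{\infty}\HH_n(x_1,\ldots,x_N;t)\,u^n \;=\; 1 \;-\; \frac{t^{1-N}(1-t^N)}{1-t}\,u \;+\; (1 - t^{1-N}u)\,S(u),
\end{equation*}
after accounting for the fact that the recursion for $\HH_1$ differs from that of $\HH_k$, $k\geq 2$, by the term $\tfrac{t^{1-N}(1-t^N)}{1-t}u$.

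Next I would substitute the closed form for $S(u)$ obtained from $(\ref{eqn:generatingHall})$, namely $S(u) = (1-t)^{-1}\bigl(\prod_{i=1}^N\frac{1-x_itu}{1-x_iu} - 1\bigr)$. Collecting the terms that do not involve the product, a direct simplification
\begin{equation*}
1 - \frac{t^{1-N}(1-t^N)}{1-t}u - \frac{1 - t^{1-N}u}{1-t} \;=\; -\,\frac{t(1-u)}{1-t}
\end{equation*}
yields exactly the right-hand side of $(\ref{eqn:generatingH})$. This is purely algebraic bookkeeping; the only mild pitfall is the separate treatment of $k=1$ required by $(\ref{eqn:complete})$, which is precisely what produces the final $-t(1-u)/(1-t)$ correction.

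For the symmetric function version $(\ref{eqn:generatingfun})$, I would use Proposition/Definition $\ref{df:Halls}$: it suffices to check that applying $\pi^{\infty}_N$ (coefficient-wise in $u$) to the proposed right-hand side yields the finite $N$ formula $(\ref{eqn:generatingH})$, for every $N\in\N$. Since $\pi^{\infty}_N$ is an algebra homomorphism sending $p_m$ to $p_m(x_1,\ldots,x_N) + t^{-mN}/(1-t^{-m})$, and since $\prod_{i\geq 1}\frac{1-x_itu}{1-x_iu} = \exp\bigl(\sum_{m\geq 1}\frac{u^m(1-t^m)}{m}p_m\bigr)$, the computation $(1-t^m)/(1-t^{-m}) = -t^m$ gives
\begin{equation*}
\pi^{\infty}_N\!\left(\prod_{i\geq 1}\frac{1-x_itu}{1-x_iu}\right) = (1 - t^{1-N}u)\prod_{i=1}^N\frac{1-x_itu}{1-x_iu},
\end{equation*}
which matches $(\ref{eqn:generatingH})$ after dividing by $1-t$ and subtracting $t(1-u)/(1-t)$. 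Uniqueness in Proposition/Definition $\ref{df:Halls}$ then forces the identity in $\Lambda_{\Q(t)}[[u]]$. The main (minor) obstacle is keeping track of the sign and exponent conventions in the $p_m$-exponential, so that the extra factor produced by $\pi^\infty_N$ is exactly $1 - t^{1-N}u$ and not its reciprocal.
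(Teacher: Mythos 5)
Your proposal is correct and follows essentially the same route as the paper: the finite-$N$ identity is obtained by combining $(\ref{eqn:complete})$ with the Hall--Littlewood generating series $(\ref{eqn:generatingHall})$, and the symmetric-function version is obtained by checking that $\pi^{\infty}_N$ carries the proposed right-hand side to the finite-$N$ formula (via $(1-t^m)/(1-t^{-m})=-t^m$ producing the extra factor $1-t^{1-N}u$), then invoking the uniqueness in Proposition/Definition $\ref{df:Halls}$. Your version simply spells out the algebra that the paper leaves implicit, and both computations check out.
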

\begin{proof}
The generating function $(\ref{eqn:generatingH})$ is a consequence of $(\ref{eqn:generatingHall})$ and $(\ref{eqn:complete})$.

To obtain $(\ref{eqn:generatingfun})$ from $(\ref{eqn:generatingH})$, informally, treat $t$ as a real number with $t > 1$ and send $N$ to infinity. This agrees with the remarks made before Proposition/Definition $\ref{df:Halls}$, regarding the construction of the maps $\pi_n^{\infty}$ as limits of the maps $\pi^{N}_{N-1}\circ \dots \circ \pi^{n+1}_n$.
Formally, one can write the right side of $(\ref{eqn:generatingfun})$ in terms of the power sums $\{p_n : n\geq 1\}$ and the right side of $(\ref{eqn:generatingH})$ in terms of the set of generators $\{p_n(x_1, \ldots, x_N) : 1\leq n\leq N\}$.
One then checks that after replacing each $p_n$ by $\pi^{\infty}_N p_n = p_n(x_1, \ldots, x_N) + t^{-nN}/(1 - t^{-n})$, the right side of $(\ref{eqn:generatingfun})$ becomes the right side of $(\ref{eqn:generatingH})$, cf. the argument in $(\ref{gather:eqns})$.

One can also derive $(\ref{eqn:generatingfun})$ from \cite[(2.10)]{Ok3} and Theorem $\ref{cor:equalityHLs}$.
\end{proof}

\subsection{Vertex operator representation for the first operator}

As an application of the material from the previous subsection, we give a formula for the operator $A^1$, other than that in equation $(\ref{eqn:kexplicit})$.

From $(\ref{eqn:generatingfun})$, we obtain
\begin{equation}\label{eqn:generating1}
\begin{gathered}
\sum_{n=1}^{\infty}{\HH_n(\cdot; t^{-1})u^n} = \frac{t}{t - 1}
\exp\left(\sum_{i=1}^{\infty}\{ -\ln{(1 - x_iu)} + \ln{(1 - x_iu/t)} \}\right) + \frac{t - u}{1 - t}\\
= \frac{t}{t - 1}
\exp\left(\sum_{n=1}^{\infty}  \frac{u^n(1 - t^{-n})}{n}p_n  \right) + \frac{t - u}{1 - t}.
\end{gathered}
\end{equation}

Next, to the equation $(\ref{eqn:generatingHall})$, set $t^{-1}x_i$ instead of $x_i$, then $t^{-1}$ instead of $t$, and finally send $N$ to infinity. By recalling $\QQ_{(n)}(\cdot; t^{-1}) = (1 - t^{-1})\PP_{(n)}(\cdot; t^{-1})$, we have
\begin{gather*}
\sum_{n=1}^{\infty}{t^n \QQ_{(n)}(\cdot; t^{-1}) u^n} = \exp\left( \sum_{i=1}^{\infty}\{ -\ln{(1 - x_itu)} + \ln{(1 - x_iu)} \} \right) - 1\\
= \exp\left( \sum_{n=1}^{\infty}{ \frac{u^n(t^n - 1)}{n}p_n } \right) - 1.
\end{gather*}
From the definition $(\ref{eqn:macdonaldinner})$ of the Macdonald inner product, we have $p_n^* = n\frac{1 - q^n}{1 - t^n}\frac{\partial}{\partial p_n}$. After taking the adjoint of the last equation and using $u^{-1}$ instead of $u$, we then obtain
\begin{equation}\label{eqn:generating2}
\sum_{n=1}^{\infty}{t^n (\QQ_{(n)}(\cdot; t^{-1}))^* u^{-n}} = \exp\left( \sum_{n=1}^{\infty}{ u^{-n}(q^n - 1)\frac{\partial}{\partial p_n} } \right) - 1.
\end{equation}

From Theorem $\ref{thm:main}$ for $k = 1$, the operator $A^1$ is the constant coefficient of the product of generating functions on the left sides of $(\ref{eqn:generating1})$ and $(\ref{eqn:generating2})$. We deduce the following vertex operator representation for $A^1$. It would be interesting to obtain similar formulas for all operators $A^k$.

\begin{prop}
\begin{equation*}
A^1 = \frac{t}{t - 1}\oint_{|z| \ll 1}
\frac{dz}{2\pi\sqrt{-1} z}
\exp\left(\sum_{n=1}^{\infty}  \frac{z^n(1 - t^{-n})}{n}p_n  \right)
\exp\left( \sum_{n=1}^{\infty}{ z^{-n}(q^n - 1)\frac{\partial}{\partial p_n} } \right)
- \frac{t}{t - 1} + \frac{1 - q}{1 - t}\frac{\partial}{\partial p_1}
\end{equation*}
\end{prop}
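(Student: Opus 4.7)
The plan is to read off $A^1$ directly from Theorem~\ref{thm:main} with $k=1$, and then identify the resulting expression as the constant Laurent coefficient in $u$ of the product of the two generating series (\ref{eqn:generating1}) and (\ref{eqn:generating2}). Since the partitions of length one are precisely $\lambda=(n)$ for $n\geq 1$, and since by definition $\F_{(n)}(\cdot; t^{-1})=\HH_n(\cdot; t^{-1})$, the main theorem gives
\[
A^1 \;=\; \sum_{n\geq 1} t^n\,\HH_n(\cdot;t^{-1})\,\bigl(\QQ_{(n)}(\cdot;t^{-1})\bigr)^{*},
\]
which is the coefficient of $u^{0}$ in
\[
\Bigl(\sum_{n\geq 1}\HH_n(\cdot;t^{-1})\,u^{n}\Bigr)\cdot\Bigl(\sum_{m\geq 1} t^{m}\bigl(\QQ_{(m)}(\cdot;t^{-1})\bigr)^{*}u^{-m}\Bigr).
\]

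Next I would substitute the closed forms. Set $E_1(u):=\exp\!\bigl(\sum_{n\geq 1}\tfrac{u^{n}(1-t^{-n})}{n}p_n\bigr)$ and $E_2(u):=\exp\!\bigl(\sum_{n\geq 1}u^{-n}(q^{n}-1)\tfrac{\partial}{\partial p_n}\bigr)$. Then (\ref{eqn:generating1}) and (\ref{eqn:generating2}) read $\sum_{n}\HH_n u^{n}=\tfrac{t}{t-1}E_1(u)+\tfrac{t-u}{1-t}$ and $\sum_{m}t^{m}(\QQ_{(m)})^{*}u^{-m}=E_2(u)-1$. Multiplying and extracting $[u^{0}]$ gives a sum of three pieces. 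The cross term $\tfrac{t}{t-1}E_1(u)E_2(u)$ contributes $\tfrac{t}{t-1}\oint_{|z|\ll 1}\tfrac{dz}{2\pi\sqrt{-1}\,z}E_1(z)E_2(z)$, since the contour integral selects the constant Laurent coefficient of a formal series. The term $-\tfrac{t}{t-1}E_1(u)$ contributes $-\tfrac{t}{t-1}$ because $E_1(u)\in 1+u\,\Q(t)[p_1,p_2,\ldots][[u]]$. Finally, $\tfrac{t-u}{1-t}(E_2(u)-1)$ involves only strictly negative powers of $u$ multiplied by $\tfrac{t}{1-t}$, contributing nothing to $[u^{0}]$, plus $-\tfrac{u}{1-t}(E_2(u)-1)$ whose $[u^{0}]$ equals $-\tfrac{1}{1-t}$ times $[u^{-1}](E_2(u)-1)=(q-1)\tfrac{\partial}{\partial p_1}$, giving $\tfrac{1-q}{1-t}\tfrac{\partial}{\partial p_1}$. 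Summing the three contributions yields exactly the claimed formula.

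The computation is essentially bookkeeping and there is no serious obstacle; the only point that needs a brief justification is that the formal operator-valued Laurent series make sense when applied to a fixed $f\in\Lambda_{\Q(q,t)}$. This is immediate from the fact that $(\QQ_{(m)}(\cdot;t^{-1}))^{*}$ is a constant-coefficient differential operator in the variables $p_1,p_2,\ldots$ of homogeneous degree $-m$ (with respect to $\deg p_n=n$), so it annihilates any $f$ of degree less than $m$. Hence only finitely many $m$ contribute to $A^{1}f$, and the interchange of coefficient extraction with the action on $f$ is valid.
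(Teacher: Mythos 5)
Your proposal is correct and follows exactly the paper's route: the paper likewise reads $A^1$ off from Theorem~\ref{thm:main} with $k=1$ as the constant Laurent coefficient in $u$ of the product of the generating series $(\ref{eqn:generating1})$ and $(\ref{eqn:generating2})$, and the three contributions you extract (the contour integral from the cross term, the $-\tfrac{t}{t-1}$ from the constant term of $E_1$, and the $\tfrac{1-q}{1-t}\tfrac{\partial}{\partial p_1}$ from the $[u^{-1}]$ coefficient of $E_2$) are precisely the bookkeeping the paper leaves implicit. Your closing remark on why the operator-valued Laurent series act sensibly on a fixed $f$ is a small but welcome addition the paper omits.
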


\subsection{Another relation to Hall-Littlewood polynomials}

\begin{prop}\label{prop:inhHLs}
Let $N\in\N$; for any $1\leq j\leq N$, let $T_{x_j}$ be the operator $(T_{x_j}f)(x_1, \ldots, x_N) := f(x_1, \ldots, x_{i-1}, 0, x_{i+1}, \ldots, x_N)$. Then
\begin{equation}\label{eqn:inhHLs}
\begin{gathered}
\F_{\lambda}(x_1, \ldots, x_N; t) = \frac{(u; t^{-1})_{\ell(\lambda)}}{(u; t^{-1})_N\cdot V(x_1, \ldots, x_N)}\\
\times\det_{1\leq i, j\leq N}\left[ x_j^{N - i - 1}\left\{ (1 - x_j u) t^{1-i}T_{x_j} + (x_j - t^{1-N}) \right\} \right] \PP_{\lambda}(x_1, \ldots, x_N; t)
\end{gathered}
\end{equation}
\end{prop}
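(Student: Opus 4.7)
My plan is to apply the operator on the right side of (\ref{eqn:inhHLs}) directly to $\PP_{\lambda}(x_1,\ldots,x_N;t)$ and reduce the result to identity (\ref{eqn:toprove}) already proved in the course of Proposition \ref{prop:refined}. Let $M$ denote the matrix of operators whose determinant appears in (\ref{eqn:inhHLs}). First I would observe that the entries in column $j$ of $M$ involve only operators acting on the single variable $x_j$, so entries from distinct columns commute, and $\det M$ is unambiguously defined as a linear operator on $\Lambda_{N,\Q(t)}$. By column multilinearity,
\begin{equation*}
\det M = \sum_{I\subseteq\{1,\dots,N\}}\det M_I,
\end{equation*}
where $M_I$ is the matrix in which column $j$ uses the summand $(1-x_ju)t^{1-i}T_{x_j}$ if $j\in I$ and $(x_j-t^{1-N})$ if $j\notin I$ (both still multiplied by the common factor $x_j^{N-i-1}$).

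Next I would compute $(\det M_I)\PP_\lambda(x;t)$ for each $I$. Each $T_{x_j}$ commutes with all polynomial coefficients in columns other than $j$, so expanding the determinant and bringing every $T_{x_j}$ to the right yields
\begin{equation*}
(\det M_I)\PP_\lambda = \prod_{j\in I}(1-x_ju)\prod_{j\notin I}(x_j-t^{1-N})\cdot\det[N_I]\cdot\PP_\lambda(\{x_j:j\notin I\};t),
\end{equation*}
where $N_I$ has $(i,j)$-entry $x_j^{N-i-1}t^{1-i}$ if $j\in I$ and $x_j^{N-i-1}$ if $j\notin I$. Here I have used the stability of HL polynomials to replace $\PP_\lambda(x;t)\big|_{x_i=0,\,i\in I}$ by the HL polynomial in the remaining variables, which in particular vanishes unless $|I|\le N-\ell(\lambda)$.

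The determinant $\det N_I$ is then evaluated by pulling out $t^{1-N}x_j^{-1}$ from each column $j\in I$ and $x_j^{-1}$ from each column $j\notin I$, leaving a Vandermonde determinant $\det[z_j^{N-i}]$ with $z_j=tx_j$ for $j\in I$ and $z_j=x_j$ otherwise. By exactly the same computation as in (\ref{eqn:detAI}), this equals $V(x_1,\ldots,x_N)\cdot t^{\binom{|I|}{2}}\prod_{a\in I,\,b\notin I}(tx_a-x_b)/(x_a-x_b)$. Combining all factors, collecting the prefactors as $\prod_{j\in I}(1-x_ju)t^{1-N}x_j^{-1}=\prod_{j\in I}t^{1-N}(x_j^{-1}-u)$ and $\prod_{j\notin I}(x_j-t^{1-N})x_j^{-1}=\prod_{j\notin I}(1-t^{1-N}x_j^{-1})$, and making the change of summation variable $J=I^c$, the expression $\frac{1}{V(x)}(\det M)\PP_\lambda$ is transformed into exactly the left side of (\ref{eqn:toprove}); the exponent of $t$ matches after the routine check $\binom{N-|J|}{2}+(1-N)(N-|J|)=-(N+|J|-1)(N-|J|)/2$, and the cross product $\prod_{a\notin J,\,b\in J}(tx_a-x_b)/(x_a-x_b)$ equals $\prod_{i\in J,\,j\notin J}(x_i-tx_j)/(x_i-x_j)$ by swapping signs in numerator and denominator. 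Invoking (\ref{eqn:toprove}) from Section \ref{sec:mainproof} gives
\begin{equation*}
\frac{1}{V(x_1,\ldots,x_N)}(\det M)\PP_\lambda(x_1,\ldots,x_N;t) = \frac{(u;t^{-1})_N}{(u;t^{-1})_{\ell(\lambda)}}\F_\lambda(x_1,\ldots,x_N;t),
\end{equation*}
from which (\ref{eqn:inhHLs}) follows after rearrangement.

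The main technical subtlety will be tracking the non-commutation of $T_{x_j}$ with $x_j$ in the determinant expansion and verifying that polynomial coefficients from different columns can be freely pulled past the shift operators (which they can, since each $T_{x_j}$ only affects its own variable, and the polynomial prefactors in column $j$ already sit to the left of $T_{x_j}$). Apart from this careful operator accounting, the argument is essentially bookkeeping plus a direct reduction to the already-available combinatorial identity (\ref{eqn:toprove}), so no new combinatorial input is required.
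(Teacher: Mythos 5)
Your proof is correct, and it takes a genuinely different route from the paper's. The paper derives Proposition \ref{prop:inhHLs} indirectly: it rescales the variables in the refined Cauchy identity of Proposition \ref{prop:refined}, separately applies the operator $D_N(u;q,t)$ to the Macdonald Cauchy kernel, conjugates, sets $q=0$, and then extracts the coefficient of $\QQ_{\lambda}(y_1,y_2,\ldots;t)$ from the two resulting expansions to obtain first the intermediate identity $(\ref{eqn:FMpolys})$ and then $(\ref{eqn:inhHLs})$ by a further change of variables. You instead act with the determinantal operator directly on $\PP_{\lambda}(x_1,\ldots,x_N;t)$, expand over subsets $I$ by column multilinearity, use the stability $(\ref{eqn:Hallstability})$ to evaluate $\prod_{j\in I}T_{x_j}\PP_{\lambda}$, and land exactly on the left side of $(\ref{eqn:toprove})$, which the paper has already established inside the proof of Proposition \ref{prop:refined}. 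Both arguments ultimately rest on the same combinatorial content (identity $(\ref{eqn:toprove})$ is precisely the coefficient of $\QQ_{\lambda}(y;t)$ in the refined Cauchy identity), but your derivation is more elementary and self-contained: it never introduces the auxiliary $y$-variables, the parameter $q$, or the Macdonald Cauchy kernel, whereas the paper's route has the advantage of simultaneously producing the auxiliary formula $(\ref{eqn:FMpolys})$ relating $\F_{\lambda}$ to $D_N(u;q,t)M_{\lambda|N}|_{q=0}$. Your bookkeeping is right: the operator entries in distinct columns act on disjoint variables and hence commute, the exponent check $\binom{N-|J|}{2}+(1-N)(N-|J|)=-(N+|J|-1)(N-|J|)/2$ is correct, the cross-ratio sign flip is correct, and the terms with $|I^{c}|<\ell(\lambda)$ vanish automatically by the convention $\PP_{\lambda}(x_1,\ldots,x_M;t)=0$ for $\ell(\lambda)>M$, which reproduces the restriction $|I|\geq k$ in $(\ref{eqn:toprove})$.
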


Observe that the left side of $(\ref{eqn:inhHLs})$  does not depend on the variable $u$, therefore neither does the right side of that equality. In particular, by setting $u = 0$ we have
\begin{equation*}
\begin{gathered}
\F_{\lambda}(x_1, \ldots, x_N; t) = \frac{1}{V(x_1, \ldots, x_N)}\det_{1\leq i, j\leq N}\left[ x_j^{N - i - 1}\left\{ t^{1-i}T_{x_j} + x_j - t^{1-N} \right\} \right] \PP_{\lambda}(x_1, \ldots, x_N; t),
\end{gathered}
\end{equation*}
which together with $(\ref{eqn:Hallstability})$ gives an explicit formula for $\F_{\lambda}(x_1, \ldots, x_N; t)$ in terms of the set of HL polynomials $\{\PP_{\lambda}(\{x_i : i\in I\}; t)\}_{I \subseteq \{1, \ldots, N\}, |I| \geq \ell(\lambda)}$.

\begin{proof}[Proof of Proposition $\ref{prop:inhHLs}$]
Begin with Proposition $\ref{prop:refined}$, after setting $x_i \mapsto x_i/(ut^{N-1})$, $y_i \mapsto y_iut^{N-1}$, and using the homogeneity of the dual HL polynomials:
\begin{equation}\label{eqn:tocompare}
\begin{gathered}
1 + \sum_{k=1}^N{\frac{1}{(u; t^{-1})_k}\sum_{\ell(\lambda) = k}{(ut^{N-1})^{\lambda_1 + \lambda_2 + \dots}\F_{\lambda}(\frac{x_1}{ut^{N-1}}, \dots, \frac{x_N}{ut^{N-1}}; t)\QQ_{\lambda}(y_1, y_2, \dots; t)}} =\\
\frac{u^{-\frac{N(N-1)}{2}}t^{-\frac{N(N-1)^2}{2}}}{V(x_1, \dots, x_N)(u; t^{-1})_N}\det_{1\leq i, j\leq N}
\left[ x_j^{N - i - 1}\left\{ (x_j t^{1-N} - 1) t^{N-i} (-u) + (x_j - u)\prod_{l = 1}^{\infty}{\frac{1 - tx_jy_l}{1 - x_jy_l}} \right\} \right].
\end{gathered}
\end{equation}

Next consider the operator in $(\ref{eqn:Dop})$ with $u = -z$, namely
\begin{equation}\label{eqn:modifiedop}
D_N(u; q, t) := \frac{1}{V(x_1, \ldots, x_N)}\det_{1\leq i, j\leq N}\left[ x_j^{N - i - 1}\left\{ (x_j t^{1 - N} - 1)t^{N - i} (-u) T_{q, x_j} + (x_j - u) \right\} \right].
\end{equation}
We apply this operator to the \textit{Cauchy identity for Macdonald polynomials}, \cite[Ch. VI, (4.13)]{M}
\begin{equation}\label{eqn:cauchyMacs}
\sum_{\ell(\lambda) \leq N}
M_{\lambda}(x_1, \ldots, x_N; q, t)\cdot MQ_{\lambda}(y_1, y_2, \ldots; q, t) =
\prod_{j=1}^N\prod_{l = 1}^{\infty}{\frac{(tx_jy_l; q)_{\infty}}{(x_jy_l; q)_{\infty}}},
\end{equation}
where $MQ_{\lambda}(\cdot; q, t)$ stands for the dual Macdonald function, which differs from $M_{\lambda}(\cdot; q, t)$ by a constant not depending on $\lambda$ and, more importantly for us, it specializes to $\QQ_{\lambda}(\cdot; t)$ when we set $q = 0$. With a calculation that is similar to that of $(\ref{eqn:leftside})$, we can write down an expression for the conjugation of the operator $(\ref{eqn:modifiedop})$ by the right side of $(\ref{eqn:cauchyMacs})$.
Then we deduce that the result of acting with $(\ref{eqn:modifiedop})$ on $(\ref{eqn:cauchyMacs})$, and then dividing by $(u; t^{-1})_N$, is
\begin{equation}\label{eqn:beforeqzero}
\begin{gathered}
\sum_{\ell(\lambda) \leq N}
\frac{D_N(u; q, t) M_{\lambda|N}(x_1, \ldots, x_N; q, t)}{(u; t^{-1})_N} \cdot MQ_{\lambda}(y_1, y_2, \ldots; q, t) =
\frac{\prod_{j=1}^N\prod_{l = 1}^{\infty}{\frac{(tx_jy_l; q)_{\infty}}{(x_jy_l; q)_{\infty}}}}{V(x_1, \ldots, x_N)\cdot (u; t^{-1})_N}\\
\times \det_{1\leq i, j\leq N}\left[ x_j^{N-i-1} \left\{ (x_j t^{1-N} - 1)t^{N-i} (-u) \prod_{l = 1}^{\infty}{\frac{1 - x_j y_l}{1 - tx_j y_l}} + (x_j - u) \right\} \right].
\end{gathered}
\end{equation}
We now want to set $q = 0$ in equation $(\ref{eqn:beforeqzero})$. As for the right side, only the product in the first line of that display is affected, and it becomes $\prod_{j=1}^N\prod_{l=1}^{\infty}{\frac{1 - tx_jy_l}{1 - x_jy_l}}$.
Then we can make the factor corresponding to $j$ multiply each term in the $j$-th column of the matrix in the second line of $(\ref{eqn:beforeqzero})$. Thus the result of setting $q = 0$ in $(\ref{eqn:beforeqzero})$ is
\begin{equation}\label{eqn:afterqzero}
\begin{gathered}
\sum_{\ell(\lambda) \leq N}
\frac{\left.D_N(u; q, t) M_{\lambda|N}(x_1, \ldots, x_N; q, t)\right|_{q=0}}{(u; t^{-1})_N} \cdot \QQ_{\lambda}(y_1, y_2, \ldots; t) =
\frac{1}{V(x_1, \ldots, x_N)\cdot (u; t^{-1})_N}\\
\times \det_{1\leq i, j\leq N}\left[ x_j^{N-i-1} \left\{ (x_j t^{1-N} - 1)t^{N-i} (-u) + (x_j - u)\prod_{l=1}^{\infty}{\frac{1 - tx_jy_l}{1 - x_jy_l}} \right\} \right].
\end{gathered}
\end{equation}

By comparing $(\ref{eqn:tocompare})$ and $(\ref{eqn:afterqzero})$, we deduce
\begin{equation}\label{eqn:FMpolys}
(ut^{N-1})^{\lambda_1 + \lambda_2 + \dots}\F_{\lambda}(\frac{x_1}{ut^{N-1}}, \dots, \frac{x_N}{ut^{N-1}}; t) = 
\frac{(u; t^{-1})_{\ell(\lambda)}}{(u; t^{-1})_N}\left.D_N(u; q, t) M_{\lambda|N}(x_1, \ldots, x_N; q, t)\right|_{q=0}.
\end{equation}

To obtain $(\ref{eqn:inhHLs})$, let us replace each $x_i$ by $x_iut^{N-1}$ in $(\ref{eqn:FMpolys})$.
From the homogeneity of Macdonald polynomials, we have $M_{\lambda}(x_1ut^{N-1}, \ldots, x_Nut^{N-1}; q, t) = (ut^{N-1})^{\lambda_1 + \lambda_2 + \dots}M_{\lambda}(x_1, \ldots, x_N; q, t)$, so that the factors $(ut^{N-1})^{\lambda_1 + \lambda_2 + \dots}$ on both sides of $(\ref{eqn:FMpolys})$ cancel out.
After the change, the operator $D_N(u; q, t)$ becomes almost the operator in the right side of $(\ref{eqn:inhHLs})$, except with $T_{q, x_j}$ instead of $T_{x_j}$, but one still has to set $q = 0$. Since clearly $\left.(T_{q, x_j}f)(x_1, \ldots, x_N)\right|_{q = 0} = (T_{x_j}f)(x_1, \ldots, x_N)$ for any function $f(x_1, \ldots, x_N)$, we obtain the desired result.

We lastly remark that the formula in the Proposition generalizes the special cases of Remark $\ref{rem:specialcases}$; they also serve as a check to our formula in the cases $\lambda = \emptyset$ and $\ell(\lambda) = N$.
\end{proof}

\bibliographystyle{amsplain}
\bibliography{PD}

\end{document}